\definecolor{fxnote}{rgb}{0.8000,0.0000,0.0000}
\definecolor{agnote}{rgb}{0.0000,0.8000,0.0000}
\colorlet{fxnotebg}{yellow}
\renewcommand*\FXTargetLayoutColor[2]{%
  \@fxuseface {target}%
  \hl{#2}}%
\newtheorem{theorem}{Theorem}
\newtheorem{remark}{Remark}
\newtheorem{lemma}{Lemma}
\newtheorem{corollary}{Corollary}
\newcommand{\1}{\mathbb{1}}
\newcommand{\ai}{a_\infty}
\newcommand{\aggiungi}[1]{\textcolor{black}{#1}} 
\begin{document}

\title{\LARGE \bf Asymptotic analysis of threshold models for social networks}

\author{
Andrea Garulli, Antonio Giannitrapani 
\thanks{The authors are with the Dipartimento di Ingegneria dell'Informazione e Scienze Matematiche, Universit\`{a} di Siena, Via Roma 56, 53100 Siena, Italy. \tt \{garulli,giannitrapani\}@dii.unisi.it}
}

\date{}

\maketitle

\begin{abstract}
A class of dynamic threshold models is proposed, for describing the upset of collective actions in social networks.
The agents of the network have to decide whether to undertake a certain action or not. They make their decision by comparing the activity level of their neighbors with a time-varying threshold, evolving according to a time-invariant opinion dynamic model. 
Key features of the model are a parameter representing the degree of self-confidence of the agents, and the mechanism adopted by the agents to evaluate the activity level of their neighbors. 
The case in which a radical agent, initially eager to undertake the action, interacts with a group of ordinary agents, is considered.
The main contribution of the paper is the complete analytic characterization of the asymptotic behaviors of the network, for three different graph topologies. The asymptotic activity patterns are determined as a function of the self-confidence parameter and of the initial threshold of the ordinary agents.
\end{abstract}

\section{Introduction} 
\label{sec:introduction}

A key problem in social sciences is that of understanding the complex relationships between the attitude of individuals and their collective behavior. 
The main challenge in this context is posed by modeling and analyzing the way in which the evolution of the individuals' opinions affects their will of undertaking or not a certain action.
In fact, it is widely recognized that this underlying mechanism is at the basis of crucial phenomena, such as the spread of behaviors and the arising of collective actions within social networks \cite{S09,C10,MS10}.  

Opinion dynamics is a well established research topic in the social science research field, which is receiving increasing attention from the control community (see, e.g., the recent survey \cite{Friedkin15}). 
Starting from the celebrated DeGroot model \cite{DeGroot74} and the numerous variations on it, a large body of literature has been developed in which the emphasis has been initially placed on the widely investigated consensus problem, which has an impact also on several other key problems in the control field (see \cite{Moreau05,RB05,OFM07} and references therein).
In recent years, researchers have concentrated their attention on models which present a richer variety of possible dynamic patterns, in order to describe the multiplicity of phenomena observed in social networks. Notable research lines along this path are the works on the Hegselmann-Krause model \cite{BHT09,YDH14,EB15}, the studies on the spread of misinformation \cite{AOP10,FZLJ12}, the analysis of networks with stubborn agents \cite{ACFO13,GS14,RFTI15} and the introduction of models including antagonistic interactions \cite{AL15}.

Using opinion dynamics models to predict behaviors of groups of individuals is a key problem in social psychology \cite{Friedkin10}. The problem is that to link the evolution of the individuals' opinion to their inclination about undertaking a certain action. In this respect, the simplest approach assumes the existence of a threshold, so that the individual becomes ``active'' whenever its opinion exceeds the threshold. This leads to the formulation of a so-called \emph{threshold model}.

Threshold models have been first introduced in \cite{G78}; since then, they have been employed to explain the collective behavior of a community of individuals in many different contexts: typical examples are the spread of technological innovations among large portions of the population; the attitude of masses towards new trends in popular culture; political phenomena such as riots, strikes, and so on. 
In this context, threshold models are well suited to predict the occurrence of cascade effects, i.e. the possibility that a behavior adopted by a small number of influential agents will propagate to a large part of the network \cite{ADO12}. In \cite{AOY11}, threshold models are adopted to analyze how innovations spread into a network starting from a set of promoters. Such a model has been later generalized in \cite{RG13} to account for the possibility for a member of the network to abandon a previously adopted innovation. Moreover the effect of the presence of a group of agents which maintain the innovation for a finite time despite the behavior of their neighbors is analyzed. 

In \cite{H14}, a threshold model has been adopted to describe the mechanisms underlying the formation of a collective action taking place during political unrest or social revolutions. In particular, the aim is to determine whether a radical agent is able to eventually persuade all the individuals of a network to engage in the demonstration: this occurs when the activity level of the individual's neighbors exceeds a certain threshold, which in turn evolves dynamically according to a classical DeGroot opinion model. The author uses this approach to analyze the effect of media interruption during the 2011 Egyptian revolution. Properties of this model have been studied in \cite{LHTM12}.

In this paper, starting from the model proposed in \cite{H14,LHTM12}, a more general class of threshold models is proposed and analyzed.
The main novelty is the introduction of a parameter which represents the relative confidence level that an agent has on her own opinion, with respect to that of her neighbors. This provides a new degree of freedom, which allows one to characterize the behavior of conservative networks, with respect to groups of individuals more inclined to change their attitude.
Another feature of the proposed model is the use of two different mechanisms for deciding whether an agent becomes active or not. Similarly to what is assumed in \cite{ADO12,RG13,LHTM12}, a non progressive model is adopted, meaning that each agent can change its actions multiple times, by comparing her current opinion with an indicator of the average activity level of her neighbors. In the proposed model, such an indicator can be either the fraction of active neighbors (as in \cite{LHTM12}), or a weighted average of the number of active neighbors which takes into account the self-confidence of each agent. The considered model can be also seen as an extension of the framework adopted in global games \cite{MS03,DTTZ13,TS14}, in which the agents make a decision upon undertaking an action according to a similar threshold-based mechanism, but once they have made their decision they do not change anymore their activity status.

The main contribution of the paper is the analysis of the asymptotic behavior of the network for different graph topologies. In particular, the complete analytic characterization of the asymptotic activity pattern of the network is determined for three topologies: the complete graph, the star graph and the ring graph. A wide variety of collective limiting behaviors is observed and its relationship with the self-confidence parameter and the chosen decision mechanism is highlighted. A preliminary version of this work has been presented in \cite{GGV15}. This paper extends previous results to the case of star and ring graphs.

The paper is organized as follows. In Section~\ref{sec:formulation}, the considered class of threshold models is introduced, together with the two different decision schemes. The analytic results on the network asymptotic behaviors are presented in Sections~\ref{sec:complete}, \ref{sec:star} and \ref{sec:ring}, for the cases of complete, star and ring graph topologies, respectively. \aggiungi{Results from numerical simulations carried out for a real ego network are reported in Section~\ref{sec:exp}.} Concluding remarks and future developments are provided in Section~\ref{sec:conclusions}.

\section{Problem formulation}
\label{sec:formulation}
A network of $n$ agents is described by an undirected graph $\mathcal{G} = (\mathcal{V},\mathcal{E})$, where $\mathcal{V}$ denotes the vertex set and $\mathcal{E} \subseteq \mathcal{V} \times \mathcal{V}$ is the edge set. Two agents $i$ and $j$ are \textit{neighbors} if $(i,j) \in \mathcal{E}$. Let $\mathcal{N}_i$ denote the set of neighbors of agent $i$ and $n_i$ be its cardinality. 
In this work, an agent is always considered a neighbor of itself, i.e. $(i,i)\in \mathcal{E}$ for all $i$, and the network topology is assumed to be time-invariant.

In order to model the agents' behavior, two variables are associated to agent $i$: the \emph{threshold} $\theta_i(t)\in [0,\ 1]$ and the \emph{action} $a_i(t)\in\{0,\ 1\}$. The variable $a_i$ discriminates whether the $i$th agent is undertaking a certain action at time $t$ ($a_i(t)=1$) or not ($a_i(t)=0$). The threshold $\theta_i(t)$ is used to model the attitude of the $i$th agent towards the possibility of becoming active. Depending on the context, it may represent the agent's opinion on a certain topic, or its intention to participate in some collective movement.

The agent behavior is described by the time evolution of the threshold and the action variables. At each time step, an agent updates its threshold to a weighted average of its neighbors' threshold
\begin{equation}
  \label{eq:thmodel}
  \theta_i(t+1) = \sum_{j \in \mathcal{N}_i} f_{ij} \theta_j(t), \qquad i=1,\dots,n,
\end{equation}
where the weights are such that $0 < f_{ij} <1$ and $\sum_{h}f_{ih} = 1$, $\forall i,j$.
Notice that \eqref{eq:thmodel} is the classic De Groot model, which has been widely employed in the literature on consensus and opinion dynamics \cite{DeGroot74}.

Besides updating the threshold, each agent computes the activity level of its neighbors as
\begin{equation}
  \label{eq:pmodel}
  p_i(t+1) = \sum_{j \in \mathcal{N}_i} g_{ij} a_j(t), \qquad i=1,\dots,n,
\end{equation}
where $0 < g_{ij} <1$ and $\sum_{h}g_{ih} = 1$, $\forall i,j$. 
The action value of each agent is obtained by comparing the activity level $p_i(t)$ with the threshold $\theta_i(t)$, according to
\begin{equation}
  \label{eq:amodel}
  a_i(t) = \begin{cases}
    1 & \text{ if } p_i(t) \geq \theta_i(t) \\
    0 & \text{ else }
  \end{cases}, \quad i=1,\dots,n.
\end{equation}
By setting $f_{ij} = g_{ij} =0$ whenever $j \not\in \mathcal{N}_i$, equation \eqref{eq:thmodel} can be rewritten in matrix form as $\theta(t+1) = F \theta(t)$ and \eqref{eq:pmodel} becomes
\begin{equation}
  p(t+1)  = G a(t), \label{eq:modelm2} 
\end{equation}
where $\theta = [\theta_1,\dots,\theta_n]'$, $a = [a_i,\dots,a_n]'$, $p = [p_1,\dots,p_n]'$, and $F$ and $G$ are matrices whose $ij$-th entries are $f_{ij}$ and $g_{ij}$, respectively. By introducing the function 
$$
\phi(x) = \begin{cases}
  1 & \text{ if } x \geq 0, \\
  0 & \text{ else },
\end{cases}
$$
and exploiting~\eqref{eq:amodel},\eqref{eq:modelm2}, one gets
\begin{align}
  \theta(t+1) &= F \theta(t),  \label{eq:modelm1} \\
  a(t+1) &= \phi(G a(t) - F \theta(t)),   \label{eq:modelm3}
\end{align}
where the function $\phi(\cdot)$ is to be intended componentwise.

In this work, the entries of matrix $F$ are chosen as
\begin{equation}
  \label{eq:weights}
  f_{ij}  = \begin{cases}
    \frac {\beta} {\beta + n_i - 1} & \text{ if } i=j, \\
    \frac {1} {\beta + n_i - 1} & \text{ if } j\in\mathcal{N}_i,\ j \neq i,\\
    0 & \text{ else },
  \end{cases}
\end{equation}
where $\beta >0$ is the relative weight each agent assigns to its current threshold value compared to that of its neighbors. In other words, $\beta$ can be interpreted as the \emph{relative confidence} that each agent has on its own opinion, with respect to that of the other members of the network.
Two different ways of computing the neighbors' activity level are considered:
\begin{itemize}
\item[a)] \emph{Weighted Activity Level (WAL)}: in this setting $G=F$, i.e. the same relative weight is adopted both for computing the activity level of the neighbors and for weighting the neighbors' threshold;
\item[b)] \emph{Uniform Activity Level (UAL)}: in this scenario 
$$
g_{ij}  = \begin{cases}
    \frac {1} {n_i} & \text{ if } j\in\mathcal{N}_i,\\
    0 & \text{ else },
  \end{cases}
$$
so that $p_i(t+1)$ in \eqref{eq:pmodel} represents the fraction of neighbors of agent $i$ that are active at time $t$.
\end{itemize} 
Notice that in the UAL scenario each agent decides whether to become active or not by just ``counting'' the number of active neighbors. Conversely, in the WAL scenario  an agent weights in a different way the fact that its neighbors are active with respect to its own activity status. This is consistent with the idea that a self-confident individual, weighting its own opinion $\beta$ times that of its neighbors, will also consider in a different way its own behavior with respect to that of its neighbors. 
\begin{remark}
If $\beta=1$, one has $f_{ij}= g_{ij} = \frac 1 {n_i}$, $\forall i$, $\forall j\in \mathcal{N}_i$, and hence there is no difference between the two considered scenarios. The threshold update rule~\eqref{eq:modelm1} consists in computing the average of the neighbors' thresholds, and the activity level~\eqref{eq:modelm2} is equal to the fraction of active neighbors. Notice that in this special case, the setting considered in \cite{LHTM12} is recovered. 
\end{remark}

The objective of this this work is to study the asymptotic behavior of the nonlinear system \eqref{eq:modelm1},\eqref{eq:modelm3}, when the network initially contains one \emph{radical} agent (herafter labeled with index 1) and $n-1$ \emph{ordinary} agents. The radical agent is keen on undertaking an action and would like to convince the other agents to do the same: to this aim, at time $t=0$ its threshold is equal to zero and its activity variable is equal to 1. The ordinary agents are initially inactive and their threshold is equal to $\tau$, with $0 < \tau <1$. This corresponds to the initial condition
\begin{equation}
\label{eq:theta0}
\theta(0)=[0,\ \tau, \dots,\tau]',~~~ a(0)=[1,\ 0, \dots,0]'.
\end{equation}
Loosely speaking, $\tau$ represents the initial reluctance of the ordinary agents towards the action put forth by the radical one.
The problem addressed in the paper is to determine the asymptotic value of the action vector
$$
\ai=\lim_{t \rightarrow +\infty} a(t)
$$
under the initial condition \eqref{eq:theta0}, as a function of the initial reluctance of ordinary agents $\tau$ and of the relative confidence parameter $\beta$.
It is easy to check that $a_e=0$ and $a_e=\1$ (where $\1$ denotes a vector whose entries are all equal to 1) are always equilibria for system~\eqref{eq:modelm3}, irrespectively of the weighting matrices $F$ and $G$\footnote{With a slight abuse of notation, we assume that when $p=0$, then $a=0$, even if $\theta=0$. We do not modify the definition of $\phi$ in this sense, to keep notation simple.}.
However, several other equilibria may arise, which do depend on the topology of the interconnection network and on the values of $\beta$ and $\tau$. In the next sections, three different topologies will be analysed in detail.

\section{Complete graph}
\label{sec:complete}

Let the graph $\mathcal{G}$ be complete, i.e., $(i,j) \in \mathcal{E}$, for all $i,\ j$. The threshold evolution is characterized by the  following results.
\begin{lemma}
Consider the dynamic model~\eqref{eq:modelm1}, with $F$ chosen as in~\eqref{eq:weights}. If the interconnection graph is complete and $\theta(0) = [0,\ \tau, \dots,\tau]'$, $0<\tau < 1$, then
\begin{align}
\theta_1(t) &= \displaystyle{\frac{n-1}{n} \tau \left(1- \lambda_c^t \right),} \label{eq:thc1} \vspace*{1mm}\\
\theta_i(t) &= \displaystyle{\frac{n-1}{n} \tau \left(1+\frac{1}{n-1} \lambda_c^t \right),~i=2,\dots,n,} \label{eq:thc2}
\end{align}
where $\displaystyle{\lambda_c=\frac{\beta-1}{\beta+n-1}}$.
\label{lem:thc}
\vspace*{1mm}
\end{lemma}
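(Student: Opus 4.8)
The plan is to exploit the special structure of $F$ for the complete graph. Since every agent has exactly $n$ neighbours (itself included), \eqref{eq:weights} gives
\[
F \;=\; \frac{1}{\beta+n-1}\Big[(\beta-1)I + \1\,\1'\Big],
\]
where $I$ is the identity and $\1\,\1'$ the all-ones matrix. The matrix $\1\,\1'$ has eigenvalue $n$ on $\mathrm{span}\{\1\}$ and eigenvalue $0$ on the orthogonal complement $\1^\perp$; hence $F$ has the simple eigenvalue $1$ with eigenvector $\1$ (consistently with $F\1=\1$, since the rows of $F$ sum to $1$), and the eigenvalue $\lambda_c=\frac{\beta-1}{\beta+n-1}$ with multiplicity $n-1$ on $\1^\perp$. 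Both subspaces are $F$-invariant.

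The second step is to decompose the initial condition along these two subspaces and propagate it. Writing $\theta(0)=\tau\1-\tau e_1$, with $e_1$ the first canonical basis vector, and splitting $e_1=\frac1n\1+\big(e_1-\frac1n\1\big)$ (the second summand lies in $\1^\perp$), one obtains
\[
\theta(0) \;=\; \frac{n-1}{n}\,\tau\,\1 \;-\; \tau\Big(e_1-\tfrac1n\1\Big).
\]
The first term lies in the eigenspace of eigenvalue $1$ and the second in that of $\lambda_c$, so applying $F^t$ is immediate:
\[
\theta(t) \;=\; F^t\theta(0) \;=\; \frac{n-1}{n}\,\tau\,\1 \;-\; \tau\,\lambda_c^{\,t}\Big(e_1-\tfrac1n\1\Big).
\]
Reading off the first component yields $\theta_1(t)=\frac{n-1}{n}\tau-\tau\lambda_c^{\,t}\,\frac{n-1}{n}$, i.e. \eqref{eq:thc1}; reading off any component $i\ge2$ yields $\theta_i(t)=\frac{n-1}{n}\tau+\tau\lambda_c^{\,t}\,\frac1n=\frac{n-1}{n}\tau\big(1+\frac{1}{n-1}\lambda_c^{\,t}\big)$, i.e. \eqref{eq:thc2}.

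There is no genuinely hard step: the argument is a one-shot spectral computation, and the only place calling for a little care is fixing the coefficient $\frac1n$ in the splitting of $e_1$ so that $e_1-\frac1n\1\perp\1$. Should one prefer to avoid diagonalization, the same closed forms \eqref{eq:thc1}--\eqref{eq:thc2} can be checked directly by induction on $t$: the base case is \eqref{eq:theta0}, and the inductive step reduces to substituting the expressions into $\theta(t+1)=F\theta(t)$ and verifying that the coefficient of $\lambda_c^{\,t}$ gets multiplied by $\frac{\beta-1}{\beta+n-1}=\lambda_c$. Alternatively, one can first invoke the permutation symmetry among the ordinary agents $2,\dots,n$ to collapse \eqref{eq:modelm1} to a $2\times2$ recursion in $(\theta_1,\theta_2)$, whose eigenvalues turn out to be $1$ and exactly $\lambda_c$, and then solve that scalar problem.
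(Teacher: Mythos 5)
Your proof is correct and follows essentially the same route as the paper: both split $F$ into the consensus projector $\1\1'/n$ and its complement, identify the eigenvalues $1$ and $\lambda_c$, and apply $F^t$ to $\theta(0)$ decomposed along these invariant subspaces. Your closed form $\theta(t)=\frac{n-1}{n}\tau\1-\tau\lambda_c^{\,t}\bigl(e_1-\tfrac1n\1\bigr)$ is exactly the paper's expression $\theta(t)=\frac{n-1}{n}\tau\1+\lambda_c^{\,t}\bigl(\theta(0)-\frac{n-1}{n}\tau\1\bigr)$.
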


\begin{proof}
When the graph is complete, from \eqref{eq:weights} one gets
$$
\begin{array}{rcl}
F &=& \displaystyle{ \frac {\beta} {\beta + n- 1} I_n +  \frac {1} {\beta + n - 1}  (\1\1'-I_n) } \vspace*{1mm}\\
  &=& \displaystyle{\frac {\beta-1} {\beta + n- 1} I_n +  \frac {n} {\beta + n - 1}  \frac{\1\1'}{n} }\vspace*{1mm}\\
  &=&  \displaystyle{\frac{\1\1'}{n} + \frac {\beta-1} {\beta + n- 1} \left(I_n -  \frac{\1\1'}{n} \right). }
\end{array}
$$
Then, it is easy to check that, for every $t \geq 0$, one has
$$
F^t =    \frac{\1\1'}{n} + \left(\frac {\beta-1} {\beta + n- 1}\right)^t \left(I_n -  \frac{\1\1'}{n} \right).
$$
For the initial condition $\theta(0) = [0,\ \tau, \dots,\tau]'$ one gets
$$
\begin{array}{rcl}
\theta(t) &=& F^t \theta(0) \\
&=& \displaystyle{ \frac{n-1}{n} \tau\, \1 + \left( \frac {\beta-1} {\beta + n- 1}\right)^t \left( \theta(0)-\frac{n-1}{n} \tau \1\right) }
\end{array}
$$
from which \eqref{eq:thc1}-\eqref{eq:thc2} immediately follow.
\end{proof}

\begin{corollary}
\label{cor:thc}
  Consider the dynamic model~\eqref{eq:modelm1}, with the weights chosen as in~\eqref{eq:weights}. If the interconnection graph is complete and $\theta(0) = [0,\ \tau, \dots,\tau]'$, $0<\tau < 1$, then
\begin{equation}
\label{eq:cor0}
\lim_{t \to \infty} \theta(t) = \frac {n-1} n \tau \1.
\end{equation}
Moreover, if $\beta>1$, then 
\begin{align}
  \theta_1(t+1) &> \theta_1(t), \label{eq:cor1} \\
  \theta_i(t+1) &< \theta_i(t), \quad i=2,\dots,n, \label{eq:cor2}
\end{align}
for all $t\geq0$.
\end{corollary}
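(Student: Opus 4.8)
The plan is to derive everything directly from the closed-form expressions for $\theta_i(t)$ established in Lemma~\ref{lem:thc}, so that the corollary becomes essentially an exercise in analyzing the sign and magnitude of the geometric factor $\lambda_c^t$ with $\lambda_c = \frac{\beta-1}{\beta+n-1}$. First I would observe that, since $\beta > 0$ and $n \geq 2$, one has $|\lambda_c| < 1$: indeed the denominator $\beta + n - 1$ strictly exceeds $|\beta - 1|$ for all admissible $\beta$ (when $\beta \geq 1$ this is $\beta + n - 1 > \beta - 1$, i.e. $n > 0$; when $0 < \beta < 1$ it is $\beta + n - 1 > 1 - \beta$, i.e. $2\beta + n > 2$, which holds since $n \geq 2$). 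Hence $\lambda_c^t \to 0$ as $t \to \infty$, and passing to the limit in \eqref{eq:thc1}--\eqref{eq:thc2} immediately yields \eqref{eq:cor0}: both $\theta_1(t)$ and $\theta_i(t)$ converge to $\frac{n-1}{n}\tau$, so $\theta(t) \to \frac{n-1}{n}\tau\,\1$.

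Next I would prove the monotonicity claims \eqref{eq:cor1}--\eqref{eq:cor2} under the assumption $\beta > 1$. In this regime $\lambda_c = \frac{\beta-1}{\beta+n-1} \in (0,1)$, so $\lambda_c^t$ is a strictly positive, strictly decreasing sequence in $t$; in particular $\lambda_c^{t+1} - \lambda_c^t = \lambda_c^t(\lambda_c - 1) < 0$ for all $t \geq 0$. For the radical agent, \eqref{eq:thc1} gives
\[
\theta_1(t+1) - \theta_1(t) = \frac{n-1}{n}\tau\bigl(\lambda_c^t - \lambda_c^{t+1}\bigr) > 0,
\]
since $\frac{n-1}{n}\tau > 0$ and $\lambda_c^t - \lambda_c^{t+1} > 0$; this is \eqref{eq:cor1}. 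For an ordinary agent $i \in \{2,\dots,n\}$, \eqref{eq:thc2} gives
\[
\theta_i(t+1) - \theta_i(t) = \frac{n-1}{n}\tau \cdot \frac{1}{n-1}\bigl(\lambda_c^{t+1} - \lambda_c^t\bigr) = \frac{\tau}{n}\bigl(\lambda_c^{t+1} - \lambda_c^t\bigr) < 0,
\]
which is \eqref{eq:cor2}.

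There is really no serious obstacle here: the content of the corollary is entirely contained in the two facts that $|\lambda_c| < 1$ always and that $\lambda_c > 0$ precisely when $\beta > 1$. The only point requiring a moment's care is the verification that $|\lambda_c| < 1$ in the sub-case $0 < \beta < 1$, where $\lambda_c$ is negative and one must check $\lambda_c > -1$; this uses $n \geq 2$ and would be worth stating explicitly so the limit \eqref{eq:cor0} is justified for the full range of $\beta$, not merely for $\beta \geq 1$. Everything else is a direct substitution into \eqref{eq:thc1}--\eqref{eq:thc2}.
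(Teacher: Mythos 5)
Your proof is correct and follows exactly the route the paper intends: the corollary is stated without proof as an immediate consequence of Lemma~\ref{lem:thc}, and your argument (checking $|\lambda_c|<1$ for all $\beta>0$, $n\geq 2$, hence $\lambda_c^t\to 0$ giving \eqref{eq:cor0}, and noting $\lambda_c\in(0,1)$ when $\beta>1$ so that $\lambda_c^t$ is strictly decreasing, giving \eqref{eq:cor1}--\eqref{eq:cor2}) is precisely that consequence, with the negative-$\lambda_c$ sub-case $0<\beta<1$ handled carefully.
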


\subsection{Weighted activity level}

Let us consider the WAL setting first, i.e.,
\begin{equation}
\label{eq:GeqFc}
G=F= \frac {\beta} {\beta + n- 1} I_n +  \frac {1} {\beta + n - 1}  (\1\1'-I_n).
\end{equation}
Define the functions of $\beta$:
\begin{eqnarray}
\gamma_1(\beta) &=&  \frac{n}{n-1} \, \frac{1}{\beta+n-1}, \label{eq:gamma1}\\
\gamma_2(\beta) &=&  \frac{n}{n-1} \, \frac{\beta}{\beta+n-1}, \label{eq:gamma2}\\
\gamma_3(\beta) &=& \frac{1}{\beta+n-2}.\label{eq:gamma3}
\end{eqnarray}
Such functions are shown in Figures \ref{fig:n5GeqF}-\ref{fig:n20GeqF} for $n=5$ and $n=20$, respectively. The following result holds.
\begin{theorem}
\label{th:FGa}
System  \eqref{eq:modelm1},\eqref{eq:modelm3}, with $G=F$ given by \eqref{eq:GeqFc} and initial condition \eqref{eq:theta0}, exhibits the following asymptotic behaviors.
\begin{itemize}
\item[i)] For $\beta > 1$,
\begin{itemize}
\item[-] if $\tau < \gamma_1(\beta)$, then $\ai = \1$; \vspace*{2mm}
\item[-] if $\tau > \gamma_2(\beta)$, then $\ai = 0$; \vspace*{2mm}
\item[-] if $\gamma_1(\beta) \leq \tau \leq \gamma_2(\beta)$, then $a(t)=a(0)$, $\forall t \geq 1$.
\end{itemize}
\item[ii)] For $\beta \leq 1$, if $\tau \leq \gamma_3(\beta)$, then $\ai = \1$; otherwise $\ai = 0$.
\end{itemize}
\end{theorem}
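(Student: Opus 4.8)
The plan is to collapse the $n$-agent action dynamics to a two-state system by exploiting the permutation symmetry of the ordinary agents, and then to trace its trajectory configuration by configuration using the closed-form thresholds of Lemma~\ref{lem:thc}. First I would show, by induction on $t$, that $\theta_2(t)=\cdots=\theta_n(t)$ and $a_2(t)=\cdots=a_n(t)$ for every $t\ge0$: the initial data \eqref{eq:theta0} and the matrices $F$ and $G=F$ in \eqref{eq:GeqFc} are invariant under any permutation of the agents fixing agent~$1$, so this structure is preserved by \eqref{eq:modelm1},\eqref{eq:modelm3}. Hence the action vector is fully encoded by the pair $(x(t),y(t))=(a_1(t),a_2(t))\in\{0,1\}^2$, while $\theta_1(t)$ and $\theta_i(t)$ ($i\ge2$) are given by \eqref{eq:thc1},\eqref{eq:thc2}. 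The configurations $(0,0)$ and $(1,1)$ are absorbing (consistently with $0$ and $\1$ being equilibria, using the footnote convention for the former), and from $(1,0)$ and $(0,1)$ the update $a(t+1)=\phi(Fa(t)-F\theta(t))$ reduces to comparing the two fixed values of $Fa(t)$ against the scalar sequences $\theta_1(t),\theta_i(t)$. The first step from the initial configuration $(1,0)$ gives $a_1(1)=1\Leftrightarrow\tau\le\frac{\beta}{n-1}$ and $a_i(1)=1\Leftrightarrow\tau\le\gamma_3(\beta)$.

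For part~i) ($\beta>1$) this comparison is transparent because, by Corollary~\ref{cor:thc}, $\theta_1(t)$ increases to $\frac{n-1}{n}\tau$ and $\theta_i(t)$ decreases to $\frac{n-1}{n}\tau$. Using the elementary orderings $\gamma_3(\beta)\le\gamma_1(\beta)<\gamma_2(\beta)\le\frac{\beta}{n-1}$, valid for $\beta>1$, I would argue: if $\tau<\gamma_1(\beta)$ the configuration stays at $(1,0)$ with $a_1\equiv1$ (since $\theta_1(t)<\frac{n-1}{n}\tau\le\frac{\beta}{\beta+n-1}$) until $\theta_i(t)$ falls below $\frac{1}{\beta+n-1}$, which happens in finite time because its limit $\frac{n-1}{n}\tau$ is strictly smaller, and is then absorbed at $(1,1)$; if $\tau>\gamma_2(\beta)$ it stays at $(1,0)$ with $a_i\equiv0$ until $\theta_1(t)$ rises above $\frac{\beta}{\beta+n-1}$ and is then absorbed at $(0,0)$; and if $\gamma_1(\beta)\le\tau\le\gamma_2(\beta)$ the inequalities $\theta_1(t)<\frac{n-1}{n}\tau\le\frac{\beta}{\beta+n-1}$ and $\theta_i(t)>\frac{n-1}{n}\tau\ge\frac{1}{\beta+n-1}$ hold for all $t$, so $(x(t),y(t))\equiv(1,0)$, i.e. $a(t)=a(0)$ for $t\ge1$.

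For part~ii) ($\beta\le1$) one has $\lambda_c\le0$, so $\theta_1(t)$ and $\theta_i(t)$ still converge to $\frac{n-1}{n}\tau$ but now oscillate around it (they are constant for $t\ge1$ when $\beta=1$). If $\tau>\gamma_3(\beta)$, then $a_i(1)=0$ and $\frac{n-1}{n}\tau$ exceeds both $\frac{\beta}{\beta+n-1}$ and $\frac{1}{\beta+n-1}$, and tracking the trajectory shows it is driven to $(0,0)$ (directly when $\tau>\frac{\beta}{n-1}$, otherwise via $(1,0)$). If $\tau\le\gamma_3(\beta)$, the configuration goes either straight to $(1,1)$ (when $\tau\le\frac{\beta}{n-1}$) or to $(0,1)$, and from $(0,1)$ one checks that the next comparisons --- with the fixed values $p_1=\frac{n-1}{\beta+n-1}$ and $p_i=\frac{\beta+n-2}{\beta+n-1}$ against $\theta_1(t),\theta_i(t)$ --- force absorption at $(1,1)$; this step reduces to polynomial inequalities in $\beta$ and $n$.

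The main obstacle is precisely this last case. In part~i) monotonicity from Corollary~\ref{cor:thc} lets one conclude by ``eventual crossing'' arguments; for $\beta<1$ the thresholds oscillate, so one must instead track $\theta_1(t),\theta_i(t)$ against the $p$-values step by step and bound the oscillation amplitude $|\lambda_c|^t$. The crux is to ensure that, once the configuration reaches $(0,1)$, it is driven to $(1,1)$ rather than drifting back to $(1,0)$ and from there to $(0,0)$; this is exactly where the precise value $\gamma_3(\beta)$, as opposed to the limit-based thresholds $\gamma_1(\beta),\gamma_2(\beta)$, enters the analysis.
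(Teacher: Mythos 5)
Your proposal is correct and follows essentially the same route as the paper: symmetry reduction to the pair $(a_1,a_2)$, a one-step analysis yielding the thresholds $\frac{\beta}{n-1}$ and $\gamma_3(\beta)$, the monotonicity of $\theta_1,\theta_2$ from Corollary~\ref{cor:thc} with the limit values $\gamma_1(\beta),\gamma_2(\beta)$ for $\beta>1$, and for $\beta\le1$ the two-step argument through $a(1)=[0,1,\dots,1]'$ with the final comparison deferred to algebraic inequalities, exactly as the paper's ``long but straightforward calculations''. No substantive difference or gap.
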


\begin{proof}
According to Lemma \ref{lem:thc}, one has $\theta_2(t)=\theta_3(t)=\dots=\theta_n(t)$, $\forall t\geq 0$.
Since from \eqref{eq:theta0} it also holds $a_2(0)=a_3(0)=\dots=a_n(0)$, this clearly implies  $a_2(t)=a_3(t)=\dots=a_n(t)$, $\forall t\geq 0$. Hence, in the sequel we will refer only to $\theta_2(t)$ and $a_2(t)$.
\\
i) Given the initial condition \eqref{eq:theta0}, from Lemma~\ref{lem:thc} one has
\begin{align}
    \theta_1(1) &= \frac{n-1}{\beta+n-1} \,\tau, \vspace*{1mm} \label{eq:th11a}\\
    \theta_2(1) &=  \frac{\beta + n-2}{\beta+n-1} \, \tau. \label{eq:th21a}
\end{align}
Being $p(1)=Ga(0)$, one gets
\begin{align*}
    p_1(1) &= \frac{\beta}{\beta+n-1},  \vspace*{1mm}\\
    p_2(1) &=  \frac{1}{\beta+n-1},
\end{align*}
and hence $a(1)=\1$ if and only if the following conditions are satisfied
\begin{align}
    \beta & \geq (n-1) \tau,  \vspace*{1mm} \label{eq:1in1a1}\\
    1 & \geq  (\beta + n -2 )\,\tau. \label{eq:1in1a2}
\end{align}
Similarly, when both conditions \eqref{eq:1in1a1}-\eqref{eq:1in1a2} are violated, one has $a(1)=0$. 
When $\frac{1}{\beta+n-2} < \tau \leq \frac{\beta}{n-1}$, one has $a(1)=a(0)$. 
If $\beta>1$, due to Corollary~\ref{cor:thc} one has that $\theta_1(t)$ is increasing, while  $\theta_2(t)$ is decreasing. 
Therefore, one will have $a(t)=a(0)$ and $p(t)=p(0)$ until either $p_1(t)<\theta_1(t)$ or $p_2(t)\geq \theta_2(t)$. From~\eqref{eq:cor0}, both conditions will never occur if the following inequalities hold
\begin{align}
  \frac{\beta}{\beta+n-1}&\geq \frac{n-1}{n} \, \tau, \label{eq:condiva} \vspace*{1mm} \\
  \frac{1}{\beta+n-1} &\leq \frac{n-1}{n} \, \tau, \label{eq:condivb}
\end{align}
which correspond to $\gamma_1(\beta) \leq \tau \leq \gamma_2(\beta)$.
If $\tau > \gamma_2(\beta)$, \eqref{eq:condiva}  does not hold: therefore, $p_1(t)<\theta_1(t)$ for some $t$ and hence $\ai=0$.
Similarly, if $\tau < \gamma_1(\beta)$, \eqref{eq:condivb}  does not hold, thus leading to $\ai=\1$.
\\
ii) From \eqref{eq:1in1a1}-\eqref{eq:1in1a2} we have that $a(1)=\1$ if and only if $\tau \leq \min \left\{ \frac{\beta}{n-1}, \frac{1}{\beta+n-2} \right\}$, while
$a(1)=0$ if and only if $\tau > \max \left\{ \frac{\beta}{n-1}, \frac{1}{\beta+n-2} \right\}$.
Since $\beta \leq 1$ implies $\frac{\beta}{n-1} \leq \frac{1}{\beta+n-2}$ it remains to discuss the case in which
\begin{equation}
\label{eq:assiii}
\frac{\beta}{n-1}  < \tau \leq  \frac{1}{\beta+n-2},
\end{equation} 
which, according to the above discussion, leads to $a(1)=[0, \ 1,\dots,1]'$.
Hence $p(2)=Ga(1)$ is such that
\begin{align*}
    p_1(2) &= \frac{n-1}{\beta+n-1} ,  \vspace*{1mm}\\
    p_2(2) &=  \frac{\beta+n-2}{\beta+n-1} .
\end{align*}
Being from Lemma~\ref{lem:thc}
\begin{align*}
    \theta_1(2) &= \frac{n-1}{n} \,\tau \, \left( 1-\lambda_c^2\right), \vspace*{1mm} \\
    \theta_2(2) &=  \frac{n-1}{n} \,\tau \, \left( 1+\frac{1}{n-1}\lambda_c^2\right) ,
\end{align*}
through long but straightforward calculations it is possible to verify that, under the assumptions \eqref{eq:assiii}, one has
$$
p_1(2)\geq \theta_1(2),~~ p_2(2)\geq \theta_2(2).
$$
Therefore, $a(2)=\1$ and one can conclude that $\ai=\1$ for every $\tau \leq \frac{1}{\beta+n-2}=\gamma_3(\beta)$. 
\end{proof}

\begin{figure}[t]
\centering
\psfrag{n}{$n=5$}
\psfrag{beta}[t]{$\beta$}
\psfrag{tau}{$\tau$}
\psfrag{titulo}{}
\psfrag{g1}{$\gamma_1(\beta)$}
\psfrag{g2}{$\gamma_2(\beta)$}
\psfrag{g3}{$\gamma_3(\beta)$}
\includegraphics[width=0.65\columnwidth]{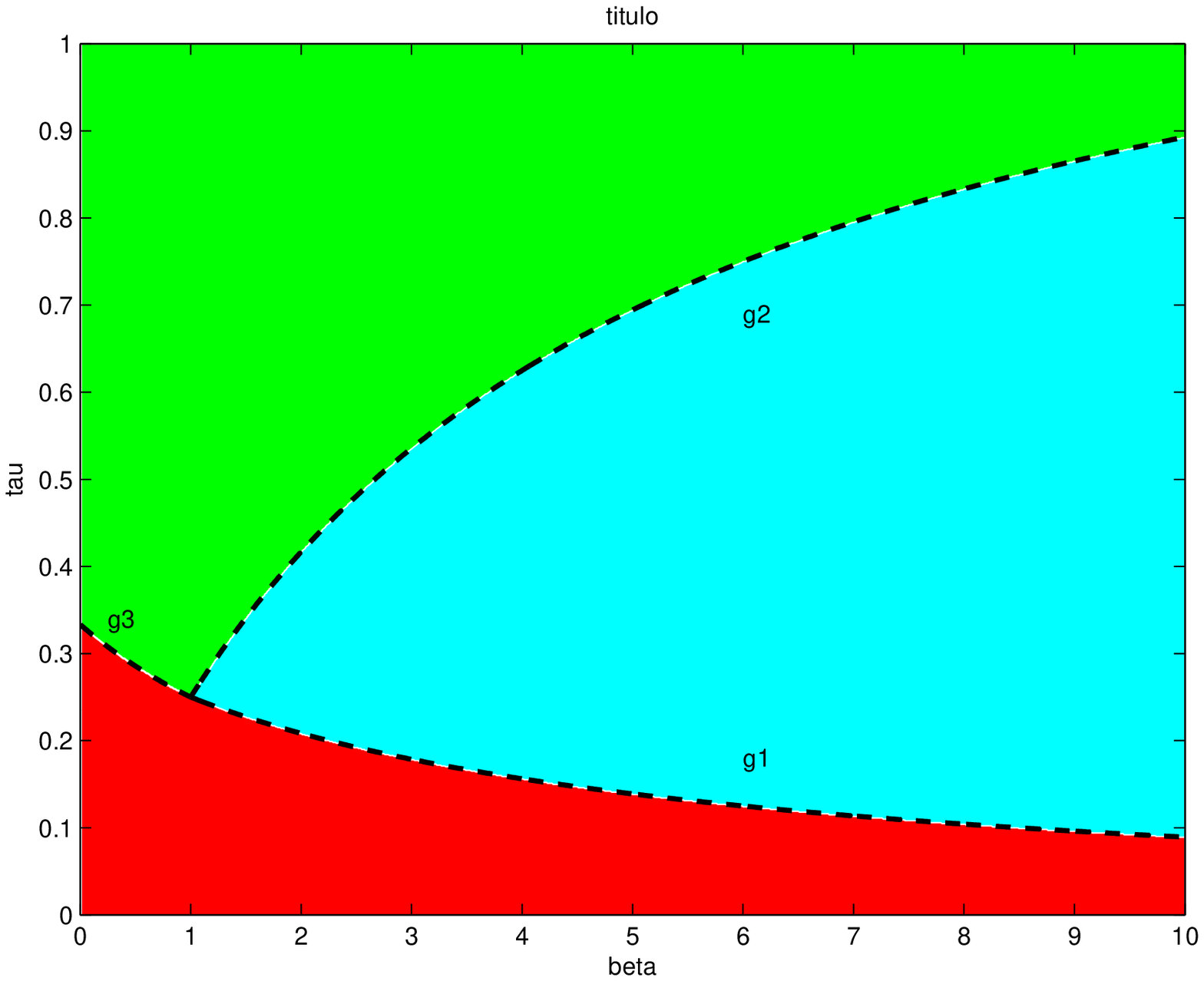}
\caption{Complete graph, WAL setting, $n=5$.}
\label{fig:n5GeqF}
\end{figure}
\begin{figure}[htb]
\centering
\psfrag{n}{$n=20$}
\psfrag{beta}[t]{$\beta$}
\psfrag{tau}{$\tau$}
\psfrag{titulo}{}
\psfrag{g1}{$\gamma_1(\beta)$}
\psfrag{g2}{$\gamma_2(\beta)$}
\psfrag{g3}{$\gamma_3(\beta)$}
\includegraphics[width=0.65\columnwidth]{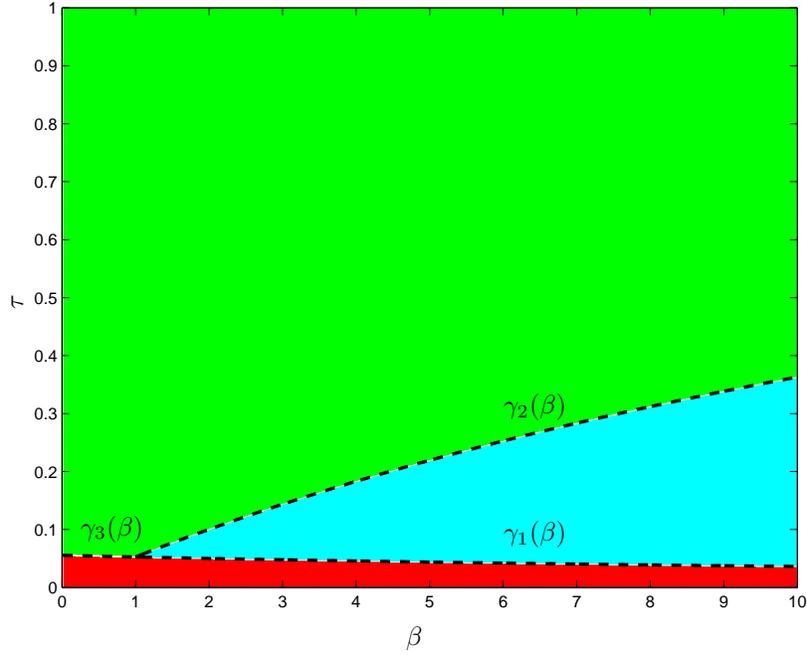}
\caption{Complete graph, WAL setting, $n=20$.}
\label{fig:n20GeqF}
\end{figure}

A byproduct of the proof of Theorem \ref{th:FGa} is the characterization of the cases for which the asymptotic behavior is reached in one step.
\begin{corollary}
\label{cor:compa}
System  \eqref{eq:modelm1},\eqref{eq:modelm3}, with $G=F$ given by  \eqref{eq:GeqFc} and initial condition \eqref{eq:theta0}.
satisfies
\begin{itemize}
\item[-] $a(t)=\1$ for all $t \geq 1$, if and only if  $\tau \leq \min \left\{ \frac{\beta}{n-1},\gamma_3(\beta) \right\}$;
\item[-] $a(t)=0$ for all $t \geq 1$, if  and only if $\tau > \max \left\{ \frac{\beta}{n-1}, \gamma_3(\beta) \right\}$.
\end{itemize}
\end{corollary}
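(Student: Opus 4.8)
The plan is to recycle the one-step computations already performed inside the proof of Theorem~\ref{th:FGa}, and to combine them with the elementary observation that the action configurations $\1$ and $0$ are absorbing for the dynamics~\eqref{eq:modelm3}. Indeed, that proof shows that, under the initial condition~\eqref{eq:theta0}, $a(1)=\1$ holds if and only if both~\eqref{eq:1in1a1} and~\eqref{eq:1in1a2} are satisfied, i.e. if and only if $\tau\leq\min\{\frac{\beta}{n-1},\gamma_3(\beta)\}$; and that $a(1)=0$ holds if and only if both~\eqref{eq:1in1a1} and~\eqref{eq:1in1a2} are violated, i.e. if and only if $\tau>\max\{\frac{\beta}{n-1},\gamma_3(\beta)\}$. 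So the two conditions in the statement are exactly the conditions characterizing, respectively, $a(1)=\1$ and $a(1)=0$, and it only remains to propagate these equalities to all $t\geq1$.

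For the propagation I would argue that $\1$ and $0$ are equilibria of~\eqref{eq:modelm3} irrespective of $\theta$, as already observed in Section~\ref{sec:formulation}. Concretely, if $a(t_0)=\1$ then $Ga(t_0)=\1$ because the rows of $G$ sum to one, while every component of $F\theta(t_0)$ is a convex combination of the entries of $\theta(t_0)\in[0,1]^n$ and hence lies in $[0,1]$; therefore $Ga(t_0)-F\theta(t_0)\geq0$ componentwise and $a(t_0+1)=\phi(Ga(t_0)-F\theta(t_0))=\1$. By induction, $a(t)=\1$ for every $t\geq t_0$. Symmetrically, if $a(t_0)=0$ then $p(t_0+1)=Ga(t_0)=0$, and the convention $p=0\Rightarrow a=0$ adopted in Section~\ref{sec:formulation} gives $a(t_0+1)=0$; again by induction $a(t)=0$ for all $t\geq t_0$.

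Putting the pieces together yields both biconditionals. If $\tau\leq\min\{\frac{\beta}{n-1},\gamma_3(\beta)\}$ then $a(1)=\1$ and, by absorption, $a(t)=\1$ for all $t\geq1$; conversely, $a(t)=\1$ for all $t\geq1$ implies in particular $a(1)=\1$, hence $\tau\leq\min\{\frac{\beta}{n-1},\gamma_3(\beta)\}$. The same reasoning applied to the configuration $0$ gives the second item. I do not expect any real obstacle: the content is essentially bookkeeping on top of the proof of Theorem~\ref{th:FGa}. The only points deserving a line of care are the two ``only if'' implications, which are immediate once the one-step biconditionals are recalled, and the correct use of the $p=0\Rightarrow a=0$ convention when handling the all-inactive configuration, since there $\theta$ need not vanish.
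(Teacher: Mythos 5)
Your proposal is correct and follows essentially the same route the paper intends: the one-step biconditionals \eqref{eq:1in1a1}--\eqref{eq:1in1a2} from the proof of Theorem~\ref{th:FGa} give exactly the stated thresholds for $a(1)=\1$ and $a(1)=0$, and the persistence for $t\geq 1$ follows from the fact (already noted in Section~\ref{sec:formulation}) that $\1$ and $0$ are absorbing, which you verify correctly via row-stochasticity of $G$, $\theta(t)\in[0,1]^n$, and the $p=0\Rightarrow a=0$ convention.
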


Theorem \ref{th:FGa} gives the complete characterization of the asymptotic behavior of system  \eqref{eq:modelm1},\eqref{eq:modelm3}, with initial condition \eqref{eq:theta0}. Notice that there are three possible asymptotic activity profiles: i) all the agents become active; ii) all the agents become inactive; iii) the situation remains always the same as in the initial condition (i.e., agent 1 is active and all the others are inactive). 
From the proof of Theorem \ref{th:FGa} it is apparent that the asymptotic value $\ai$ is always reached in a finite number of steps. However, it is not possibile to give an a priori upper bound to such a number, which can be arbitrarily high. For example,  if $n=5$, $\tau=0.99$ and $\beta=15$, one has that $a(t)=0$ only for $t \geq 19$.  Similarly, if $n=5$, $\tau=0.01$ and $\beta=118$, one has that $a(t)=\1$ only for $t \geq 56$. 

Figures \ref{fig:n5GeqF} and \ref{fig:n20GeqF} show the asymptotic behaviors achieved for different values of  $\beta$ (relative confidence parameter) and $\tau$ (initial threshold of the ordinary agents), in the cases of $5$ and $20$ agents, respectively. 
The different regions correspond to: $\ai=\1$ (red); $\ai=0$ (green); $a(t)=a(0)$, $\forall t$ (light blue). 
The dashed curves represent the functions $\gamma_i(\beta)$ defined in \eqref{eq:gamma1}-\eqref{eq:gamma3}. Notice that these curves intersect at $\beta=1$. It can be observed that for $n=20$ the curves $\gamma_1(\tau)$ and $\gamma_3(\tau)$ are almost indistinguishable. As expected, the area in which all the agents end up to be inactive grows with $n$, while the region in which all the agents become active tends to shrink, as well as that in which the initial condition $a(0)$ is maintained indefinitely.

\subsection{Uniform activity level}

Now, let us consider setting UAL. When the interconnection graph is complete, this means that
\begin{equation}
\label{eq:Gbc}
G=\frac {\1 \1'} {n}, 
\end{equation}
and $F$ is given by \eqref{eq:GeqFc}.
Let us define the function
$$
\eta(\beta) = \frac{\beta+n-1}{n (\beta+n-2)}.
$$
Then, the following result holds.
\begin{theorem}
\label{th:FGb}
System  \eqref{eq:modelm1},\eqref{eq:modelm3}, with $F$ defined as in \eqref{eq:GeqFc}, $G$ given by \eqref{eq:Gbc} and initial condition \eqref{eq:theta0}, has the following asymptotic behaviors:
\begin{itemize}
\item[i)] For $\beta > 1$,
\begin{itemize}
\item[-] if $\tau < \frac{1}{n-1}$, then $\ai = \1$; \vspace*{2mm}
\item[-] if $\tau > \frac{1}{n-1}$, then $\ai = 0$; \vspace*{2mm}
\item[-] if $\tau = \frac{1}{n-1}$, then $a(t)=a(0)$, $\forall t \geq 1$.
\end{itemize}
\item[ii)] For $\beta \leq 1$, if $\tau \leq \eta(\beta)$, then $\ai = \1$; otherwise $\ai = 0$.
\end{itemize}
\end{theorem}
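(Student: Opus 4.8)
The plan is to follow the same scheme as the proof of Theorem~\ref{th:FGa}, exploiting the extra structure of the UAL setting. First I would observe that, by Lemma~\ref{lem:thc} and the symmetry of the initial condition~\eqref{eq:theta0}, one has $\theta_2(t)=\cdots=\theta_n(t)$ and hence $a_2(t)=\cdots=a_n(t)$ for all $t$, so the action vector always lies in $\{0,\ e_1,\ \1-e_1,\ \1\}$, with $e_1=[1,0,\dots,0]'$, and both $0$ and $\1$ are absorbing. The decisive feature of UAL is that, since $G=\frac{\1\1'}{n}$, every agent compares the \emph{same} quantity with its own threshold, namely the current fraction of active agents: $p(t+1)=\frac1n\1$ if $a(t)=e_1$, and $p(t+1)=\frac{n-1}{n}\1$ if $a(t)=\1-e_1$. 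Since $F$ is unchanged, \eqref{eq:th11a}--\eqref{eq:th21a} still hold, so $a_1(1)=1\iff\tau\le\tau_A$ and $a_2(1)=1\iff\tau\le\tau_B$, where $\tau_A=\frac{\beta+n-1}{n(n-1)}$ and $\tau_B=\eta(\beta)$. A direct comparison gives $\tau_B<\tfrac1{n-1}<\tau_A$ when $\beta>1$, and $\tau_A\le\tau_B=\eta(\beta)$ when $\beta\le1$; hence $a(1)$ equals $\1$, $e_1$ or $0$ in the first case, and $\1$, $\1-e_1$ or $0$ in the second.

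For part~i) ($\beta>1$) I would invoke the monotonicity of Corollary~\ref{cor:thc}: $\theta_1(t)$ increases to $c:=\frac{n-1}{n}\tau$ and $\theta_2(t)$ decreases to $c$ (see~\eqref{eq:cor0}), with $\theta_1(t)<c<\theta_2(t)$ for every finite $t$. The cases $\tau\le\tau_B$ and $\tau>\tau_A$ are absorbed at $\1$ and $0$ already at $t=1$. In the remaining case $\tau_B<\tau\le\tau_A$ one has $a(1)=e_1$, and while the trajectory stays in $e_1$ it has $p(t)=\frac1n\1$, so it remains there exactly as long as $\theta_1(t)\le\frac1n<\theta_2(t)$; thus the fate is decided by the sign of $c-\frac1n$, i.e.\ of $\tau-\frac1{n-1}$. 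If $\tau<\frac1{n-1}$ then eventually $\theta_2(t)\le\frac1n$ (while $\theta_1(t)<c<\frac1n$) and the state jumps to $\1$; if $\tau=\frac1{n-1}$ the state stays at $e_1=a(0)$ for all $t\ge1$; if $\tau>\frac1{n-1}$ then eventually $\theta_1(t)>\frac1n$ while still $\theta_2(t)>c>\frac1n$, and the state jumps to $0$. Combining with the two already-absorbed cases and $\tau_B<\frac1{n-1}<\tau_A$ yields exactly statement~i).

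For part~ii) ($\beta\le1$) the threshold is no longer monotone ($\lambda_c\le0$), so I would argue from the closed form of Lemma~\ref{lem:thc}. If $\tau>\eta(\beta)=\max\{\tau_A,\tau_B\}$, then $a(1)=0$ and $\ai=0$. If $\tau\le\eta(\beta)$, then $a_2(1)=1$, so either $a(1)=\1$ (when $\tau\le\tau_A$), giving $\ai=\1$, or $a(1)=\1-e_1$ (when $\tau_A<\tau\le\eta(\beta)$, a range nonempty only for $\beta<1$). In the latter case $p(2)=\frac{n-1}{n}\1$; from \eqref{eq:thc1}--\eqref{eq:thc2}, $\theta_1(2)\le\frac{n-1}{n}\tau<\frac{n-1}{n}$ so $a_1(2)=1$, while $a_2(2)=1\iff\theta_2(2)=\frac{n-1}{n}\tau\bigl(1+\tfrac{\lambda_c^2}{n-1}\bigr)\le\frac{n-1}{n}$, i.e.\ $\iff\tau\le\frac{n-1}{n-1+\lambda_c^2}$. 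The crux is therefore the algebraic inequality $\eta(\beta)\bigl(n-1+\lambda_c^2\bigr)\le n-1$; clearing denominators it reads $(\beta+n-1)^2(n-1)+(\beta-1)^2\le n(n-1)(\beta+n-2)(\beta+n-1)$, and setting $\beta=1-\epsilon$ with $\epsilon\in[0,1)$ the difference of the two sides equals $(n-1)(n-\epsilon)\bigl[n(n-2)-(n-1)\epsilon\bigr]-\epsilon^2$, which is nonnegative for $n\ge3$; hence $a(2)=\1$ and $\ai=\1$. For $n=2$ the states $e_1$ and $\1-e_1$ both give activity fraction $\tfrac12$, while $\theta_1(t),\theta_2(t)$ oscillate around $\tfrac\tau2<\tfrac12$ with geometrically vanishing amplitude, so at most one of them exceeds $\tfrac12$ at any time ($0$ is never reached) and neither does for $t$ large, so the state reaches $\1$; this is consistent, since $\eta(\beta)\ge1>\tau$ when $n=2$ and $\beta\le1$. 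The main obstacle is precisely this $\beta\le1$ analysis — verifying the algebraic inequality and handling the degenerate oscillating transient for small $n$; the rest is the same finite, bounded case analysis as in the proof of Theorem~\ref{th:FGa}.
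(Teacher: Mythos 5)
Your proposal is correct and follows essentially the same route as the paper's proof: compute $a(1)$ via the thresholds at $t=1$ and $p(1)=\frac1n\1$, use the monotonicity of Corollary~\ref{cor:thc} to compare the limit $\frac{n-1}{n}\tau$ with $\frac1n$ in the case $a(1)=a(0)$ (equivalent to the paper's conditions \eqref{eq:condiva2}--\eqref{eq:condivb2}), and for $\beta\le1$ verify the two-step convergence to $\1$. The only added value is that you carry out explicitly the ``long but straightforward manipulations'' the paper omits, which shows the two-step claim $a(2)=\1$ requires $n\ge3$, and you supply a separate (correct) argument for the degenerate case $n=2$.
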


\begin{proof}
i) By following the same reasoning as in the proof of Theorem \ref{th:FGa}, one gets \eqref{eq:th11a}-\eqref{eq:th21a} and, being $G$ given by \eqref{eq:Gbc},
$p_1(1) =p_2(1)= \frac{1}{n}$.
Hence, $a(1)=\1$ if and only if the following conditions are satisfied
\begin{align*}
    \frac{1}{n} & \geq \frac{n-1}{\beta+n-1} \, \tau,  \vspace*{1mm} \\
        \frac{1}{n} & \geq  \frac{\beta + n -2 }{\beta+n-1} \,\tau, 
\end{align*}
while $a(1)=0$ if and only if both conditions are violated.
When $\frac{\beta+n-1}{n(\beta+n-2)} < \tau \leq \frac{\beta+n-1}{n(n-1)}$, one has $a(1)=a(0)$. 
Notice that this can occur only if $\beta>1$, which means that \eqref{eq:cor1}-\eqref{eq:cor2} in Corollary~\ref{cor:thc} hold.
Hence, $a(t)=a(0)$ until either $p_1(t)<\theta_1(t)$ or $p_2(t)\geq \theta_2(t)$, are verified. Since $p(t+1)=Ga(t)=\frac{1}{n}\1$, these conditions correspond respectively to
\begin{align}
  \frac{1}{n}&< \frac{n-1}{n} \, \tau \, \left(1-\lambda_c^t \right), \label{eq:condiva2} \vspace*{1mm} \\
   \frac{1}{n} &\geq \frac{n-1}{n} \, \tau \left(1+\frac{1}{n-1}\lambda_c^t \right), \label{eq:condivb2}
\end{align}
where $\lambda_c=\frac{\beta-1}{\beta+n-1}$ as in Lemma~\ref{lem:thc}. Being $0 < \lambda_c <1$, \eqref{eq:condiva2}-\eqref{eq:condivb2} lead respectively to
\begin{align*}
  \lambda_c^t &<  1-  \frac{1}{(n-1)\tau},   \vspace*{1mm} \\
  \lambda_c^t &\leq    \frac{1}{\tau} -n+1~ .
\end{align*}
Hence, one eventually gets $a(t) = 0$, for some $t$, whenever 
$$
1-  \frac{1}{(n-1)\tau} > \frac{1}{\tau} -n+1,
$$
which corresponds to $\tau > \frac{1}{n-1}$.
Conversely, if $\tau < \frac{1}{n-1}$, \eqref{eq:condivb2} will occur before \eqref{eq:condiva2}, thus leading to $a(t) = \1$.
Finally, for $\tau = \frac{1}{n-1}$, \eqref{eq:condiva2}-\eqref{eq:condivb2} are never satisfied 
and therefore $a(t)=a(0)$ indefinitely.
\\
ii) Let $\beta\leq1$. Similarly to the proof of item ii) in Theorem \ref{th:FGa}, it is possible to show that if 
$$
\frac{\beta+n-1}{n(n-1)} < \tau \leq \frac{\beta+n-1}{n(\beta+n-2)}
$$
one gets $a(1)=[0, \ 1,\dots,1]'$ and, after long but straightforward manipulations, 
$$
p_1(2)\geq \theta_1(2),~~ p_2(2)\geq \theta_2(2).
$$
Therefore, $a(2)=\1$ and hence $\ai=1$ for every $\tau \leq \frac{\beta+n-1}{n(\beta+n-2)}=\eta(\beta)$. 
\end{proof}

\begin{figure}[t]
\centering
\psfrag{n}{$n=5$}
\psfrag{titulo}{}
\psfrag{beta}[t]{$\beta$}
\psfrag{tau}{$\tau$}
\psfrag{eta}{$\eta(\beta)$}
\psfrag{t}{$\tau=\frac{1}{n-1}$}
\includegraphics[width=0.65\columnwidth]{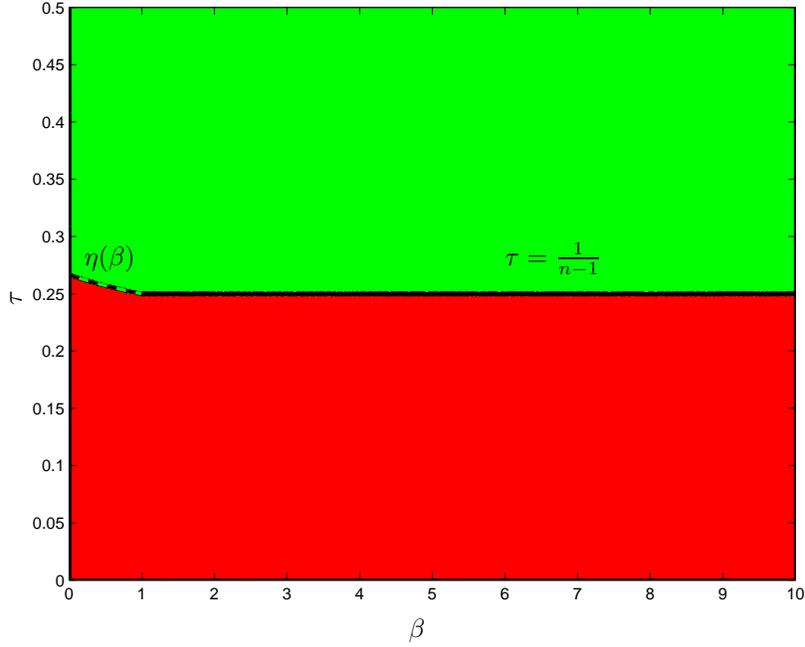}
\caption{Complete graph, UAL setting, $n=5$.}
\label{fig:n5GneF}
\end{figure}
\begin{figure}[t]
\centering
\psfrag{n}{$n=20$}
\psfrag{titulo}{}
\psfrag{beta}[t]{$\beta$}
\psfrag{tau}{$\tau$}
\psfrag{eta}{$\eta(\beta)$}
\psfrag{t}{$\tau=\frac{1}{n-1}$}
\includegraphics[width=0.65\columnwidth]{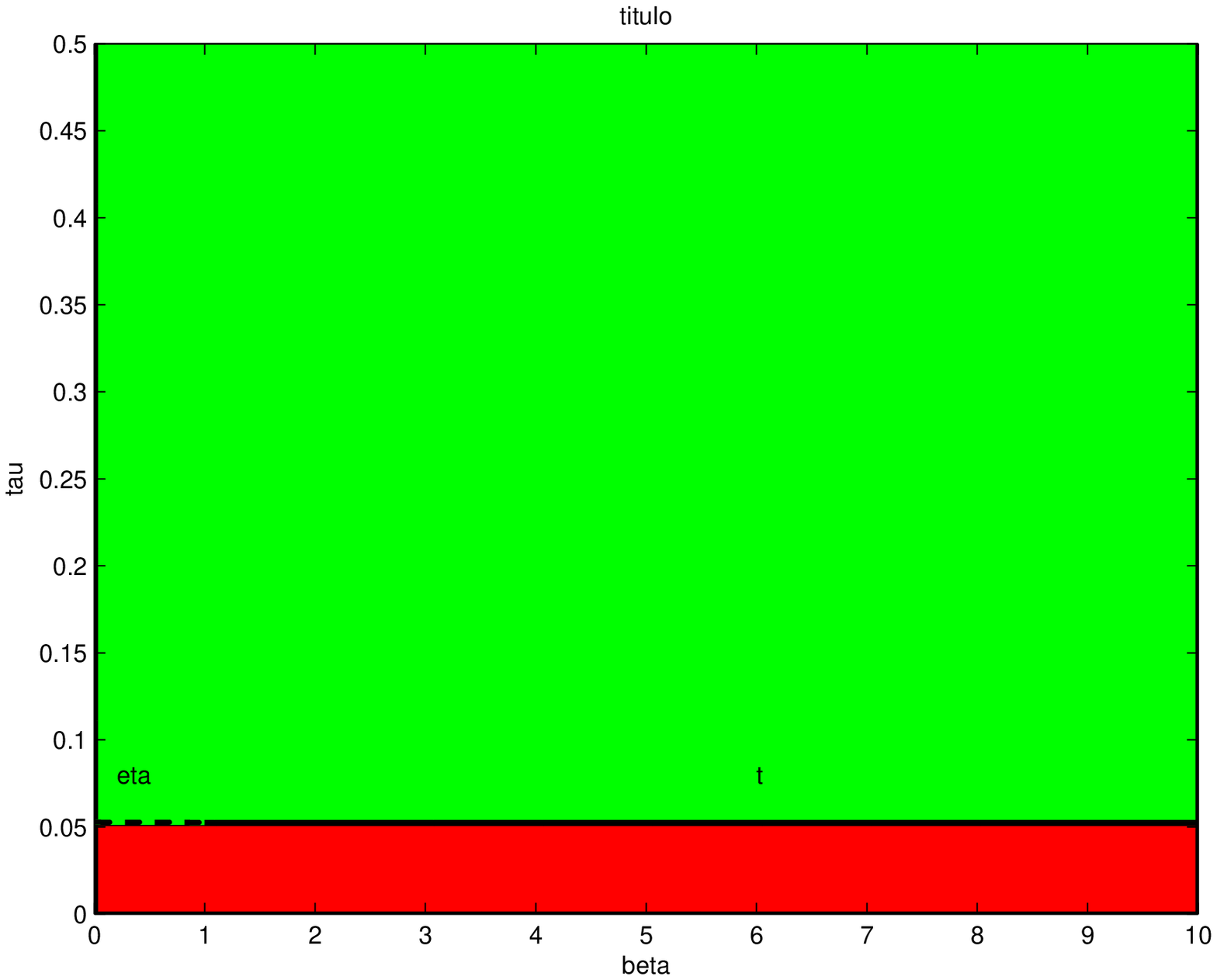}
\caption{Complete graph, UAL setting, $n=20$.}
\label{fig:n20GneF}
\end{figure}

\begin{corollary}
\label{cor:compb}
System  \eqref{eq:modelm1},\eqref{eq:modelm3}, with $F$ defined as in \eqref{eq:GeqFc}, $G$ given by \eqref{eq:Gbc} and initial condition \eqref{eq:theta0}, satisfies
\begin{itemize}
\item[-] $a(t)=\1$ for all $t \geq 1$, if and only if  $\tau \leq \min \left\{ \frac{\beta+n-1}{n(n-1)},\eta(\beta) \right\}$;
\item[-] $a(t)=0$ for all $t \geq 1$, if  and only if $\tau > \max \left\{ \frac{\beta+n-1}{n(n-1)}, \eta(\beta) \right\}$.
\end{itemize}
\end{corollary}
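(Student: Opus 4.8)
The plan is to obtain the statement as an immediate byproduct of the one-step analysis carried out in the proof of Theorem~\ref{th:FGb}, combined with the fact that $\1$ and $0$ are equilibria of the map~\eqref{eq:modelm3}.

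First I would recall that, for the initial condition~\eqref{eq:theta0} and $G$ as in~\eqref{eq:Gbc}, one has $p(1)=Ga(0)=\frac1n\1$, while by Lemma~\ref{lem:thc} $\theta_1(1)=\frac{n-1}{\beta+n-1}\tau$ and $\theta_2(1)=\cdots=\theta_n(1)=\frac{\beta+n-2}{\beta+n-1}\tau$; by symmetry $a_2(t)=\cdots=a_n(t)$ for all $t$, so it suffices to track $a_1$ and $a_2$. Comparing $p_i(1)$ with $\theta_i(1)$ through~\eqref{eq:amodel}, the condition $a_1(1)=1$ is equivalent to $\tau\le\frac{\beta+n-1}{n(n-1)}$ and the condition $a_2(1)=1$ is equivalent to $\tau\le\eta(\beta)$. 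Hence $a(1)=\1$ if and only if $\tau\le\min\left\{\frac{\beta+n-1}{n(n-1)},\eta(\beta)\right\}$, and $a(1)=0$ if and only if both conditions fail, i.e. $\tau>\max\left\{\frac{\beta+n-1}{n(n-1)},\eta(\beta)\right\}$.

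Next I would use that $a_e=\1$ and $a_e=0$ are equilibria of~\eqref{eq:modelm3} irrespective of $F$, $G$ and of $\theta(t)$ (for $\1$ because $G\1=\1$ and $F\theta(t)\le\1$ componentwise, so $\phi(G\1-F\theta(t))=\1$; for $0$ because $p(t+1)=Ga(t)=0$ whenever $a(t)=0$, together with the convention that $p=0$ forces $a=0$). Therefore $a(1)=\1$ propagates to $a(t)=\1$ for all $t\ge1$, and $a(1)=0$ propagates to $a(t)=0$ for all $t\ge1$; this gives the two ``if'' directions. The two ``only if'' directions are trivial, since $a(t)=\1$ for all $t\ge1$ forces $a(1)=\1$, and likewise for the all-zero pattern. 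Combining the two directions yields the claimed equivalences.

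I do not expect a genuine obstacle, as all the needed computations already appear in the proof of Theorem~\ref{th:FGb}. The only point deserving care is that the corollary concerns the pattern holding at \emph{every} $t\ge1$, not merely asymptotically: this is precisely why the relevant thresholds are the one-step conditions $a(1)=\1$ and $a(1)=0$ rather than the weaker asymptotic thresholds of Theorem~\ref{th:FGb}. In particular, in the intermediate range where $a(1)=[0,1,\dots,1]'$ (which for $\beta\le1$ is $\frac{\beta+n-1}{n(n-1)}<\tau\le\eta(\beta)$) the all-ones pattern is reached only at $t=2$, so it does not hold for all $t\ge1$, consistently with the stated bounds.
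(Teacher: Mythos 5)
Your proposal is correct and matches the paper's intent: the corollary is exactly the byproduct of the one-step comparisons $p(1)=\frac1n\1$ versus $\theta_1(1)=\frac{n-1}{\beta+n-1}\tau$, $\theta_2(1)=\frac{\beta+n-2}{\beta+n-1}\tau$ already derived in the proof of Theorem~\ref{th:FGb}, combined with the invariance of the equilibria $\1$ and $0$ noted in Section~\ref{sec:formulation}. Your handling of the ``only if'' directions and of the intermediate range where $a(1)=[0,1,\dots,1]'$ is consistent with the paper's argument, so nothing further is needed.
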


Figures \ref{fig:n5GneF} and \ref{fig:n20GneF} show the curve $\eta(\beta)$ in the $\tau-\beta$ plane (dashed), along with the horizontal line $\tau=\frac{1}{n-1}$ (solid), for $n=5$ and $n=20$ respectively. Notice that in the latter case, the two lines are almost coincident. Also in this setting, the area in which all the agents end up to be inactive grows with $n$ (notice the scale on $\tau$), while the region in which all the agents become active tends to shrink. 

It is worth observing that when $\beta=1$, matrix $G$ is the same in both the WAL and the UAL setting, so that conditions in Theorems~\ref{th:FGa} and \ref{th:FGb} coincide. In this case, the scenario addressed in \cite{LHTM12} is recovered. In particular,  from Corollaries \ref{cor:compa} and \ref{cor:compb} it turns out that only two situations occur: either $a(1)=\1$ if $\tau \leq \frac{1}{n-1}$, or $a(1)=0$ otherwise. Hence, the steady state behavior is always achieved in one step. 
The introduction of the parameter $\beta$, accounting for the relative confidence of each agent on its own opinion, has significantly enriched the picture of possible asymptotic behaviors of the system. For $\beta >1$, three new different situations appear in the WAL setting: all the agents eventually become active; all the agents eventually become inactive; the initial situation is maintained indefinitely. As $\beta$ increases, the latter situation occurs for a larger range of values of the initial threshold $\tau$. This corresponds to the fact that in a network whose agents are more self-confident, it is more difficult to persuade them to change their status. Conversely, for $\beta<1$, this behavior disappears and either $\1$ is reached (in one or two steps) or all the agents become inactive in one step.

Another interesting observation concerns the differences between the WAL and UAL scenarios. The same five behaviors described above for the WAL setting, are present also in setting UAL, but the condition in which $a(t)=a(0)$, $\forall t$, occurs only if $\tau$ is exactly equal to $\frac{1}{n-1}$, which is clearly a singular condition. 

\section{Star graph}
\label{sec:star}

In this section we analyze the asymptotic behavior of system \eqref{eq:modelm1},\eqref{eq:modelm3} when the graph has a star structure, as depicted in Figure \ref{fig:star2}.
\begin{figure}[tb]
\centering
\psfrag{1}{$1$}
\psfrag{2}{$2$}
\psfrag{3}{$3$}
\psfrag{4}{$4$}
\psfrag{5}{$5$}
\psfrag{...}{$\dots$}
\psfrag{n}{$n$}
\includegraphics[width=0.4\columnwidth]{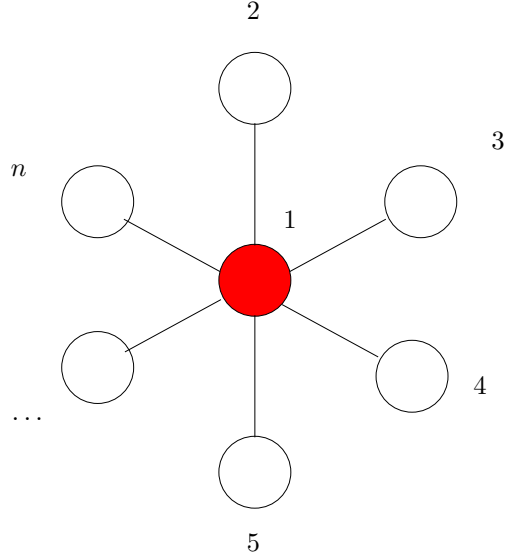}
\caption{Star graph with radical agent in the center.}
\label{fig:star2}
\end{figure}
In the star graph, the radical agent ($i=1$) is the center of the graph, while the remaining $n-1$ ordinary agents are connected only to the radical. This leads to a matrix $F$ of the form
\begin{equation}
\label{eq:Fstar}
F=
\begin{bmatrix}
\frac{\beta}{\beta+n-1} & \frac{1}{\beta+n-1} & \dots & \dots & \frac{1}{\beta+n-1} \\
\frac{1}{\beta+1} & \frac{\beta}{\beta+1}  & 0 & \dots & 0 \\
\vdots & 0 & \ddots &  & 0 \\
\vdots & \vdots &  & \ddots & 0 \\
\frac{1}{\beta+1} & 0 & \dots & \dots & \frac{\beta}{\beta+1}  
\end{bmatrix}
\end{equation}
We consider scenarios WAL and UAL, defined as in Section \ref{sec:complete}. While in the former $G=F$, in the latter one has
\begin{equation}
\label{eq:Gstar}
G=
\begin{bmatrix}
\frac{1}{n} & \frac{1}{n} & \dots & \dots & \frac{1}{n} \\
\frac{1}{2} & \frac{1}{2}  & 0 & \dots & 0 \\
\vdots & 0 & \ddots &  & 0 \\
\vdots & \vdots &  & \ddots & 0 \\
\frac{1}{2} & 0 & \dots & \dots & \frac{1}{2}  
\end{bmatrix}
\end{equation}
Let us define
\begin{align}
r &= \frac{(n-1)(\beta+1)}{\beta+n-1}, \label{eq:ths}\\
\lambda_s &= \frac{\beta(2\beta+n)}{(\beta+1)(\beta+n-1)}-1. \label{eq:lams}
\end{align}
The following technical results are instrumental to the asymptotic analysis of the star graph interconnection.
\begin{lemma}
Consider the dynamic model~\eqref{eq:modelm1}, with matrix $F$ given by \eqref{eq:Fstar}. If $\theta(0) = [0,\ \tau, \dots,\tau]'$, $0<\tau < 1$, then
\begin{align}
\theta_1(t) &= \displaystyle{\frac{r}{1+r} \tau \left(1- \lambda_s^t \right),} \label{eq:ths1} \vspace*{1mm}\\
\theta_i(t) &= \displaystyle{\frac{r}{1+r} \tau \left(1+ \frac{1}{r} \,\lambda_s^t \right),~i=2,\dots,n,} \label{eq:ths2}
\end{align}
where $r$ and $\lambda_s$ are given by \eqref{eq:ths} and \eqref{eq:lams}, respectively. 
\label{lem:star} 
\vspace*{1mm}
\end{lemma}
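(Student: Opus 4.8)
The plan is to exploit the permutation symmetry of the star graph among the ordinary agents, reduce \eqref{eq:modelm1} to a two-dimensional linear recursion, and then solve it in closed form by spectral decomposition.

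First I would observe that both the initial condition \eqref{eq:theta0} and the matrix $F$ in \eqref{eq:Fstar} are invariant under any permutation of the indices $\{2,\dots,n\}$; hence $\theta_2(t)=\theta_3(t)=\dots=\theta_n(t)$ for all $t\geq0$. Writing $\theta_c(t)$ for this common value of the ordinary agents, the dynamics collapse to
\begin{equation*}
\begin{bmatrix}\theta_1(t+1)\\ \theta_c(t+1)\end{bmatrix}
= \tilde F \begin{bmatrix}\theta_1(t)\\ \theta_c(t)\end{bmatrix},
\qquad
\tilde F = \begin{bmatrix} \frac{\beta}{\beta+n-1} & \frac{n-1}{\beta+n-1} \\[1mm] \frac{1}{\beta+1} & \frac{\beta}{\beta+1}\end{bmatrix},
\end{equation*}
with $[\theta_1(0),\ \theta_c(0)]'=[0,\ \tau]'$. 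The matrix $\tilde F$ is row-stochastic, so $1$ is an eigenvalue with right eigenvector $[1,\ 1]'$; the second eigenvalue is $\mathrm{tr}\,\tilde F-1=\frac{\beta}{\beta+n-1}+\frac{\beta}{\beta+1}-1$, which, after combining the fractions, is exactly $\lambda_s$ in \eqref{eq:lams}. Since each of the two terms in that sum is strictly less than $1$, we have $\lambda_s<1$, so the two eigenvalues are distinct and $\tilde F$ is diagonalizable.

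Next I would compute the eigenvector for $\lambda_s$: using $\lambda_s-\frac{\beta}{\beta+1}=\frac{\beta}{\beta+n-1}-1=-\frac{n-1}{\beta+n-1}$ in the second row of the eigenvalue equation gives $[-r,\ 1]'$ as a right eigenvector, with $r$ as in \eqref{eq:ths}. It then remains to decompose the initial condition in this eigenbasis: solving $[0,\ \tau]'=\alpha[1,\ 1]'+\gamma[-r,\ 1]'$ yields $\alpha=\frac{r\tau}{1+r}$, $\gamma=\frac{\tau}{1+r}$, so that
\begin{equation*}
\begin{bmatrix}\theta_1(t)\\ \theta_c(t)\end{bmatrix}
= \tilde F^t\begin{bmatrix}0\\ \tau\end{bmatrix}
= \frac{r\tau}{1+r}\begin{bmatrix}1\\1\end{bmatrix} + \frac{\tau}{1+r}\,\lambda_s^t\begin{bmatrix}-r\\1\end{bmatrix},
\end{equation*}
and reading off the two components gives \eqref{eq:ths1}--\eqref{eq:ths2} directly.

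There is no substantial obstacle here beyond bookkeeping: the only points needing care are checking that the algebraic expression \eqref{eq:lams} really equals $\mathrm{tr}\,\tilde F-1$ and that the eigenvector ratio is precisely $-r$ with $r$ as in \eqref{eq:ths}, so that the eigen-decomposition coefficients come out as the prefactors $\frac{r}{1+r}\tau$ and $\frac{1}{1+r}\tau$ appearing in the statement. As an alternative route (and a convenient cross-check), one could instead substitute the claimed formulas \eqref{eq:ths1}--\eqref{eq:ths2} into the recursion $\theta(t+1)=F\theta(t)$ and verify the identity using the defining relation of $\lambda_s$; this avoids diagonalization altogether but obscures the structural reason for the closed form.
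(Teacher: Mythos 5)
Your proposal is correct and follows essentially the same route as the paper: exploiting the symmetry among ordinary agents to reduce to the two-dimensional system in $(\theta_1,\theta_c)$, identifying the eigenvalues $1$ and $\lambda_s$, and obtaining \eqref{eq:ths1}--\eqref{eq:ths2} from the response to the initial condition. You merely spell out the eigenvector computation and the decomposition of $[0,\ \tau]'$ that the paper leaves as a check, so there is no substantive difference.
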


\begin{proof}
Let us first observe that, due to the structure of $F$ in \eqref{eq:Fstar}, one has that system \eqref{eq:modelm1} with the initial condition $\theta(0) = [0,\ \tau, \dots,\tau]'$ satisfies $\theta_2(t)=\theta_3(t)=\dots=\theta_n(t)$, $\forall t\geq 0$. This implies that one can analyze the behavior of $\theta(t)$ by considering the two-dimensional system
$$
\begin{array}{rcl}
\theta_1(t+1) &=& \displaystyle{ \frac {\beta} {\beta + n- 1} \theta_1(t) +  \frac {n-1} {\beta + n - 1} \theta_2(t) } \vspace*{1mm}\\
\theta_2(t+1)  &=& \displaystyle{\frac {1} {\beta + 1} \theta_1(t) +  \frac {\beta} {\beta + 1}  \theta_2(t).}
\end{array}
$$
It is easy to check that the eigenvalues of such system are $1$ and $\lambda_s$ in \eqref{eq:lams}. Moreover, \eqref{eq:ths1}-\eqref{eq:ths2} readily follow from the system response to the initial condition $\theta(0)$.
\end{proof}

\begin{corollary}
\label{cor:star}
Consider the dynamic model~\eqref{eq:modelm1}, with matrix $F$ given by \eqref{eq:Fstar}. If $\theta(0) = [0,\ \tau, \dots,\tau]'$, $0<\tau < 1$, then
\begin{equation}
\label{eq:cor0star}
\lim_{t \to \infty} \theta(t) = \frac{r}{1+r}  \tau \,  \1.
\end{equation}
Moreover, if $\beta>\sqrt{n-1}$, then $\lambda_s >0$ and
\begin{align*}
  \theta_1(t+1) &> \theta_1(t),  \\
  \theta_i(t+1) &< \theta_i(t), \quad i=2,\dots,n, 
\end{align*}
for all $t\geq0$.
\end{corollary}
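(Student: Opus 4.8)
The plan is to work directly from the closed-form expressions \eqref{eq:ths1}--\eqref{eq:ths2} provided by Lemma~\ref{lem:star}; once these are in hand, both assertions reduce to elementary facts about the scalar sequence $\lambda_s^t$.

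\emph{Step 1: locating $\lambda_s$.} Starting from \eqref{eq:lams}, I would put the expression over the common denominator $(\beta+1)(\beta+n-1)$ and expand the numerator; the cross terms cancel and one obtains the factored form
\[
\lambda_s = \frac{\beta^2-(n-1)}{(\beta+1)(\beta+n-1)}.
\]
From this it is immediate that $-1<\lambda_s<1$ for every $\beta>0$ and $n\ge 2$: comparing the numerator with the denominator gives $\beta^2-(n-1)<\beta^2+n\beta+n-1$ (hence $\lambda_s<1$), while comparing it with minus the denominator gives $\beta^2-(n-1)>-(\beta^2+n\beta+n-1)$ (hence $\lambda_s>-1$), both being trivially true. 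Alternatively, one can argue that $F$ in \eqref{eq:Fstar} is a primitive row-stochastic matrix (positive diagonal, connected graph), so all its eigenvalues other than the simple Perron root $1$ lie strictly inside the unit disk, and $\lambda_s$ is one of them by Lemma~\ref{lem:star}. Either way $\lambda_s^t\to 0$, and letting $t\to\infty$ in \eqref{eq:ths1}--\eqref{eq:ths2} yields $\theta_1(t)\to\frac{r}{1+r}\tau$ and $\theta_i(t)\to\frac{r}{1+r}\tau$ for $i=2,\dots,n$, which is exactly \eqref{eq:cor0star}.

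\emph{Step 2: sign of $\lambda_s$ and monotonicity.} The factored form above makes the sign of $\lambda_s$ transparent: $\lambda_s>0$ if and only if $\beta^2>n-1$, i.e.\ $\beta>\sqrt{n-1}$, which is the first claim. Under this hypothesis $0<\lambda_s<1$, so $(\lambda_s^t)_{t\ge0}$ is strictly decreasing and positive. Subtracting consecutive values in \eqref{eq:ths1}--\eqref{eq:ths2} gives
\begin{align*}
\theta_1(t+1)-\theta_1(t) &= \frac{r}{1+r}\,\tau\,\lambda_s^t\,(1-\lambda_s),\\
\theta_i(t+1)-\theta_i(t) &= -\frac{1}{1+r}\,\tau\,\lambda_s^t\,(1-\lambda_s),\qquad i=2,\dots,n,
\end{align*}
and since $r>0$, $\tau>0$, $\lambda_s^t>0$ and $1-\lambda_s>0$, the first difference is positive and the second negative for all $t\ge0$, which is the stated monotonicity.

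\emph{Main obstacle.} There is no genuinely hard step: the only point requiring care is the algebraic reduction of \eqref{eq:lams} to $\bigl(\beta^2-(n-1)\bigr)/\bigl((\beta+1)(\beta+n-1)\bigr)$ — this is precisely what produces the threshold $\sqrt{n-1}$ — together with the two one-line inequalities placing $\lambda_s$ in $(-1,1)$. Everything else is direct substitution into the formulas of Lemma~\ref{lem:star}.
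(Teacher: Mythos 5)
Your proof is correct and follows exactly the route the paper intends for this (unproved) corollary: substitute the closed forms of Lemma~\ref{lem:star}, note that $\lambda_s=\bigl(\beta^2-(n-1)\bigr)/\bigl((\beta+1)(\beta+n-1)\bigr)\in(-1,1)$ so that $\lambda_s^t\to0$, and read off the sign of $\lambda_s$ and the monotonicity of $\theta_1(t)$, $\theta_i(t)$ from the factor $\lambda_s^t(1-\lambda_s)$. The algebraic reduction of \eqref{eq:lams} and the difference computations are all verified correct, so there is nothing to add.
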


\subsection{Weighted activity level}

Let us first, consider the WAL setting, i.e., $G=F$ as in \eqref{eq:Fstar}.
Define the functions of $\beta$:
\begin{align}
\delta_1(\beta) &=  \frac{n \beta + 2(n-1)}{(n-1)(\beta+1)^2}, \label{eq:delta1}\\
\delta_2(\beta) &=  \frac{\beta(\beta+1)}{(\beta+n-1)} \, \delta_1(\beta), \label{eq:delta2}\\
\delta_3(\beta) &= \beta \, \delta_1(\beta).\label{eq:delta3}
\end{align}
Such functions are shown in Figures \ref{fig:star_a5}-\ref{fig:star_a20} for $n=5$ and $n=20$, respectively. The following result holds.
\begin{theorem}
\label{th:star_a}
System  \eqref{eq:modelm1},\eqref{eq:modelm3}, with $G=F$ given by \eqref{eq:Fstar} and initial condition \eqref{eq:theta0}, exhibits the following asymptotic behaviors.
\begin{itemize}
\item[i)] For $\beta > \sqrt{n-1}$,
\begin{itemize}
\item[-] if $\tau < \delta_1(\beta)$, then $\ai = \1$; \vspace*{2mm}
\item[-] if $\tau > \delta_2(\beta)$, then $\ai = 0$; \vspace*{2mm}
\item[-] if $\delta_1(\beta) \leq \tau \leq \delta_2(\beta)$, then $a(t)=a(0)$, $\forall t \geq 1$.
\end{itemize}
\item[ii)] For  $1 \leq \beta \leq \sqrt{n-1}$, if
\begin{equation}
\label{eq:fractal}
\left\lceil
\log_{\lambda_s^2} \frac{1}{\lambda_s} \left( \frac{1}{\tau} \frac{r+1}{\beta+1} - r \right)
\right\rceil
\geq
\left\lceil
\log_{\lambda_s^2} \left( \frac{\beta}{\tau} \frac{r+1}{\beta+1} - r \right)
\right\rceil
\end{equation}
then $\ai=\1$. Otherwise, $\ai = 0$.
\item[iii)] For $\beta <1$,
\begin{itemize}
\item[-] if $\tau \leq \delta_3(\beta)$, then $\ai = \1$; \vspace*{2mm}
\item[-] if $\tau > \delta_1(\beta)$, then $\ai = 0$; \vspace*{2mm}
\item[-] if $\delta_3(\beta)< \tau \leq \delta_1(\beta)$, then $a(t)$ oscillates indefinitely between $[1,\ 0, \dots,0]'$ and $[0,\ 1, \dots,1]'$.
\end{itemize}
\end{itemize}
\end{theorem}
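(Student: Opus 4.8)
The plan is to exploit the star's symmetry to reduce the dynamics to a two-dimensional system, and then to track the finitely many possible action configurations using the explicit threshold trajectories of Lemma~\ref{lem:star}. Exactly as in that lemma's proof, the initial condition \eqref{eq:theta0} forces $\theta_2(t)=\cdots=\theta_n(t)$ and $a_2(t)=\cdots=a_n(t)$ for all $t$, so the state is captured by the pair $(a_1(t),a_2(t))\in\{0,1\}^2$ together with the scalars $\theta_1(t),\theta_2(t)$ of \eqref{eq:ths1}-\eqref{eq:ths2}. Since $G=F$, the activity level $p(t+1)=Fa(t)$ depends only on the current state: it equals $\bigl(\tfrac{\beta}{\beta+n-1},\tfrac{1}{\beta+1}\bigr)$ in state $(1,0)$ and $\bigl(\tfrac{n-1}{\beta+n-1},\tfrac{\beta}{\beta+1}\bigr)$ in state $(0,1)$, while $(1,1)$ and $(0,0)$ are absorbing. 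Evaluating \eqref{eq:amodel} at $t=1$ gives $a(1)=\1$ when $\tau\le\min\{\tfrac{\beta}{n-1},\tfrac1\beta\}$, $a(1)=0$ when $\tau>\max\{\tfrac{\beta}{n-1},\tfrac1\beta\}$, and $a(1)=[0,1,\dots,1]'$ otherwise; the crossover $\tfrac{\beta}{n-1}=\tfrac1\beta$ occurs at $\beta=\sqrt{n-1}$, which by \eqref{eq:lams} is also where $\lambda_s$ changes sign, explaining the three-way split.

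For item~i), $\beta>\sqrt{n-1}$ gives $\lambda_s>0$, so by Corollary~\ref{cor:star} the sequence $\theta_1(t)$ increases and $\theta_2(t)$ decreases to the common limit $\theta_i^\infty=\tfrac{r}{1+r}\tau$. While the state stays at $(1,0)$ the activity level is constant, so each comparison between $p_i$ and $\theta_i(t)$ can flip at most once; and since for $\beta>\sqrt{n-1}$ the windows in which $a_1$ and $a_2$ could flip are disjoint, the state leaves $(1,0)$ directly into an absorbing configuration. It never leaves iff $\tfrac{1}{\beta+1}\le\theta_i^\infty\le\tfrac{\beta}{\beta+n-1}$, which rearranges to $\delta_1(\beta)\le\tau\le\delta_2(\beta)$; if $\tau<\delta_1(\beta)$ then $a_2$ eventually turns on and the state jumps to $\1$, while if $\tau>\delta_2(\beta)$ then $a_1$ turns off and it jumps to $0$ (in each subcase $a(1)$ is checked to be consistent, using $\tfrac1\beta<\delta_1(\beta)$ and $\delta_2(\beta)<\tfrac{\beta}{n-1}$, which hold for $\beta>\sqrt{n-1}$).

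For items~ii) and~iii), $\beta\le\sqrt{n-1}$ gives $\lambda_s\le0$, so the thresholds oscillate, but their even- and odd-indexed subsequences are monotone. In the intermediate range, where $a(1)=[0,1,\dots,1]'$, the action tries to lock into the $2$-cycle $[1,0,\dots,0]'\leftrightarrow[0,1,\dots,1]'$, and I would enumerate the four ways this cycle can break --- at an even step $a_2$ may turn on or $a_1$ may turn off, at an odd step $a_1$ may turn on or $a_2$ may turn off --- rewriting each break via \eqref{eq:ths1}-\eqref{eq:ths2} as a comparison between $|\lambda_s|^{t+1}$ and an explicit threshold. Comparing those thresholds against $0$ and $1$ shows that an ``upward'' break (leading to $\1$) becomes available iff $\tau<\delta_3(\beta)$ and a ``downward'' break (leading to $0$) iff $\tau>\delta_1(\beta)$ (the function $\delta_2(\beta)$ marks a second, never-decisive upward route from state $(1,0)$). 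For $\beta<1$ one has $\delta_2<\delta_3<\delta_1$, so an upward and a downward break are never simultaneously available and item~iii) follows at once; for $1\le\beta\le\sqrt{n-1}$ one has $\delta_2\le\delta_1\le\delta_3$, and item~ii) follows in the same way except on the window $\delta_1(\beta)<\tau\le\delta_3(\beta)$, where both breaks are pending.

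On that window the outcome is decided by which break occurs first. Setting $A=\tfrac1\tau\tfrac{r+1}{\beta+1}-r$ and $B=\tfrac\beta\tau\tfrac{r+1}{\beta+1}-r$, the upward break fires at the first even $t\ge2$ with $|\lambda_s|^t\le B$, namely $t=2\lceil\log_{\lambda_s^2}B\rceil$, and the downward break at the first odd $t\ge3$ with $|\lambda_s|^t<-A$, namely $t=2\lceil\log_{\lambda_s^2}\tfrac1{\lambda_s}A\rceil+1$; since these two candidate times have opposite parity, ``upward no later than downward'' is precisely condition \eqref{eq:fractal}, which then yields $\ai=\1$, its failure yielding $\ai=0$. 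I expect this timing comparison to be the main obstacle: one must pin down, for each parity separately, the exact first integer step at which the exponentially shrinking gap between the relevant $p_i$ and $\theta_i(t)$ changes sign, handle the sub-ranges where this already occurs at $t=2$ or $t=3$ (so that the ceilings saturate and \eqref{eq:fractal} is read with the natural conventions on its logarithms), check that the boundary values $\tau\in\{\delta_1,\delta_2,\delta_3\}$ land on the claimed side, and confirm that \eqref{eq:fractal}, evaluated outside the above window, stays consistent with the one-step behaviours $a(1)=\1$ for $\tau\le\tfrac{\beta}{n-1}$ and $a(1)=0$ for $\tau>\tfrac1\beta$.
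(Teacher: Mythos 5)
Your proposal is correct and follows essentially the same route as the paper's own proof: symmetry reduction to $(\theta_1,\theta_2,a_1,a_2)$, one-step classification of $a(1)$, the monotone-threshold/limit comparison giving the window $\delta_1(\beta)\leq\tau\leq\delta_2(\beta)$ for $\beta>\sqrt{n-1}$, and, for $\beta\leq\sqrt{n-1}$, the analysis of the $2$-cycle with parity-wise first-violation times yielding \eqref{eq:fractal} and the persistent-oscillation window $\delta_3(\beta)<\tau\leq\delta_1(\beta)$ when $\beta<1$; your remark on the never-decisive $\delta_2$-governed route is consistent with (indeed slightly more careful than) the paper's claim that $\theta_1(t)>\beta/(\beta+n-1)$ for all $t\geq1$. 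The only slip is the blanket statement that the intermediate range of $\tau$ gives $a(1)=[0,1,\dots,1]'$ --- for $\beta>\sqrt{n-1}$ that range gives $a(1)=a(0)$ --- but your item i) already treats the persistent state $(1,0)$ correctly, so nothing substantive is affected.
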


\begin{figure}[tb]
\centering
\psfrag{beta}[t]{$\beta$}
\psfrag{tau}{$\tau$}
\psfrag{d1}{$\delta_1(\beta)$}
\psfrag{d2}{$\delta_2(\beta)$}
\psfrag{d3}{$\delta_3(\beta)$}
\psfrag{titulo}{}
\includegraphics[width=0.65\columnwidth]{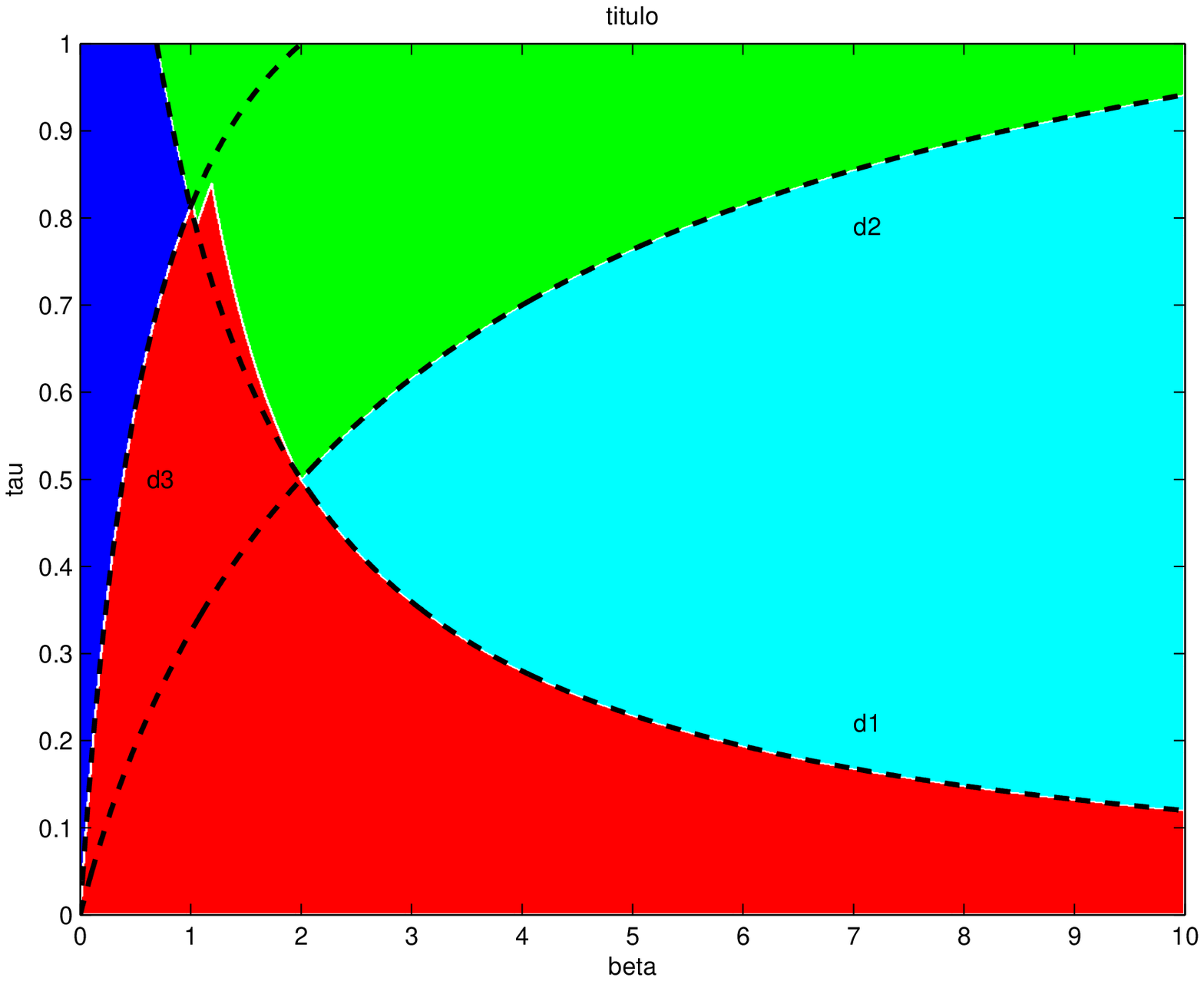}
\caption{Star graph, WAL setting, $n=5$.}
\label{fig:star_a5}
\end{figure}
\begin{figure}[h]
\centering
\psfrag{beta}[t]{$\beta$}
\psfrag{tau}{$\tau$}
\psfrag{titulo}{}
\psfrag{d1}{$\delta_1(\beta)$}
\psfrag{d2}{$\delta_2(\beta)$}
\psfrag{d3}{$\delta_3(\beta)$}
\includegraphics[width=0.65\columnwidth]{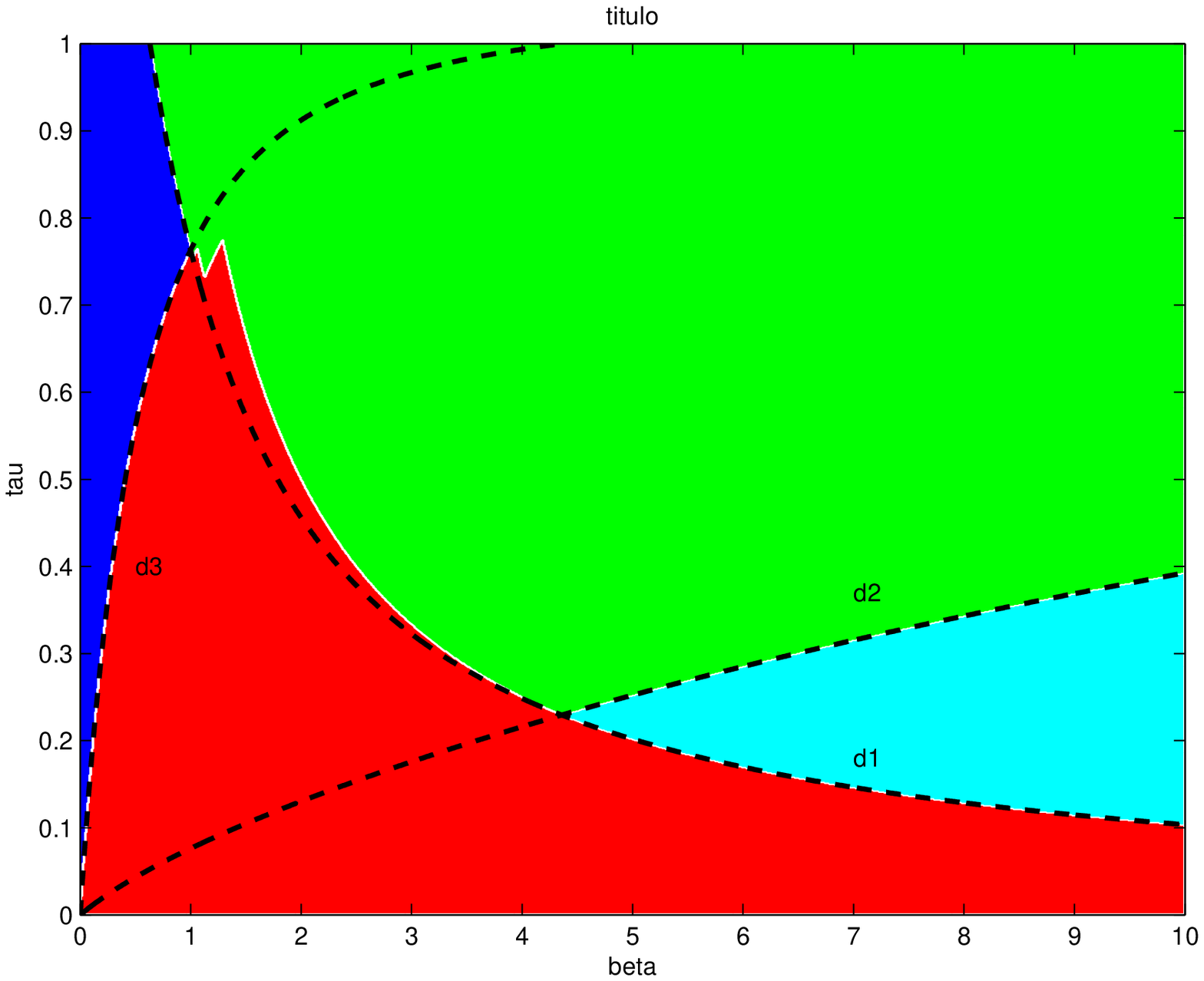}
\caption{Star graph, WAL setting, $n=20$.}
\label{fig:star_a20}
\end{figure}

\begin{proof}
By using the same argument as in Lemma \ref{lem:star}, one has $\theta_2(t)=\theta_3(t)=\dots=\theta_n(t)$, and, being $a(0)$ as in \eqref{eq:theta0}, also 
$a_2(t)=a_3(t)=\dots=a_n(t)$, $\forall t\geq 0$. Hence, in the sequel we will refer only to $\theta_2(t)$ and $a_2(t)$.
\\
i) Since $p(1)=Ga(0)$, one has
\begin{align*}
    p_1(1) &= \frac{\beta}{\beta+n-1},  \vspace*{1mm}\\
    p_2(1) &=  \frac{1}{\beta+1} .
\end{align*}
Therefore $a(1)=\1$ if and only if the following conditions are satisfied
\begin{align}
    \frac{\beta}{\beta+n-1} & \geq \theta_1(1) = \frac{n-1}{\beta+n-1} \,\tau, \label{eq:c1s}\vspace*{1mm} \\
     \frac{1}{\beta+1} & \geq \theta_2(1) =  \frac{\beta}{\beta+1} \, \tau, \label{eq:c2s}
\end{align}
which are equivalent to 
$$
\tau \leq \min \left\{ \frac{1}{\beta},\frac{\beta}{n-1} \right\}.
$$
Conversely, $a(1)=0$ if and only if 
$$
\tau > \max \left\{ \frac{1}{\beta},\frac{\beta}{n-1} \right\}.
$$
Clearly, if $\frac{1}{\beta} < \tau \leq \frac{\beta}{n-1}$, one has $a(1)=a(0)$. This can occur only if $\beta > \sqrt{n-1}$. Then, according to Corollary \ref{cor:star}, one has that $\theta_1(t)$ is monotonically increasing, while $\theta_2(t)$ is decreasing. Therefore, one will have $a(t)=a(0)$ until either $p_1(t)<\theta_1(t)$ or $p_2(t)\geq \theta_2(t)$. From~\eqref{eq:cor0star} and \eqref{eq:ths}, both conditions will never occur if the following inequalities hold
\begin{align}
  \frac{\beta}{\beta+n-1}&\geq \frac{(n-1)(\beta+1)}{n\beta +2(n-1)} \, \tau, \label{eq:condivastar} \vspace*{1mm} \\
  \frac{1}{\beta+1} &\leq \frac{(n-1)(\beta+1)}{n\beta +2(n-1)} \, \tau, \label{eq:condivbstar}
\end{align}
which correspond to $\delta_1(\beta) \leq \tau \leq \delta_2(\beta)$.  If \eqref{eq:condivbstar} is violated, i.e. $\tau < \delta_1(\beta)$, one eventually has $a(\bar{t})=\1$ for some $\bar{t}$. Similarly, if  \eqref{eq:condivastar} is violated, i.e. $\tau > \delta_2(\beta)$, one will have $a(\bar{t})=0$ for some $\bar{t}$.
\\
ii) Let $1 \leq \beta \leq \sqrt{n-1}$. From the discussion in item i), it remains to analyze the situation in which $\frac{\beta}{n-1} < \tau \leq \frac{1}{\beta}$.
By comparing with \eqref{eq:c1s}-\eqref{eq:c2s}, this assumption implies $a_1(1)=0$, $a_2(1)=1$ and $p(2)=Ga(1)$ so that
\begin{align*}
    p_1(2) &= \frac{n-1}{\beta+n-1},  \vspace*{1mm}\\
    p_2(2) &=  \frac{\beta}{\beta+1} .
\end{align*}
Through straightforward manipulations, it is easy to show that for all $\beta \leq \sqrt{n-1}$ it holds
\begin{equation}
\label{eq:th1at2}
\theta_1(2)=\left(
\frac{\beta(n-1)}{(\beta+n-1)^2} + \frac{\beta(n-1)}{(\beta+n-1)(\beta+1)}
\right) \tau < \frac{n-1}{\beta+n-1}
\end{equation}
which leads to $a_1(2)=1$. On the other hand, if $\theta_2(2) > \frac{\beta}{\beta+1}$, one will have $a_2(2)=0$ and hence $a(2)=a(0)$.
Being $-1 < \lambda_s \leq 0$ (see Corollary \ref{cor:star}), from \eqref{eq:th1at2} and the violation of the inequality in \eqref{eq:c1s}, one has 
$$
\frac{\beta}{\beta+n-1} < \theta_1(t) <\frac{n-1}{\beta+n-1}
$$
for all $t\geq 1$.
This means that $a(t)$ will keep oscillating between $[1,\ 0, \dots,0]'$ and $[0,\ 1, \dots,1]'$, until one of the following conditions is violated
\begin{align}
\theta_2(t) &> \frac{\beta}{\beta+1} ~~~~\mbox{for even $t$}; \label{eq:th2even} \vspace*{1mm}\\
\theta_2(t) &\leq \frac{1}{\beta+1} ~~~~\mbox{for odd $t$}. \label{eq:th2odd} 
\end{align}
Since $\beta \geq1$ and $\theta_2(t)$ converges, it is apparent that both conditions cannot hold indefinitely: therefore, $a(t)$ will eventually be equal either to $\1$ or to $0$, depending on which condition is violated first. By using \eqref{eq:ths2}, it is easy to show that \eqref{eq:th2even} is violated for $t=2k$ if 
$$
k \geq \left\lceil
\log_{\lambda_s^2} \left( \frac{\beta}{\tau} \frac{r+1}{\beta+1} - r \right)
\right\rceil
$$
while \eqref{eq:th2odd} is violated for $t=2k+1$ if
$$
k \geq \left\lceil
\log_{\lambda_s^2} \frac{1}{\lambda_s} \left( \frac{1}{\tau} \frac{r+1}{\beta+1} - r \right)
\right\rceil
$$
which leads to condition \eqref{eq:fractal}.
\\
iii) If $\beta<1$, by following the same reasoning as in the previous item, one has that  \eqref{eq:th2even} and  \eqref{eq:th2odd} can hold simultaneously for all $t$, provided that
\begin{equation}
\label{eq:osc}
\frac{\beta}{\beta+1} < \lim_{t\rightarrow +\infty} \theta_2(t) \leq \frac{1}{\beta+1}.
\end{equation}
By using \eqref{eq:cor0star}, this corresponds to
$$
\delta_3(\beta) < \tau \leq \delta_1(\beta).
$$

Conversely, if the leftmost inequality in \eqref{eq:osc} is violated one has $\ai=\1$, while violation of the rightmost inequality in \eqref{eq:osc} leads to $\ai=0$. 
\end{proof}
\begin{figure}[tb]
\centering
\psfrag{beta}[t]{$\beta$}
\psfrag{tau}{$\tau$}
\psfrag{titulo}{}
\includegraphics[width=0.65\columnwidth]{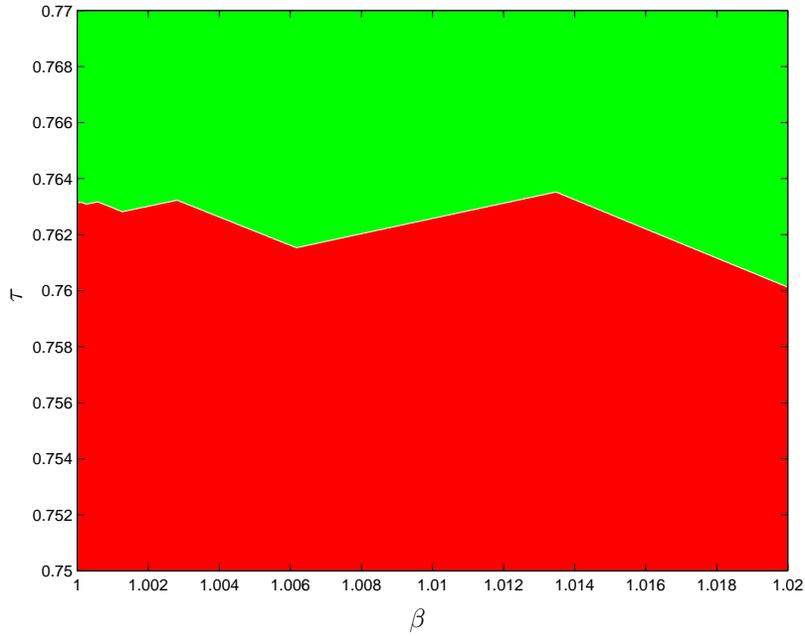}
\caption{A detail of the boundary defined by condition \eqref{eq:fractal}, for $n=20$.}
\label{fig:fractal}
\end{figure}

The next Corollary, which stems directly from the proof of Theorem \ref{th:star_a}, singles out the cases in which the asymptotic behavior is achieved in one step.
\begin{corollary}
System  \eqref{eq:modelm1},\eqref{eq:modelm3}, with $G=F$ given by \eqref{eq:Fstar} and initial condition \eqref{eq:theta0}, satisfies
\begin{itemize}
\item[-] $a(t)=\1$ for all $t \geq 1$, if and only if  $\tau \leq \min \left\{ \frac{1}{\beta},\frac{\beta}{n-1} \right\}$;
\item[-] $a(t)=0$ for all $t \geq 1$, if  and only if $\tau > \max \left\{ \frac{1}{\beta},\frac{\beta}{n-1} \right\}$.
\end{itemize}
\end{corollary}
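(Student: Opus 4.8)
The plan is to obtain this corollary as an immediate byproduct of the one-step analysis already performed inside the proof of Theorem~\ref{th:star_a}(i), combined with the observation that the configurations $a=\1$ and $a=0$ are invariant under the action update \eqref{eq:modelm3}, independently of the threshold trajectory.

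First I would recall the first action update computed in the proof of Theorem~\ref{th:star_a}(i). Starting from \eqref{eq:theta0} with $G=F$ as in \eqref{eq:Fstar}, one has $p_1(1)=\frac{\beta}{\beta+n-1}$, $p_i(1)=\frac{1}{\beta+1}$ for $i\geq2$, while $\theta_1(1)=\frac{n-1}{\beta+n-1}\tau$ and $\theta_i(1)=\frac{\beta}{\beta+1}\tau$. Comparing these componentwise via \eqref{eq:amodel} shows that $a(1)=\1$ exactly when \eqref{eq:c1s}--\eqref{eq:c2s} both hold, i.e. $\tau\leq\min\{1/\beta,\beta/(n-1)\}$, and $a(1)=0$ exactly when both inequalities in \eqref{eq:c1s}--\eqref{eq:c2s} are strictly reversed, i.e. $\tau>\max\{1/\beta,\beta/(n-1)\}$ (the intermediate range produces $a(1)=a(0)$, which is neither $\1$ nor $0$).

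Second I would establish the invariance. Since $\theta(t)=F^t\theta(0)$ and $F$ is row-stochastic with nonnegative entries, every entry of $\theta(t)$ is a convex combination of the entries of $\theta(0)$, hence lies in $[0,\tau]\subset[0,1)$ for all $t$. Consequently, if $a(t)=\1$ then $Ga(t)=\1$ (row-stochasticity of $G=F$), so $Ga(t)-F\theta(t)=\1-F\theta(t)>0$ componentwise and $a(t+1)=\phi(\1-F\theta(t))=\1$; while if $a(t)=0$ then $Ga(t)=0$ and, by the convention on $\phi$ at $p=0$ stated in the footnote, $a(t+1)=0$. Thus, once the trajectory hits $\1$ (respectively $0$) at $t=1$, it remains there for every $t\geq1$. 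Together with the previous step this yields sufficiency: $\tau\leq\min\{1/\beta,\beta/(n-1)\}$ forces $a(t)=\1$ for all $t\geq1$, and $\tau>\max\{1/\beta,\beta/(n-1)\}$ forces $a(t)=0$ for all $t\geq1$.

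Finally, necessity is trivial: if $a(t)=\1$ for all $t\geq1$ then in particular $a(1)=\1$, which by the first step is possible only if $\tau\leq\min\{1/\beta,\beta/(n-1)\}$; symmetrically $a(t)=0$ for all $t\geq1$ implies $a(1)=0$, hence $\tau>\max\{1/\beta,\beta/(n-1)\}$. I do not expect any genuine obstacle here, as the statement is essentially bookkeeping on top of the proof of Theorem~\ref{th:star_a}; the only points deserving a line of care are the strict bound $\theta_i(t)<1$, which guarantees that the comparison in \eqref{eq:amodel} is resolved in favour of $a_i=1$ whenever $p_i=1$, and the treatment of the degenerate case $p=0$ through the stated convention.
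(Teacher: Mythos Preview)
Your proposal is correct and follows exactly the route the paper intends: the corollary is stated there without proof, as ``stemming directly from the proof of Theorem~\ref{th:star_a}'', and what you do is precisely extract the one-step computation \eqref{eq:c1s}--\eqref{eq:c2s} and append the invariance of $\1$ and $0$ (already noted by the paper in Section~\ref{sec:formulation}). Your added care about $\theta_i(t)\in[0,\tau]\subset[0,1)$ and the $p=0$ convention is appropriate and makes the argument self-contained.
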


Figures \ref{fig:star_a5}-\ref{fig:star_a20} show the asymptotic behaviors achieved in scenario WAL for different values of  $\beta$ (relative confidence parameter) and $\tau$ (initial threshold of the ordinary agents), in the cases of $5$ and $20$ agents, respectively. 
The different regions correspond to: $\ai=\1$ (red); $\ai=0$ (green); $a(t)=a(0)$, $\forall t$ (light blue); $a(t)$ switching between $[1~0~\dots~0]'$ and $[0~1~\dots~1]'$ (blue). 
The dashed curves represent the functions $\delta_i(\beta)$ defined in \eqref{eq:delta1}-\eqref{eq:delta3}. Notice that $\delta_1(\beta)$ and $\delta_2(\beta)$ intersect at $\beta=\sqrt{n-1}$, while  $\delta_1(\beta)$ and $\delta_3(\beta)$ intersect at $\beta=1$: these are the values that distinguish the three different asymptotic scenarios described by Theorem \ref{th:star_a}.

It is worth remarking the particular structure of the boundary defined by \eqref{eq:fractal}, which separates the regions in which $\ai=\1$ and $\ai=0$, when $1 \leq \beta \leq \sqrt{n-1}$.
It can be observed that this curve changes its slope an infinite number of times in any interval $\beta\in(1,1+\epsilon)$, with $\epsilon$ arbitrarily small. A detail of this behavior is shown in Figure \ref{fig:fractal}, for $n=20$.

\subsection{Uniform activity level}

In the UAL setting with the star graph interconnection, one has $F$ as in \eqref{eq:Fstar} and $G$ given by  \eqref{eq:Gstar}.
Let $r$ and $\lambda_s$ be given by \eqref{eq:ths}-\eqref{eq:lams}, and define the function of $\beta$:
\begin{align}
\mu(\beta) &=  \frac{n\beta+2(n-1)}{2(n-1)(\beta+1)}. \label{eq:mu}
\end{align}
Before proving the main result, let us introduce the following technical lemma.
\begin{lemma}
\label{lem:starb}
Let $n \geq 2$ and  $0 < \beta < \sqrt{n-1}$. Then,
$$
\frac{\beta+n-1}{n(n-1)} < \frac{\beta+1}{2\beta}.
$$
\end{lemma}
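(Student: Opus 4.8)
The plan is to clear denominators. Since $\beta>0$ and $n\geq 2$, every quantity appearing in the statement is strictly positive, so the claimed inequality $\frac{\beta+n-1}{n(n-1)}<\frac{\beta+1}{2\beta}$ is equivalent to $2\beta(\beta+n-1)<n(n-1)(\beta+1)$, and I would establish this polynomial inequality directly by a short estimate rather than by locating roots.

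The key observation is that the hypothesis $\beta<\sqrt{n-1}$ is the same as $\beta^2<n-1$. First I would expand the left-hand side as $2\beta(\beta+n-1)=2\beta^2+2\beta(n-1)$ and replace the quadratic term using the strict bound $2\beta^2<2(n-1)$; this gives $2\beta(\beta+n-1)<2(n-1)+2\beta(n-1)=2(n-1)(1+\beta)$. Next, since $n\geq 2$ implies $2\leq n$, one has $2(n-1)(1+\beta)\leq n(n-1)(1+\beta)$. Chaining the two steps yields $2\beta(\beta+n-1)<n(n-1)(1+\beta)$, and dividing both sides by the positive quantity $2\beta n(n-1)$ returns exactly the assertion.

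There is essentially no obstacle here: the proof is a two-line estimate, and the only point deserving a moment's care is that the overall strict inequality is inherited from the single strict step $\beta^2<n-1$ (the companion bound $2\leq n$ being non-strict, with equality at $n=2$, which does no harm). If a more systematic presentation were preferred, one could instead rearrange the target into $q(\beta):=2\beta^2-(n-1)(n-2)\beta-n(n-1)<0$, note that $q$ is a strictly convex quadratic with $q(0)=-n(n-1)<0$ and $q(\sqrt{n-1})=-(n-1)(n-2)\bigl(1+\sqrt{n-1}\bigr)\leq 0$, and conclude that $q$ is strictly negative on the whole interval $(0,\sqrt{n-1})$ by convexity; but the direct estimate above is shorter, and is the route I would take.
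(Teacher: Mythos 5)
Your proof is correct, and it takes a mildly different route from the paper's. You clear denominators first (legitimate, since $2\beta>0$ and $n(n-1)>0$) and then establish $2\beta(\beta+n-1)<n(n-1)(\beta+1)$ by the two estimates $2\beta^2<2(n-1)$ (from $\beta<\sqrt{n-1}$) and $2(n-1)(1+\beta)\leq n(n-1)(1+\beta)$ (from $n\geq 2$), with strictness carried by the first step; every step checks out, and your optional quadratic/convexity variant is also sound. The paper instead works with the fractions directly, chaining
$$
\frac{\beta+n-1}{n(n-1)} < \frac{\sqrt{n-1}+n-1}{n(n-1)} = \frac{1+\sqrt{n-1}}{n\sqrt{n-1}} \leq \frac{1+\sqrt{n-1}}{2\sqrt{n-1}} < \frac{\beta+1}{2\beta},
$$
where the last step uses that $\frac{\beta+1}{2\beta}$ is strictly decreasing on $0<\beta<\sqrt{n-1}$. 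Both arguments consume exactly the same two hypotheses ($\beta<\sqrt{n-1}$ and $n\geq 2$); the paper's version stays in fractional form and invokes a small monotonicity observation, which makes it a one-line display, while yours avoids square roots and monotonicity altogether at the cost of a cross-multiplication, and is arguably the more transparent of the two. No gap either way.
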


\begin{proof}
One has
$$
\frac{\beta+n-1}{n(n-1)} < \frac{\sqrt{n-1}+n-1}{n(n-1)} = \frac{1+\sqrt{n-1}}{n\sqrt{n-1}} \leq \frac{1+\sqrt{n-1}}{2\sqrt{n-1}} < \frac{\beta+1}{2\beta}
$$
where the latter inequality comes from the fact that $ \frac{\beta+1}{2\beta}$ is a strictly decreasing function of $\beta$ in the interval $0 < \beta < \sqrt{n-1}$.
\end{proof}

\begin{theorem}
\label{th:star_b}
System  \eqref{eq:modelm1},\eqref{eq:modelm3}, with $F$ and $G$ given by \eqref{eq:Fstar} and \eqref{eq:Gstar}, respectively, and initial condition \eqref{eq:theta0}, exhibits the following asymptotic behaviors.
\begin{itemize}
\item[i)] For $\beta \geq \sqrt{n-1}$,
if
\begin{equation}
\label{eq:fractal_b}
\left\lceil
\log_{\lambda_s} \left( 1-  \frac{r+1}{r \, n\,  \tau } \right)
\right\rceil
\geq
\left\lceil
\log_{\lambda_s} \left( \frac{1+r}{2 \tau}  - r \right)
\right\rceil
\end{equation}
then $\ai=\1$. Otherwise, $\ai = 0$.
\item[ii)] For $\beta < \sqrt{n-1}$,
\begin{itemize}
\item[-] if $\tau < \mu(\beta)$, then $\ai = \1$; \vspace*{2mm}
\item[-] if $\tau > \mu(\beta)$, then $\ai = 0$; \vspace*{2mm}
\item[-] if $\tau = \mu(\beta)$, then $a(t)$ oscillates indefinitely between $[1,\ 0, \dots,0]'$ and $[0,\ 1, \dots,1]'$.
\end{itemize}
\end{itemize}
\end{theorem}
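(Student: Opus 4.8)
The plan is to reduce the analysis to two dimensions, exactly as in Lemma~\ref{lem:star} and Theorem~\ref{th:star_a}: with $\theta(0)=[0,\tau,\dots,\tau]'$ and $a(0)=[1,0,\dots,0]'$ one has $\theta_2(t)=\cdots=\theta_n(t)$ and $a_2(t)=\cdots=a_n(t)$ for all $t$, so the trajectory is determined by $(\theta_1(t),\theta_2(t))$ (known in closed form from Lemma~\ref{lem:star}) together with $(a_1(t),a_2(t))$, and only four action configurations can occur: $\mathbf 1$, $0$, $v_A:=[1,0,\dots,0]'=a(0)$ and $v_B:=[0,1,\dots,1]'$. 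From~\eqref{eq:Gstar} one gets the induced activity levels $Gv_A=[\frac{1}{n},\frac{1}{2},\dots,\frac{1}{2}]'$, $Gv_B=[\frac{n-1}{n},\frac{1}{2},\dots,\frac{1}{2}]'$, $G\mathbf 1=\mathbf 1$ and $G\cdot 0=0$; in particular the leaf activity level equals $\frac{1}{2}$ both at $v_A$ and at $v_B$. Hence, recalling $a(t+1)=\phi(Ga(t)-\theta(t+1))$, from either $v_A$ or $v_B$ the leaves are active at the next step iff $\theta_2(t)\le\frac{1}{2}$, while the centre is active iff $\theta_1(t)\le p_1(t)$ with $p_1(t)\in\{\frac{1}{n},\frac{n-1}{n}\}$. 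Since $\theta_1(t)=\frac{r}{1+r}\tau(1-\lambda_s^t)$ satisfies $\theta_1(t)\le\frac{r}{1+r}\tau<\frac{r}{1+r}\le\frac{n-1}{n}$ whenever $\lambda_s^t\ge0$ (in particular whenever $\lambda_s\ge0$, or $t$ is even), from $v_B$ the centre is always reactivated; therefore the configuration can leave $\{v_A,v_B\}$ only towards the equilibrium $\mathbf 1$ (when $\theta_2(t)\le\frac{1}{2}$ while the centre is on) or towards the equilibrium $0$ (when $\theta_1(t)>\frac{1}{n}$ while the leaves are off). Finally, from $\theta_1(1)=\frac{n-1}{\beta+n-1}\tau$, $\theta_2(1)=\frac{\beta}{\beta+1}\tau$ and $p(1)=Ga(0)$ one reads off $a_1(1)=1\iff\tau\le\frac{\beta+n-1}{n(n-1)}$ and $a_2(1)=1\iff\tau\le\frac{\beta+1}{2\beta}$.

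For part~i), $\beta\ge\sqrt{n-1}$ gives $\lambda_s\ge0$, so $\theta_1(t)$ is nondecreasing and $\theta_2(t)$ nonincreasing with common limit $\frac{r}{1+r}\tau$ (Corollary~\ref{cor:star}). I would introduce the (possibly infinite) first-crossing times $T_1=\min\{t\ge1:\theta_1(t)>\frac{1}{n}\}$ and $T_2=\min\{t\ge1:\theta_2(t)\le\frac{1}{2}\}$, show by induction that $a(t)=v_A$ for $1\le t<\min(T_1,T_2)$, and then compare $T_1$ and $T_2$: if $T_2\le T_1$ the configuration reaches $\mathbf 1$ in at most one further step (either directly from $v_A$, or via $v_B$ when $T_1=T_2$), whereas if $T_1<T_2$ it reaches $0$ exactly at $t=T_1$; the cases $T_1=\infty$ (which forces $\mathbf 1$) and $T_2=\infty$ (which forces $0$) are included, and, since $\theta$ is monotone, no perpetual oscillation is possible. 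Solving $\theta_1(t)>\frac{1}{n}$ and $\theta_2(t)\le\frac{1}{2}$ for $t$ with the closed forms of Lemma~\ref{lem:star} gives $T_1=\lceil\log_{\lambda_s}(1-\frac{r+1}{r\,n\,\tau})\rceil$ and $T_2=\lceil\log_{\lambda_s}(\frac{1+r}{2\tau}-r)\rceil$ (with the convention that a nonpositive argument means $+\infty$, up to the measure-zero set of $\tau$ for which the logarithm is an integer), so that $T_2\le T_1$ is precisely~\eqref{eq:fractal_b}; the degenerate case $\beta=\sqrt{n-1}$ ($\lambda_s=0$, $\theta(t)$ constant for $t\ge1$) is checked directly.

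For part~ii), $\beta<\sqrt{n-1}$ gives $-1<\lambda_s<0$, and Lemma~\ref{lem:starb} orders the one-step thresholds as $\frac{\beta+n-1}{n(n-1)}<\frac{\beta+1}{2\beta}$; hence $a(1)=\mathbf 1$ for $\tau\le\frac{\beta+n-1}{n(n-1)}$, $a(1)=0$ for $\tau>\frac{\beta+1}{2\beta}$, and $a(1)=v_B$ for $\tau$ in between, a range that one checks to contain $\mu(\beta)=\frac{1+r}{2r}$. In that range I would show by induction that, as long as the configuration stays in $\{v_A,v_B\}$, it alternates $v_A$ at even times with $v_B$ at odd times; since $\lambda_s<0$, for $t\ge1$ one has $\theta_2(t)<\frac{r}{1+r}\tau<\theta_1(t)$ at odd $t$ and $\theta_2(t)>\frac{r}{1+r}\tau$ at even $t$. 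Comparing $\frac{r}{1+r}\tau$ with $\frac{1}{2}$, equivalently $\tau$ with $\mu(\beta)$, then yields the trichotomy: if $\tau<\mu(\beta)$, at the first even time with $\theta_2(t)\le\frac{1}{2}$ the configuration jumps from $v_B$ to $\mathbf 1$ (it cannot reach $0$ earlier, since the centre is on at every $v_B$); if $\tau>\mu(\beta)$, at the first odd time with $\theta_2(t)>\frac{1}{2}$ it jumps from $v_A$ to $0$ (here one uses $\theta_1(t)>\frac{r}{1+r}\tau>\frac{1}{2}\ge\frac{1}{n}$ at odd $t$ to keep the centre off); and if $\tau=\mu(\beta)$ one has $\theta_2(t)>\frac{1}{2}$ for every even $t$ and $\theta_2(t)<\frac{1}{2}$ for every odd $t$, strictly since $\lambda_s\ne0$, so the oscillation between $[1,0,\dots,0]'$ and $[0,1,\dots,1]'$ persists forever.

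The main obstacle is the bookkeeping in these oscillatory regimes: one must certify that the configuration does not leave $\{v_A,v_B\}$ ``the wrong way'' before the predicted instant. This is guaranteed by the auxiliary inequalities $\theta_1(t)<\frac{n-1}{n}$ (for all the $t$ at which it is needed, i.e. whenever $\lambda_s^t\ge0$) and, in part~ii) with $\tau>\mu(\beta)$, $\theta_1(t)>\frac{1}{n}$ at odd $t$, together with a careful treatment of the boundary values of $\tau$ and of the integer rounding inside $T_1$ and $T_2$. These are exactly the ``long but straightforward manipulations'' underlying~\eqref{eq:fractal_b}.
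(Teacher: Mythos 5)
Your proposal is correct and follows essentially the same route as the paper's proof: the symmetry reduction to $(\theta_1,\theta_2)$ and the four configurations, the one-step analysis via $p(1)=Ga(0)$ with Lemma~\ref{lem:starb} separating the two regimes at $\beta=\sqrt{n-1}$, the comparison of the two monotone crossing times (your $T_1,T_2$ are the paper's $t_0,t_\1$, with the tie $T_1=T_2$ resolved through $\theta_1(t)\le\frac{r}{1+r}\tau\le\frac{n-1}{n}$, i.e.\ the paper's \eqref{eq:asyth1}) yielding \eqref{eq:fractal_b}, and the even/odd oscillation analysis with the comparison of $\frac{r}{1+r}\tau$ against $\frac12$ (equivalently $\tau$ against $\mu(\beta)$) for $\beta<\sqrt{n-1}$. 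The only additions are harmless bookkeeping details (e.g.\ checking that $\mu(\beta)$ lies in the window where $a(1)=[0,1,\dots,1]'$), which the paper leaves implicit.
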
 

\begin{figure}[tb]
\centering
\psfrag{beta}[t]{$\beta$}
\psfrag{tau}{$\tau$}
\psfrag{d1}{$\mu(\beta)$}
\psfrag{eq}[c]{\eqref{eq:fractal_b}}
\psfrag{titulo}{}
\includegraphics[width=0.65\columnwidth]{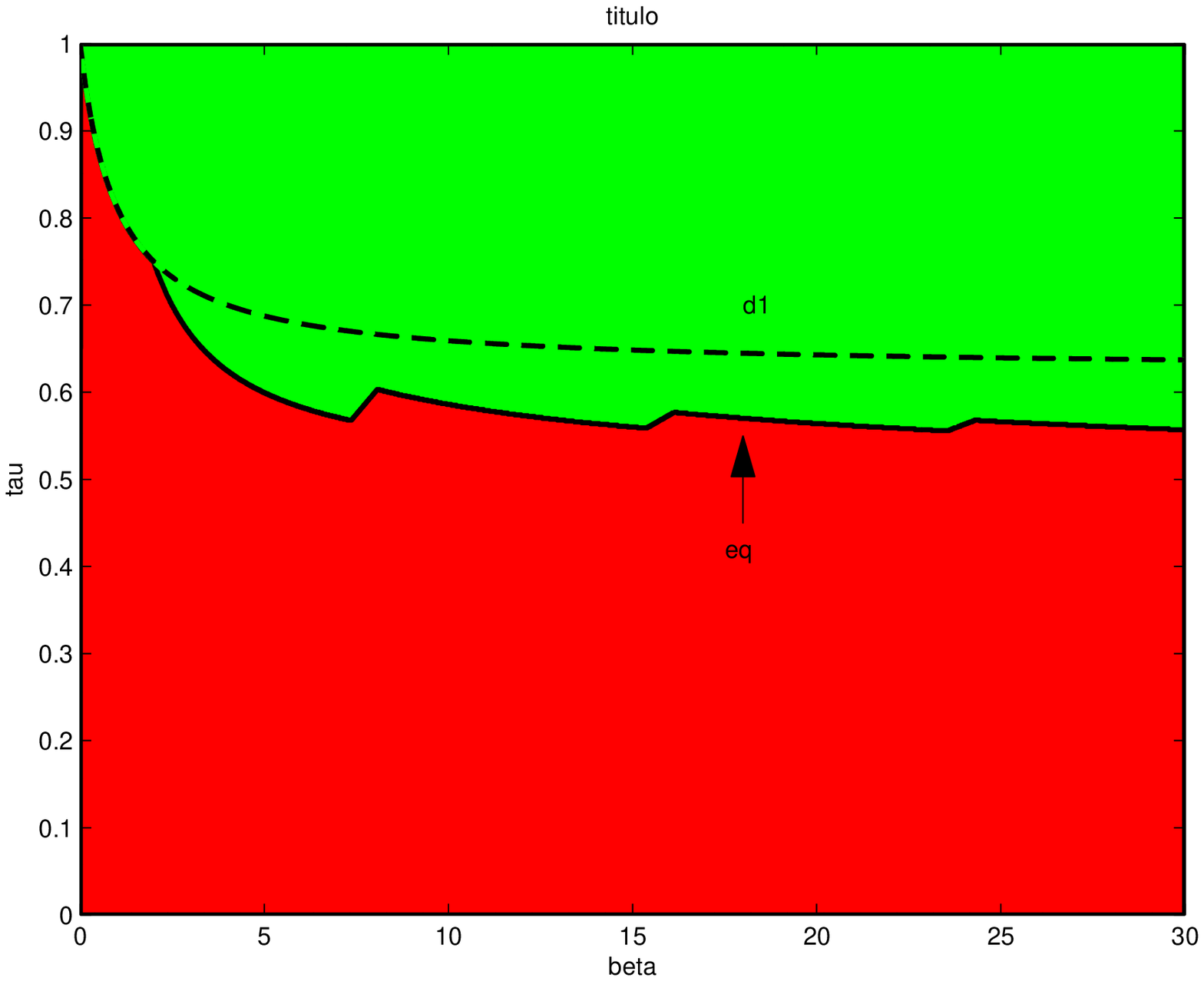}
\caption{Star graph, UAL setting, $n=5$.}
\label{fig:star_b5}
\end{figure}
\begin{figure}[tb]
\centering
\psfrag{beta}[t]{$\beta$}
\psfrag{tau}{$\tau$}
\psfrag{titulo}{}
\psfrag{d1}{$\mu(\beta)$}
\psfrag{eq}[c]{\eqref{eq:fractal_b}}
\includegraphics[width=0.65\columnwidth]{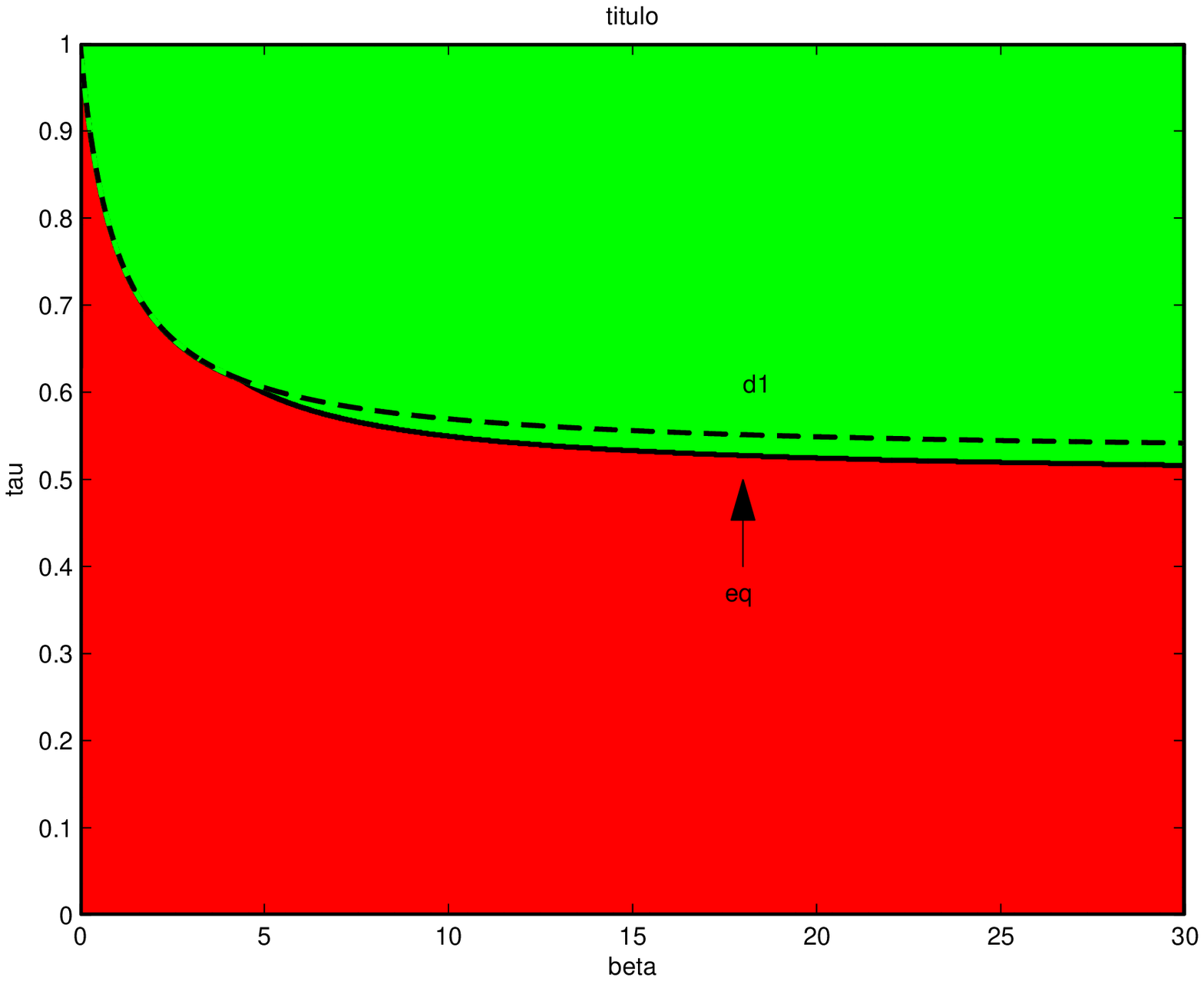}
\caption{Star graph, UAL setting, $n=20$.}
\label{fig:star_b20}
\end{figure}

\begin{proof}
As in Theorem \ref{th:star_a}, it holds $\theta_2(t)=\theta_3(t)=\dots=\theta_n(t)$ and $a_2(t)=a_3(t)=\dots=a_n(t)$, $\forall t\geq 0$. Hence, we can refer only to $\theta_2(t)$ and $a_2(t)$.
\\
i) Being $p(1)=Ga(0)$, one has $p_1(1)=\frac{1}{n}$ and $p_2(1)=\frac{1}{2}$ .
Therefore $a(1)=\1$ if and only if the following conditions are satisfied
\begin{align*}
    \frac{1}{n} & \geq \theta_1(1) = \frac{n-1}{\beta+n-1} \,\tau, \vspace*{1mm} \\
     \frac{1}{2} & \geq \theta_2(1) =  \frac{\beta}{\beta+1} \, \tau, 
\end{align*}
which are equivalent to 
$$
\tau \leq \min \left\{ \frac{\beta+n-1}{n(n-1)},\frac{\beta+1}{2\beta} \right\}.
$$
Conversely, $a(1)=0$ if and only if 
$$
\tau > \max \left\{ \frac{\beta+n-1}{n(n-1)},\frac{\beta+1}{2\beta} \right\}.
$$
If $\frac{\beta+1}{2\beta} < \tau \leq \frac{\beta+n-1}{n(n-1)}$, one has $a(1)=a(0)$. 
Notice that, due to Lemma \ref{lem:starb}, such a $\tau$ exists only if $\beta \geq \sqrt{n-1}$. Hence, according to Corollary \ref{cor:star}, the thresholds  $\theta_1(t)$  and $\theta_2(t)$ are, respectively, monotonically increasing and decreasing. Therefore, one will have $a(t)=a(0)$ until either $p_1(t)<\theta_1(t)$ or $p_2(t)\geq \theta_2(t)$, which, according to \eqref{eq:ths1}-\eqref{eq:ths2}, correspond to
\begin{align}
  \frac{1}{n}&< \displaystyle{\frac{r}{1+r} \tau \left(1- \lambda_s^t \right),}  \label{eq:condivastarb} \vspace*{1mm} \\
  \frac{1}{2} &\geq  \displaystyle{\frac{1}{1+r} \tau \left(r+ \lambda_s^t \right),}  \label{eq:condivbstarb}
\end{align}
where $r$ and $\lambda_s$ are given by \eqref{eq:ths} and \eqref{eq:lams}.
Through straightforward manipulations, \eqref{eq:condivastarb}-\eqref{eq:condivbstarb} lead respectively to
\begin{align*}
  t&\geq t_0 \triangleq \displaystyle{ \left\lceil
\log_{\lambda_s} \left( 1-  \frac{r+1}{r \, n\,  \tau } \right)
\right\rceil}, \vspace*{2mm} \\
  t&\geq t_\1 \triangleq  \displaystyle{\left\lceil
\log_{\lambda_s} \left( \frac{1+r}{2 \tau}  - r \right)
\right\rceil}.  
\end{align*}
Clearly, if $t_0 > t_\1$ one has $\ai=\1$, while if $t_0 < t_\1$, $\ai=0$. In case that $t_0 = t_\1$, one gets $a(t_0)=a(t_\1)=[0~1~1~\dots~1]'$, which leads to $p_1(t_0+1)=\frac{n-1}{n}$ and $p_2(t_0+1)=\frac{1}{2}$. Since $\theta_2(t)$ is decreasing, $\theta_2(t_0+1)<\theta_2(1)\leq \frac{1}{2}$. On the other hand $\theta_1(t)$ is increasing and hence 
\begin{equation}
\label{eq:asyth1}
\theta_1(t_0+1)<\lim_{t\rightarrow +\infty} \theta_1(t) = \frac{r}{1+r} \tau = \frac{(n-1)(\beta+1)}{n\beta+2(n-1)} \tau \leq \frac{n-1}{n}
\end{equation}
where it is easy to show that the latter inequality holds for all $n \geq 2$, being $\tau \leq 1$. Hence $a(t_0+1)=\1$ and $\ai=\1$. This proves condition \eqref{eq:fractal_b}.
\\
ii) Let $\beta < \sqrt{n-1}$. Then, due to Lemma \ref{lem:starb},
$\frac{\beta+n-1}{n(n-1)} < \frac{\beta+1}{2\beta}$. Therefore, following the reasoning in item i), $a(1)$ cannot be equal to $a(0)$. In particular, $a(t)$ will keep oscillating between $[1,\ 0, \dots,0]'$ and $[0,\ 1, \dots,1]'$ if the following conditions are satisfied
\begin{align}
&\theta_1(t) > \frac{1}{n}~,~~\theta_2(t) \leq \frac{1}{2}~ ~~~~\mbox{for odd $t$}; \label{eq:th2oddb} \vspace*{1mm}\\
&\theta_1(t) \leq \frac{n-1}{n}~,~~\theta_2(t) > \frac{1}{2}~ ~~~~\mbox{for even $t$}. \label{eq:th2evenb} 
\end{align}
Corollary \ref{cor:star} states that both $\theta_1(t)$ and $\theta_2(t)$ converge asymptotically to $\frac{r}{1+r} \tau$. Moreover, due to \eqref{eq:ths1}, $\theta_1(t)$ is increasing for all even $t$. Thanks to \eqref{eq:asyth1}, one can conclude that the first condition in \eqref{eq:th2evenb} will hold indefinitely. On the other hand, the second condition in \eqref{eq:th2oddb} will be eventually violated if and only if 
$$
\frac{r}{1+r} \tau > \frac{1}{2}
$$
which corresponds to $\tau > \mu(\beta)$. This leads to $\ai=0$.
Conversely, if $\tau < \mu(\beta)$, either the first condition in \eqref{eq:th2oddb} or the second condition in \eqref{eq:th2evenb} will eventually be violated, thus leading to $\ai=\1$. Finally, when $\tau=\mu(\beta)$, all conditions \eqref{eq:th2oddb}-\eqref{eq:th2evenb} will hold indefinitely, thus leading to oscillations of $a(t)$ between $[1,\ 0, \dots,0]'$ and $[0,\ 1, \dots,1]'$.
\end{proof}

\begin{corollary}
System  \eqref{eq:modelm1},\eqref{eq:modelm3}, with $F$ and $G$ given by \eqref{eq:Fstar} and \eqref{eq:Gstar}, respectively, and initial condition \eqref{eq:theta0}, satisfies
\begin{itemize}
\item[-] $a(t)=\1$ for all $t \geq 1$, if and only if  $\tau \leq \min \left\{ \frac{\beta+n-1}{n(n-1)},\frac{\beta+1}{2\beta} \right\}$;
\item[-] $a(t)=0$ for all $t \geq 1$, if  and only if $\tau > \max  \left\{ \frac{\beta+n-1}{n(n-1)},\frac{\beta+1}{2\beta} \right\}$.
\end{itemize}
\end{corollary}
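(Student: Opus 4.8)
The plan is to read this Corollary off from the very first step of the proof of Theorem~\ref{th:star_b}, together with the (already noted) fact that $a=\1$ and $a=0$ are absorbing states of the action dynamics \eqref{eq:modelm1},\eqref{eq:modelm3}.

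First I would record the one-step computation performed in the proof of Theorem~\ref{th:star_b}: with $a(0)=[1,0,\dots,0]'$ and $G$ as in \eqref{eq:Gstar} one has $p_1(1)=\frac1n$ and $p_i(1)=\frac12$ for $i\ge 2$, while $\theta(1)=F\theta(0)$ gives $\theta_1(1)=\frac{n-1}{\beta+n-1}\tau$ and $\theta_i(1)=\frac{\beta}{\beta+1}\tau$. Comparing componentwise via \eqref{eq:amodel} yields $a_1(1)=1 \iff \tau\le\frac{\beta+n-1}{n(n-1)}$ and $a_i(1)=1 \iff \tau\le\frac{\beta+1}{2\beta}$. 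Hence $a(1)=\1$ exactly when $\tau\le\min\{\frac{\beta+n-1}{n(n-1)},\frac{\beta+1}{2\beta}\}$, and $a(1)=0$ (all components vanishing) exactly when both strict inequalities are reversed, i.e. $\tau>\max\{\frac{\beta+n-1}{n(n-1)},\frac{\beta+1}{2\beta}\}$.

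Second I would establish invariance of the two trivial patterns under the coupled dynamics, where $\theta(t)$ keeps evolving. Since $\theta(t)=F^t\theta(0)$ with $F$ row-stochastic and $\theta(0)=[0,\tau,\dots,\tau]'$, every entry of $\theta(t)$ stays in $[0,\tau]$, so $(F\theta(t))_i\le\tau<1$ for all $i,t$. Therefore, if $a(t_0)=\1$ then $G\1-F\theta(t_0)=\1-F\theta(t_0)>0$ componentwise, whence $a(t_0+1)=\1$, and by induction $a(t)=\1$ for all $t\ge t_0$. Likewise, if $a(t_0)=0$ then $p(t_0+1)=Ga(t_0)=0$, so by the sign convention adopted in Section~\ref{sec:formulation} ($p=0\Rightarrow a=0$) one gets $a(t_0+1)=0$, and again $a(t)=0$ for all $t\ge t_0$.

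Combining the two facts closes both equivalences. If $\tau\le\min\{\frac{\beta+n-1}{n(n-1)},\frac{\beta+1}{2\beta}\}$ then $a(1)=\1$ by the first step, hence $a(t)=\1$ for all $t\ge1$ by the second; symmetrically $\tau>\max\{\frac{\beta+n-1}{n(n-1)},\frac{\beta+1}{2\beta}\}$ forces $a(1)=0$ and hence $a(t)=0$ for all $t\ge1$. For the reverse implications one just specializes to $t=1$: $a(t)=\1$ for all $t\ge1$ gives $a(1)=\1$, equivalent to $\tau\le\min\{\cdot,\cdot\}$, and similarly for the all-zero pattern. There is essentially no hard step here; the only point requiring a little care is the invariance argument, where one must use that $\theta(t)$ remains confined to $[0,\tau]^n$ (so the radical agent can never drop below its activity level) and that the footnote convention disposes of the all-zero case independently of $\theta$.
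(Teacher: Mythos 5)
Your proposal is correct and follows essentially the same route as the paper: the conditions for $a(1)=\1$ and $a(1)=0$ are exactly the one-step comparisons $p_1(1)=\frac1n$ vs.\ $\theta_1(1)=\frac{n-1}{\beta+n-1}\tau$ and $p_2(1)=\frac12$ vs.\ $\theta_2(1)=\frac{\beta}{\beta+1}\tau$ computed at the start of the proof of Theorem~\ref{th:star_b}, and the persistence for $t\ge 1$ is the invariance of the equilibria $a=\1$ and $a=0$ already noted in Section~\ref{sec:formulation}. Your explicit verification of that invariance (using $\theta(t)\in[0,\tau]^n$ and the $p=0\Rightarrow a=0$ convention) is a sound, if slightly more detailed, rendering of what the paper leaves implicit.
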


Figures \ref{fig:star_b5}-\ref{fig:star_b20} show the asymptotic behaviors achieved in scenario UAL for different values of  $\beta$ and $\tau$, for $n=5$ and $n=20$, respectively. 
Colors have the same meaning as in Figures \ref{fig:star_a5}-\ref{fig:star_a20}.
The dashed line represents the function $\mu(\beta)$ defined in \eqref{eq:mu}, while the solid line corresponds to condition \eqref{eq:fractal_b} . 

As for the complete graph, also for the star graph it can be observed that the WAL scenario shows a wider variety of asymptotic behaviors with respect to the UAL scenario.
In particular, in the latter the initial activity pattern $a(0)$ is never maintained indefinitely and the persistent oscillations for $\beta<1$ occur only under the singular condition $\tau=\mu(\beta)$ (while in the WAL scenario they show up for the entire range of $\tau$ values). Notice that these oscillations are due to the shyness of the agents, which are less confident in their own opinion than in that of their neighbors, thus leading to persistent switchings between activity and inactivity. 
Moreover, it can be shown that in the UAL scenario, one has $\ai=\1$ whenever $\tau\leq\frac{1}{2}$, irrespectively of the number of agents $n$ and of the confidence parameter $\beta$. Conversely, in the WAL scenario, the region in which $\ai=\1$ tends to shrink as either $n$ or $\beta$ grow.

\section{Ring graph}
\label{sec:ring}

In this section we analyze the asymptotic behavior of system \eqref{eq:modelm1}-\eqref{eq:modelm3} when the graph has a ring structure.
In the ring graph, agent $i$ has as neighbors agents $i-1$ and $i+1$, with the convention $n+1=1$ (see Figure \ref{fig:ring2}). 
\begin{figure}[tb]
\centering
\psfrag{1}{$1$}
\psfrag{2}{$2$}
\psfrag{3}{$3$}
\psfrag{4}{$4$}
\psfrag{...}{$\dots$}
\psfrag{n}{$n$}
\includegraphics[width=0.4\columnwidth]{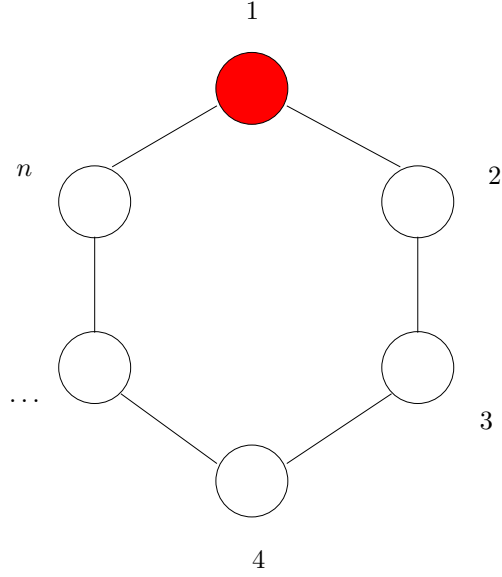}
\caption{Ring graph.}
\label{fig:ring2}
\end{figure}
We still assume that agent 1 is a radical while the others are ordinary, and we analyze the asymptotic behavior of the system starting from the initial condition \eqref{eq:theta0}.
The matrix $F$ is now given by
\begin{equation}
\label{eq:Fring}
F=
\begin{bmatrix}
\frac{\beta}{\beta+2} & \frac{1}{\beta+2} & 0 & \dots & 0 & \frac{1}{\beta+2} \\
\frac{1}{\beta+2} & \frac{\beta}{\beta+2}  & \frac{1}{\beta+2} & 0 & \dots & 0 \\
0 & \frac{1}{\beta+2} & \ddots & \ddots  & \dots  & \vdots \\
\vdots & & \ddots & \ddots & \ddots & \vdots \\
0 & & & \ddots & \ddots  & \frac{1}{\beta+2} \\
\frac{1}{\beta+2} & 0 & \dots & \dots & \frac{1}{\beta+2} & \frac{\beta}{\beta+2}  
\end{bmatrix}
\end{equation}
while matrix $G$ in scenario UAL is
\begin{equation}
\label{eq:Gring}
G=
\begin{bmatrix}
\frac{1}{3} & \frac{1}{3} & 0 & \dots & 0 & \frac{1}{3} \\
\frac{1}{3} & \frac{1}{3}  & \frac{1}{3} & 0 & \dots & 0 \\
0 & \frac{1}{3} & \ddots & \ddots  & \dots  & \vdots \\
\vdots & & \ddots & \ddots & \ddots & \vdots \\
0 & & & \ddots & \ddots  & \frac{1}{3} \\
\frac{1}{3} & 0 & \dots & \dots & \frac{1}{3} & \frac{1}{3} 
\end{bmatrix}
\end{equation}

In order to simplify the treatment, hereafter only the case in which $n$ is odd (i.e., the number of ordinary agents is even) will be considered. Moreover, let $h=\frac {n-1}{2}$.
The following technical result, relying on the properties of circulant matrices \cite{D79}, provides the analytic expression for the threshold evolution according to equation \eqref{eq:modelm1}.
\begin{lemma}
\label{lem:thresholdsring}
Consider the dynamic model~\eqref{eq:modelm1}, with matrix $F$ given by \eqref{eq:Fring}, $\theta(0) = [0,\ \tau, \dots,\tau]'$, $0<\tau < 1$ and odd $n$. Then, 
\begin{equation}
\label{eq:thring}
\theta(t) = \frac{n-1}n \tau \1 - \tau   \sum_{k=1}^{h} (\lambda_k)^t v_k,
\end{equation}
where
\begin{align}
\lambda_k &= \frac 1 {\beta+2} \left(\beta+2\cos\left(k \frac {2 \pi}n\right) \right), \label{eq:lamkring}\\
v_k &= \frac 2 n \left[1\ \cos\left(k \frac {2 \pi} n\right)\ \dots \  \cos\left((n-1) k \frac {2 \pi} n\right) \right]', \label{eq:vkring}
\end{align}
for $k=1, \dots, h$, and  $h=\frac {n-1}{2}$.
\end{lemma}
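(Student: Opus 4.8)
The plan is to exploit the circulant structure of $F$ in \eqref{eq:Fring}. Writing $P$ for the $n\times n$ cyclic permutation matrix, one has $F = \frac{1}{\beta+2}\left(\beta I_n + P + P^{n-1}\right)$, so $F$ is a symmetric circulant matrix and, by the spectral theory of circulant matrices \cite{D79}, it is diagonalized by the discrete Fourier basis. Its eigenvalues are $\mu_k = \frac{1}{\beta+2}\left(\beta + \omega^k + \omega^{-k}\right) = \frac{1}{\beta+2}\left(\beta + 2\cos\left(k\frac{2\pi}{n}\right)\right)$, $k=0,\dots,n-1$, with $\omega = e^{\mathrm i 2\pi/n}$. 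Note that $\mu_0 = 1$ (consistent with $F$ being row-stochastic) and, since $n$ is odd, $\mu_k = \mu_{n-k} = \lambda_k$ for $k = 1,\dots,h$, so that $1$ is a simple eigenvalue.

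Next I would decompose the initial condition. Since $\theta(0) = [0,\tau,\dots,\tau]' = \tau\1 - \tau e_1$, where $e_1$ is the first canonical basis vector, and $F^t\1 = \1$, it suffices to compute $F^t e_1$. Because $F^t$ is again circulant with the same eigenvectors, its first column has entries $(F^t e_1)_i = \frac{1}{n}\sum_{k=0}^{n-1}\mu_k^t\,\omega^{(i-1)k}$. Grouping the term $k$ with the term $n-k$ and using $\mu_k = \mu_{n-k}$ together with $\omega^{(i-1)k} + \omega^{(i-1)(n-k)} = 2\cos\left((i-1)k\frac{2\pi}{n}\right)$, the imaginary parts cancel and one is left with $(F^t e_1)_i = \frac{1}{n} + \sum_{k=1}^{h}\lambda_k^t\,\frac{2}{n}\cos\left((i-1)k\frac{2\pi}{n}\right) = \frac{1}{n} + \sum_{k=1}^h \lambda_k^t (v_k)_i$, with $v_k$ as in \eqref{eq:vkring}.

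Finally I would combine the two steps: $\theta(t) = F^t\theta(0) = \tau\1 - \tau F^t e_1 = \tau\1 - \tau\left(\frac{1}{n}\1 + \sum_{k=1}^h\lambda_k^t v_k\right) = \frac{n-1}{n}\tau\1 - \tau\sum_{k=1}^h\lambda_k^t v_k$, which is \eqref{eq:thring}. The only mildly delicate point is the bookkeeping in the middle step: checking that pairing conjugate Fourier modes produces exactly the real cosine vectors $v_k$ with the normalization $\frac{2}{n}$, and that for odd $n$ no self-conjugate mode other than $k=0$ survives (this would fail for even $n$, which is why the statement restricts to odd $n$). An equivalent route, perhaps cleaner, is to build directly a real orthonormal eigenbasis of $\mathbb{R}^n$ consisting of $\1/\sqrt{n}$ and, for $k=1,\dots,h$, the pair $\sqrt{2/n}\,[\cos(jk\,2\pi/n)]_j$ and $\sqrt{2/n}\,[\sin(jk\,2\pi/n)]_j$, expand $e_1$ in it (the sine components vanish since their first entry is $\sin 0 = 0$), and apply $F^t$ termwise; the identities $\sum_{j=0}^{n-1}\cos^2(jk\,2\pi/n) = n/2$ and the orthogonality of discrete cosines are all that is needed.
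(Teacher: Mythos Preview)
Your proof is correct and follows essentially the same approach as the paper: both exploit the circulant structure of $F$, diagonalize via the discrete Fourier basis, and pair the conjugate modes $k$ and $n-k$ (using $\mu_k=\mu_{n-k}$ for odd $n$) to obtain the real cosine vectors $v_k$. The only cosmetic difference is that you explicitly decompose $\theta(0)=\tau\1-\tau e_1$ and compute $F^t e_1$, whereas the paper computes the Fourier coefficients $u_k^*\theta(0)=-\tau/\sqrt{n}$ directly; these are the same calculation in slightly different packaging.
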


\begin{proof}
Matrix $F$ in \eqref{eq:Fring} is a circulant matrix, i.e. it has the form
$$
F=\left(
\begin{array}{ccccc}
f_0 & f_1 & f_2 & \dots  & f_{n-1} \\
f_{n-1} &  f_0 & f_1 &  \dots  & f_{n-2} \\
\vdots & \vdots & \vdots & \ddots & \vdots \\
f_1 & f_2 & f_3 & \dots  & f_0 \\
\end{array}
\right),
$$
where $f_0 = \frac \beta {\beta +2}$, $f_1=f_{n-1}= \frac 1 {\beta+2}$ and $f_i= 0$, for $i \neq 0,1,n-1$. The eignevalues and eigenvectors of a circulant matrix can be computed analytically (e.g., see \cite{D79}). Let $\omega_k = e^{k \frac {2\pi} n j}$, where $j=\sqrt{-1}$. The eigenvalues of $F$ are given by
$$
\lambda_k = \sum_{i=0}^{n-1} f_i \omega_k^{i} =  \frac 1 {\beta +2} \left(\beta + 2\cos\left(k \frac {2\pi} n \right) \right), \quad k=0,\dots,n-1.
$$
The eigenvectors of $F$ are given by
\begin{equation}
\label{eq:uk}
u_k = \frac 1 {\sqrt{n}} [1\ \omega_k\ \dots\ \omega_k^{n-1}]', \quad k=0,\dots,n-1. 
\end{equation}
and form an orthonormal basis. A circulant matrix can always be diagonalized. Let $U = [u_0\ u_1\ \dots\ u_{n-1}]$, then
$
F = U \Lambda U^*,
$
where $\Lambda = \text{diag}(\lambda_0,\dots,\lambda_{n-1})$ and $U^*$ is the conjugate transpose of $U$ (Th. 3.2.1 in \cite{D79}). Observing that $\lambda_0 = 1$, $\lambda_k = \lambda_{n-k}$, $k=1,\dots,h$, and $u_0= \frac 1 {\sqrt{n}} \1$, the evolution of the thresholds $\theta(t)$ can be written as
\begin{align*}
\theta(t) &= U \Lambda^t U^* \theta(0) = \frac {n-1} n \tau \1 + \sum_{k=1}^{h} \lambda_k^t (u_k u_k^* + u_{n-k} u_{n-k}^*) \theta(0) \\
& = \frac {n-1} n \tau \1  - \frac {\tau} {\sqrt{n}} \sum_{k=1}^{h} \lambda_k^t (u_k + u_{n-k}),
\end{align*}
where the last equality comes from $u_k^*\theta(0)=-\frac {\tau} {\sqrt{n}}$, since $u_k^*\1=0$, $k=1,\dots, n-1$. The thesis \eqref{eq:thring} easily follows by noting that the entries of $u_k$ in \eqref{eq:uk} are such that $\omega_k^l + \omega_{n-k}^l = 2 \cos \left (l \frac {2 \pi} n \right)$, for $l=0,\dots,n-1$ and $k=1,\dots,h$.
\end{proof}

From \eqref{eq:thring} and \eqref{eq:vkring} it can be checked that $\theta_{i+1}(t)=\theta_{n-i+1}(t)$, for $i=1,\dots,h$ and for all $t$. Hence, due to the structure of matrices $F$ and $G$ resulting from the ring interconnection, one will have also $a_{i+1}(t)=a_{n-i+1}(t)$ and $p_{i+1}(t)=p_{n-i+1}(t)$, for $i=1,\dots,h$. The following lemma gives other useful properties that will be instrumental to proving the main result. 
\begin{lemma}
\label{lem:thresholdsring2}
Consider the same assumptions as in Lemma \ref{lem:thresholdsring} and let $\beta \geq 1$. Then for all $t \geq 1$ the following statements hold
\begin{itemize}
\item[i)] $\theta_i(t) \leq \theta_{i+1}(t)$, for $i=1,\dots,h$;
\item[ii)] $\theta_i(t-1) \leq \theta_{i+1}(t)$, for $i=1,\dots,h$.
\end{itemize}
\end{lemma}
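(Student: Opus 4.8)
The plan is to prove i) by induction on $t$, and then obtain ii) as a direct corollary of i). Throughout I would use the componentwise update $\theta_i(t+1)=\frac{1}{\beta+2}\bigl(\beta\theta_i(t)+\theta_{i-1}(t)+\theta_{i+1}(t)\bigr)$ (indices mod $n$) together with the reflection symmetry $\theta_{i+1}(t)=\theta_{n-i+1}(t)$ recorded right after Lemma~\ref{lem:thresholdsring}; with $h=\frac{n-1}{2}$ this gives in particular $\theta_0(t)=\theta_n(t)=\theta_2(t)$ and $\theta_{h+2}(t)=\theta_{h+1}(t)$ for every $t$, so it suffices to track the ``half chain'' $\theta_1(t),\theta_2(t),\dots,\theta_{h+1}(t)$, statement i) being exactly the assertion that this chain is nondecreasing.

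For i) I would prove, by induction on $t\ge 0$, that $\theta_1(t)\le\theta_2(t)\le\cdots\le\theta_{h+1}(t)$; the base case $t=0$ is immediate from $\theta(0)=[0,\tau,\dots,\tau]'$. For the step I would subtract consecutive updated components. For an interior index $2\le i\le h-1$ one gets $\theta_{i+1}(t+1)-\theta_i(t+1)=\frac{1}{\beta+2}\bigl[\beta(\theta_{i+1}(t)-\theta_i(t))+(\theta_i(t)-\theta_{i-1}(t))+(\theta_{i+2}(t)-\theta_{i+1}(t))\bigr]\ge 0$ by the inductive hypothesis; for $i=h$ the last term vanishes since $\theta_{h+2}(t)=\theta_{h+1}(t)$, leaving two nonnegative terms; and for $i=1$, substituting $\theta_0(t)=\theta_2(t)$, one obtains $\theta_2(t+1)-\theta_1(t+1)=\frac{1}{\beta+2}\bigl[(\beta-1)(\theta_2(t)-\theta_1(t))+(\theta_3(t)-\theta_2(t))\bigr]$, which is nonnegative precisely because $\beta\ge 1$. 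The small cases $n=3,5$ are covered by the same formulas, with some index ranges empty.

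For ii) I claim the inequality follows from i) alone. Fix $t\ge 1$ and $i\in\{1,\dots,h\}$; since $2\le i+1\le h+1$ the update for component $i+1$ involves no wrap-around and reads $\theta_{i+1}(t)=\frac{1}{\beta+2}\bigl(\beta\theta_{i+1}(t-1)+\theta_i(t-1)+\theta_{i+2}(t-1)\bigr)$, with the convention $\theta_{h+2}:=\theta_{h+1}$. By i) applied at time $t-1$ (the case $t-1=0$ being trivial, all ordinary thresholds being $\tau$ and $\theta_1(0)=0$) one has $\theta_i(t-1)\le\theta_{i+1}(t-1)$ and $\theta_i(t-1)\le\theta_{i+2}(t-1)$, the latter from the monotone chain $\theta_i\le\theta_{i+1}\le\theta_{i+2}$ together with $\theta_{h+2}=\theta_{h+1}$. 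Hence the three terms in the bracket are at least $\beta\theta_i(t-1)$, $\theta_i(t-1)$, $\theta_i(t-1)$ respectively (using $\beta>0$), so $\theta_{i+1}(t)\ge\theta_i(t-1)$, which is ii).

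The step I expect to be the real obstacle is the boundary index $i=1$ in part i): the wrap-around forces the substitution $\theta_0=\theta_n=\theta_2$, which folds the spatial chain back on itself and produces the coefficient $\beta-1$ in front of $\theta_2(t)-\theta_1(t)$; this is the one place where the hypothesis $\beta\ge 1$ is genuinely used, and it must be settled before anything else. Everything else — the interior indices, the index $i=h$ where the reflection contributes a vanishing term, and the whole of part ii) — is a routine monotone-dynamics argument once part i) is available.
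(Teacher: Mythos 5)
Your proposal is correct and follows essentially the same route as the paper: part i) by induction on $t$, treating the boundary index $i=1$ separately (where the wrap-around symmetry $\theta_n=\theta_2$ yields the coefficient $\beta-1$ and the hypothesis $\beta\geq 1$ is used) and the index $i=h$ via $\theta_{h+2}=\theta_{h+1}$, and part ii) from i) by bounding each term of the convex-combination update by $\theta_i(t-1)$. Your explicit handling of the base case $t-1=0$ in ii) is a minor (welcome) addition, but otherwise the argument coincides with the paper's proof.
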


\begin{proof}
i) By \eqref{eq:theta0}, the statement is true at $t=0$. Now, let the statement hold at time $t$. Then, recalling that $\theta_{h+1}(t)=\theta_{h+2}(t)$ for all $t$, one has for $i=2,\dots,h$ 
$$
\begin{array}{rcl}
\theta_{i+1}(t+1)-\theta_i(t+1)&=&
\displaystyle{\frac{1}{\beta+2} \theta_i(t) + \frac{\beta}{\beta+2} \theta_{i+1}(t) + \frac{1}{\beta+2} \theta_{i+2}(t)}
\\
&& \displaystyle{
-\left(   
\frac{1}{\beta+2} \theta_{i-1}(t) + \frac{\beta}{\beta+2} \theta_{i}(t) + \frac{1}{\beta+2} \theta_{i+1}(t)
\right) \geq 0,}
\end{array}
$$
while, being $\beta \geq 1$, 
$$
\begin{array}{rcl}
\theta_{2}(t+1)-\theta_1(t+1)&=&
\displaystyle{
\frac{1}{\beta+2} \theta_1(t) + \frac{\beta}{\beta+2} \theta_{2}(t) + \frac{1}{\beta+2} \theta_{3}(t)
- \frac{\beta}{\beta+2} \theta_1(t)}\\
&& \displaystyle{- \frac{2}{\beta+2} \theta_{2}(t) 
~\geq~
\frac{\beta-1}{\beta+2} \left(
\theta_{2}(t) - \theta_{1}(t) 
\right) 
~\geq~ 0.}
\end{array}
$$
Therefore, the claim holds by induction.
\\
ii) 
By applying the result in item i), for all $i=1,\dots,h$, one has
$$
\begin{array}{rcl}
\theta_{i+1}(t)&=& \displaystyle{\frac{1}{\beta+2} \theta_i(t-1) + \frac{\beta}{\beta+2} \theta_{i+1} (t-1) + \frac{1}{\beta+2} \theta_{i+2}(t-1)}
\\
&\geq& \displaystyle{ \left( \frac{1}{\beta+2} + \frac{\beta}{\beta+2} + \frac{1}{\beta+2} \right) \theta_i(t-1) ~=~ \theta_i(t-1).}
\end{array}
$$
\end{proof}

\subsection{Weighted activity level}

Now, we are ready to characterize the asymptotic behavior in the WAL scenario, for the ring graph interconnection. In order to streamline the presentation, only the case $\beta \geq 1$ will be treated.

\begin{theorem}
\label{th:ring_a}
Let $\mathcal{E}$ be a ring graph, $F=G$ given by \eqref{eq:Fring} as in the  WAL setting, $n$ odd and $\beta \geq 1$.
Define
\begin{eqnarray}
q_j(\beta)&=&\inf_t \left\{
\frac{n-1}{n}-\frac{2}{n} \sum_{k=1}^{h} (\lambda_k)^t \cos\left( j k \frac{2\pi}{n} \right) \right\}
~,~~~j=1,\dots,h,
\label{eq:qibeta}
\\
q_0(\beta) &=& \frac{n-1}{n\,\beta}
\end{eqnarray}
where 
$\lambda_k$ is given by \eqref{eq:lamkring}, and set $\hbar=\lfloor \frac{h}{2} \rfloor$.
Then, system  \eqref{eq:modelm1},\eqref{eq:modelm3}, with initial condition \eqref{eq:theta0}, exhibits the following asymptotic behaviors:
\begin{itemize}
\item[i)] If $\tau \leq \frac{n}{n-1} \frac{1}{(\beta+2)}$, then $\ai=\1$.
\item[ii)]
If 
\begin{equation}
\label{eq:ringcondalfai}
\frac{1}{(\beta+2)q_{j+1}(\beta)} \leq \tau \leq \frac{1}{(\beta+2)q_{j}(\beta)},
\end{equation}
for some $j \in \{0,1,\dots,\hbar\}$, then 
$\ai=\alpha_j$, given by 
\begin{equation}
\label{eq:alphai}
\begin{array}{rcccl}
\alpha_j = \big[ & \hspace*{-3mm}
\underbrace{ ~1 ~ \dots ~ 1 ~} & 0 ~ \dots ~  0  &  \underbrace{ ~1 ~ \dots  ~ 1 ~} &\hspace*{-3mm} \big]',~~j~=1,\dots,\hbar.
\\
& \hspace*{-3mm} j+1 && j &
\end{array}
\end{equation}
\item[iii)] 
If $ \tau > \max \left\{ \frac{1}{(\beta+2)q_{0}(\beta)} \, , \, \frac{1}{(\beta+2)q_{1}(\beta)} \right\} $, then $\ai=0$.
\end{itemize}
\end{theorem}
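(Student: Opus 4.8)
The plan is to reduce the $n$-dimensional dynamics to the evolution of a single scalar ``front'' variable $m_t$ and to show that this front propagates monotonically until it either fills the ring, stabilizes at an intermediate arc, or collapses.

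First I would use the symmetry $\theta_{i+1}(t)=\theta_{n-i+1}(t)$ noted after Lemma~\ref{lem:thresholdsring}, together with the ordering $\theta_1(t)\le\theta_2(t)\le\dots\le\theta_{h+1}(t)$ of Lemma~\ref{lem:thresholdsring2}(i), to prove by induction on $t$ the \emph{structural invariant}: $a(t)$ is either identically zero or equal to $\alpha_{m_t}$ for some $m_t\in\{0,\dots,h\}$ (with $\alpha_0=a(0)$, $\alpha_h=\1$). The induction step is the componentwise evaluation of $p(t+1)=Fa(t)$ when $a(t)=\alpha_m$: active agents with two active neighbours get $p_i(t{+}1)=1$; the boundary active agent $m{+}1$ gets $\frac{\beta+1}{\beta+2}$; the boundary inactive agent $m{+}2$ gets $\frac{1}{\beta+2}$; agent~$1$ alone (the case $m=0$) gets $\frac{\beta}{\beta+2}$; and deeper inactive agents get $0$. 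Since $\theta_i(t)\le\tau<1$ for all $i,t$, interior active agents stay active, so $m_{t+1}\in\{m_t-1,m_t,m_t+1\}$, the front advances from $m$ to $m{+}1$ exactly when $\theta_{m+2}(t{+}1)\le\frac{1}{\beta+2}$ (using the ordering to see that then also agent $m{+}1$ stays on), and the family of arcs can be left only by agent~$1$ switching off, which requires $m_t=0$ and $\theta_1(t)>\frac{\beta}{\beta+2}$.

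Second, I would prove monotonicity. From $\theta_1(t{+}1)-\theta_1(t)=\frac{2}{\beta+2}\bigl(\theta_2(t)-\theta_1(t)\bigr)\ge0$, the threshold $\theta_1$ is nondecreasing with supremum $\frac{n-1}{n}\tau$, so agent~$1$ survives forever iff $\frac{n-1}{n}\tau\le\frac{\beta}{\beta+2}$, i.e.\ $\tau\le\frac{1}{(\beta+2)q_0(\beta)}$, and otherwise it switches off in finite time, giving $\ai=0$. Moreover $\theta_i(t)\ge\theta_1(t)\ge\theta_1(1)=\frac{2\tau}{\beta+2}$ for every ordinary $i$ and $t\ge1$, hence $q_1(\beta)\ge\frac{2}{\beta+2}$ and $\frac{1}{(\beta+2)q_1(\beta)}\le\frac12<\frac{\beta+1}{\beta+2}$. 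Since the front can reach a level $\ge1$ only if $\theta_2(t)\le\frac{1}{\beta+2}$ at some $t$, i.e.\ only if $\tau\le\frac{1}{(\beta+2)q_1(\beta)}\le\frac12$, in that case $\theta_i(t)\le\tau<\frac{\beta+1}{\beta+2}$ for all $i$, so no active agent ever switches off and $m_t$ is nondecreasing; consequently either $m_t$ stays $0$ forever (and the outcome is decided by whether agent~$1$ survives) or $m_t$ increases monotonically.

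Third, I would identify the limiting level. Using Lemma~\ref{lem:thresholdsring2}(ii), $\theta_i(t-1)\le\theta_{i+1}(t)$, I would show by induction on $m$ that the front reaches level $m$ precisely at the first time $t$ at which $\theta_{m+1}(t)\le\frac{1}{\beta+2}$, the key point being that these ``dip times'' are strictly increasing in $m$, so the front is never overtaken by the propagating threshold perturbation. Hence the front reaches level $m$ iff $\inf_t\theta_{m+1}(t)=\tau\,q_m(\beta)\le\frac{1}{\beta+2}$, i.e.\ iff $\tau\le\frac{1}{(\beta+2)q_m(\beta)}$; since $q_m(\beta)$ is nondecreasing in $m$ (Lemma~\ref{lem:thresholdsring2}(i)) and $q_m(\beta)=\frac{n-1}{n}$ for $m>\hbar$ — because the components $\theta_{m+1}$ sufficiently far from agent~$1$ never dip below their common limit $\frac{n-1}{n}\tau$, a self-contained estimate on $\sum_{k=1}^h\lambda_k^t\cos\!\bigl(mk\frac{2\pi}{n}\bigr)$ — the final level is pinned down: $\tau\le\frac{n}{(n-1)(\beta+2)}$ sends the front all the way to $h$ (case~(i)); the nested intervals \eqref{eq:ringcondalfai} give stabilization at level $j\le\hbar$ (case~(ii)); and $\tau$ above both $\frac{1}{(\beta+2)q_0(\beta)}$ and $\frac{1}{(\beta+2)q_1(\beta)}$ keeps the front at $0$ while agent~$1$ eventually switches off, so $\ai=0$ (case~(iii)). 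I expect the main obstacle to be the ``no overtaking'' induction of this third step, where the alignment between the front position and the dip times of the thresholds — controlled by Lemma~\ref{lem:thresholdsring2}(ii) — must be tracked carefully; the bookkeeping of the second step (arcs are left only through agent~$1$, only from level~$0$, and never via receding once the front has moved) is the other place where care is needed.
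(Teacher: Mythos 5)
Your route is essentially the paper's: threshold ordering plus the explicit form of $G\alpha_j$ forces $a(t)\in\{0,\alpha_0,\dots,\alpha_h\}$, the advance of the front is governed by the dips of $\theta_{j+1}$ below $\frac{1}{\beta+2}$ (i.e.\ by $\tau q_j(\beta)$), Lemma~\ref{lem:thresholdsring2}(ii) guarantees the dip times are ordered so every intermediate switching occurs, the left inequality in \eqref{eq:ringcondalfai} blocks the next switching, and $q_0$ governs the survival of agent~1. Your ``first dip time'' induction is just a reorganization of the paper's backward chain from the attainment time $t_j^*$, and your deferral of the estimate $q_j(\beta)=\frac{n-1}{n}$ for $j>\hbar$ (and the finite-time attainment of the infimum for $j\le\hbar$) is on a par with the paper's ``tedious manipulations''.

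There is one concrete soft spot in your second step. You justify ``no active agent ever switches off'' in the regime $\tau\le\frac{1}{(\beta+2)q_1(\beta)}\le\frac12$ by $\theta_i(t)\le\tau<\frac{\beta+1}{\beta+2}$, but while the front is still at level $0$ agent~1's activity level is $\frac{\beta}{\beta+2}$, not $\frac{\beta+1}{\beta+2}$, and for $1\le\beta<2$ one has $\frac{\beta}{\beta+2}<\frac12$, so $\tau\le\frac12$ does not protect agent~1. The regime where this matters is nonempty: since $q_1(\beta)<\frac{n-1}{n}=q_0(1)$, for $\beta$ near $1$ one can have $\frac{1}{(\beta+2)q_0(\beta)}<\tau\le\frac{1}{(\beta+2)q_1(\beta)}$ (e.g.\ $n=5$, $\beta=1$, $\tau=0.48$: here $\frac{n-1}{n}\tau=0.384>\frac{\beta}{\beta+2}=\frac13$, so $\theta_1(t)$ does eventually exceed agent~1's level-$0$ activity value, and the claimed conclusion $\ai=\alpha_1$ hinges on the front advancing \emph{before} that crossing). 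The fix is short and uses only tools you already invoked: with $t_1$ the first time $\theta_2(t_1)\le\frac{1}{\beta+2}$, monotonicity of $\theta_1$ and $\theta_1(t)\le\theta_2(t)$ give $\theta_1(t)\le\theta_1(t_1)\le\theta_2(t_1)\le\frac{1}{\beta+2}\le\frac{\beta}{\beta+2}$ for all $t\le t_1$, so agent~1 survives until the front reaches level $1$, after which $p_1(t)=1$ and it can never switch off. You should add this race argument explicitly (note it is precisely the subtlety behind the $\max$ in item~iii)); with it, your proposal matches the paper's proof, and is if anything more explicit about why the arc configurations are never abandoned.
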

\begin{figure}[tb]
\centering
\psfrag{beta}[t]{$\beta$}
\psfrag{tau}{$\tau$}
\psfrag{q0}{$\frac{1}{(\beta+2)q_0(\beta)}$}
\psfrag{q1}{$\frac{1}{(\beta+2)q_1(\beta)}$}
\psfrag{q2}{$\frac{5}{4(\beta+2)}$}
\psfrag{titulo}{}
\includegraphics[width=0.65\columnwidth]{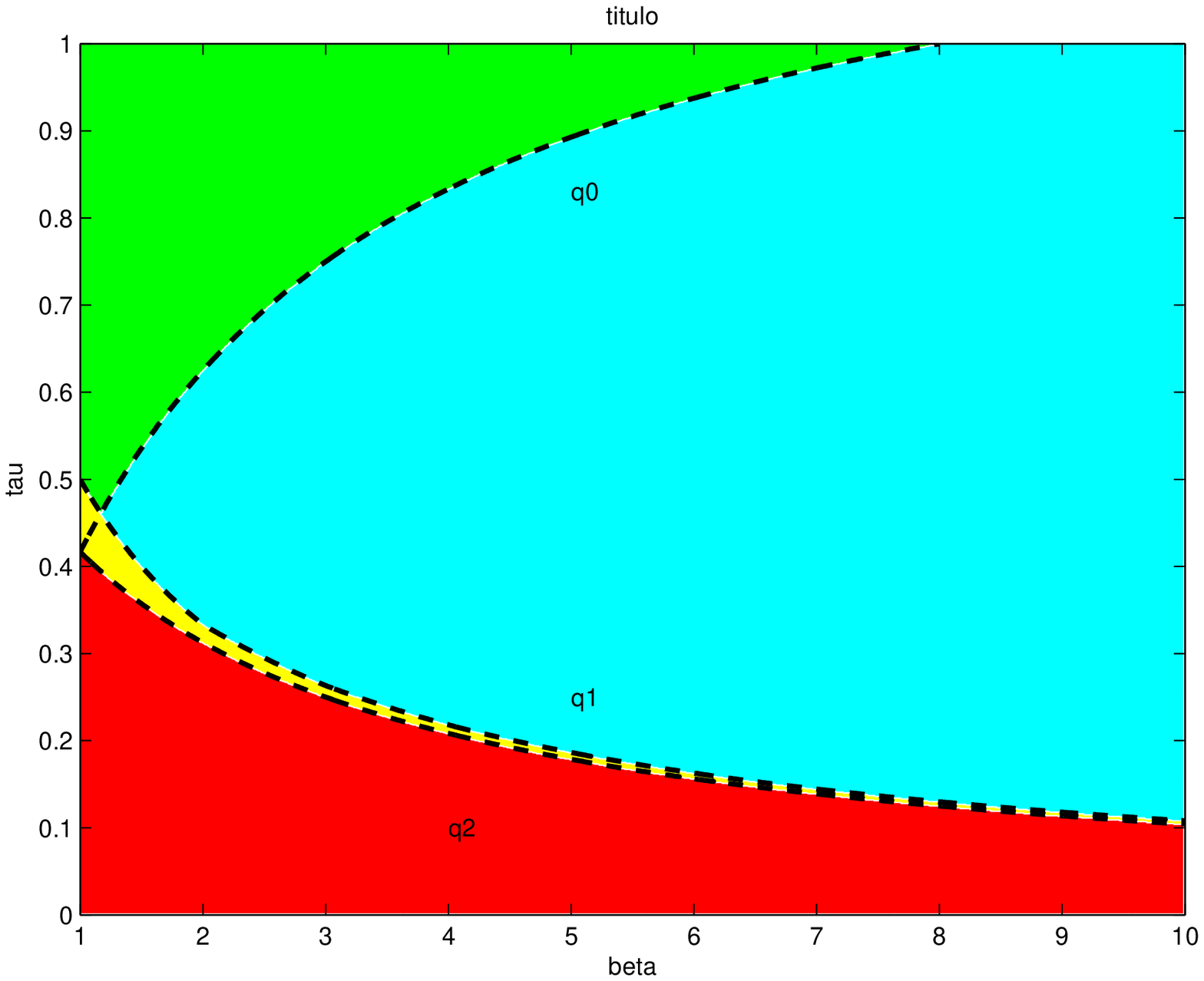}
\caption{Ring graph, WAL setting, $n=5$.}
\label{fig:ring_a5}
\end{figure}
\begin{figure}[tb]
\centering
\psfrag{beta}[t]{$\beta$}
\psfrag{tau}{$\tau$}
\psfrag{titulo}{}
\psfrag{q0}{$\frac{1}{(\beta+2)q_0(\beta)}$}
\psfrag{q1}{$\frac{1}{(\beta+2)q_1(\beta)}$}
\psfrag{qh}{$\frac{21}{20(\beta+2)}$}
\includegraphics[width=0.65\columnwidth]{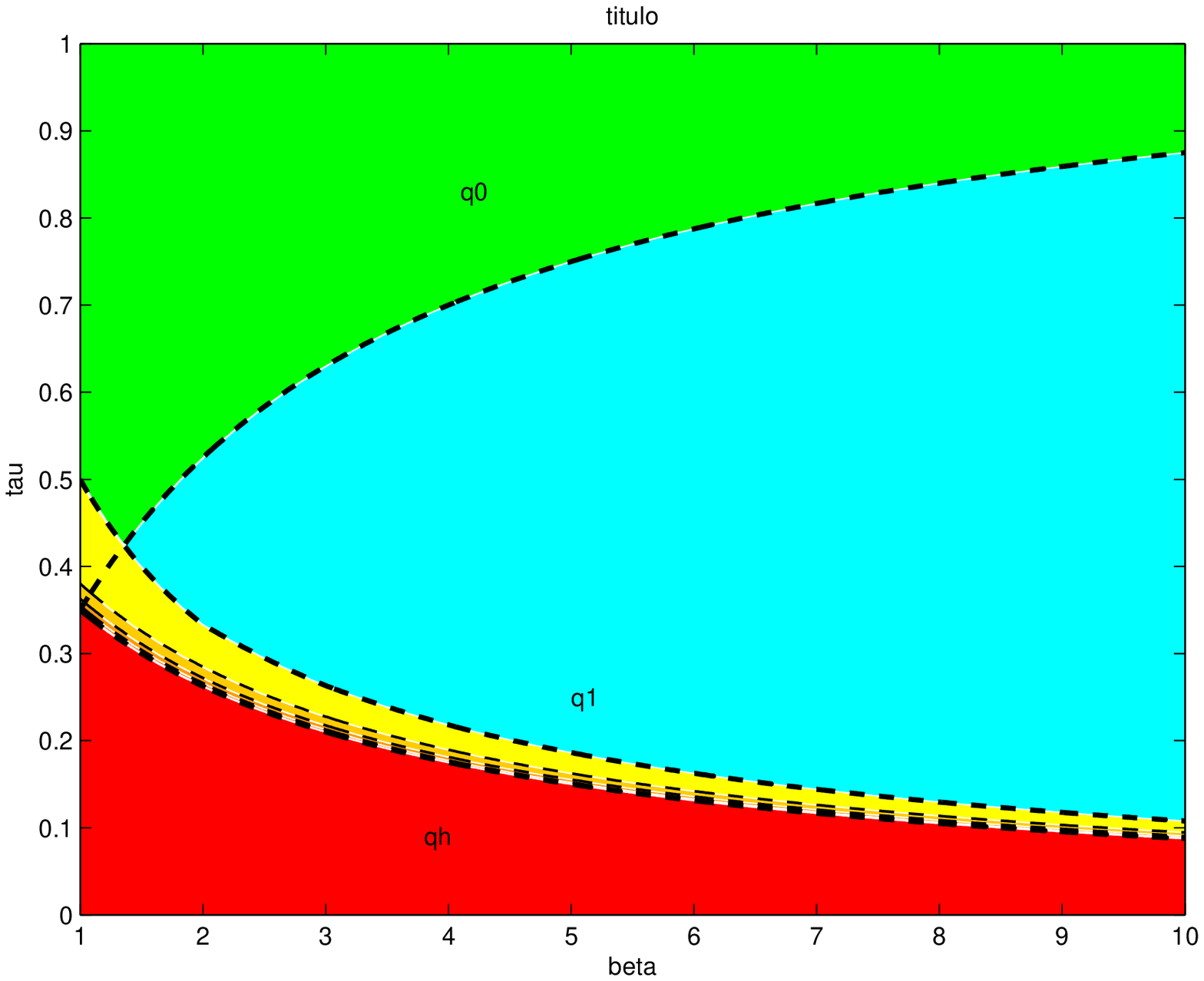}
\caption{Ring graph, WAL setting, $n=21$.}
\label{fig:ring_a21}
\end{figure}

\begin{proof}
Before proving the single items, let us show by induction that $p_i(t) \geq p_{i+1}(t)$, for $i=1,\dots,h$ and for all $t\geq0$.
Being 
$$
p(0)= \left[ \frac{\beta}{\beta+2} ~~\frac{1}{\beta+2}~~0~~\dots ~~ 0 ~~ \frac{1}{\beta+2} \right]'
$$
and $\beta \geq 1$, the statement is true at $t=0$. Let $p_i(t) \geq p_{i+1}(t)$, $i=1,\dots,h$, at time $t$. Then, being $\theta_i(t) \leq \theta_{i+1}(t)$ according to item i) in Lemma \ref{lem:thresholdsring2}, one has that $a(t)=\alpha_j$ in \eqref{eq:alphai} for some $j$, which in turn gives 
\begin{equation}
\begin{array}{rcccccccl}
p(t+1)=G\alpha_j=\big[ & \hspace*{-3mm}
\underbrace{ ~1 ~ \dots ~ 1 ~} & \frac{\beta+1}{\beta+2} & \frac{1}{\beta+2} & \underbrace{0 ~ \dots ~  0}  &  \frac{1}{\beta+2} & \frac{\beta+1}{\beta+2} & \underbrace{ ~1 ~ \dots  ~ 1 ~} &\hspace*{-3mm} \big]'
\\
& \hspace*{-3mm} j &&& n-2j-3 &&& j-1 &
\end{array}
\label{eq:peqGalpha}
\end{equation}
and hence $p_i(t+1) \geq p_{i+1}(t+1)$, $i=1,\dots,h$. 
\\
i) According to the previous discussion, $a(t)$ is either equal to $\1$ or $0$, or it takes values $\alpha_j$ in \eqref{eq:alphai}, which then lead to $p(t+1)$ as in \eqref{eq:peqGalpha}.
Being $\lim_{\rightarrow + \infty} \theta(t)= \frac{n-1}{n} \tau$, if $\frac{n-1}{n} \tau < \frac{1}{\beta+2}$, one has that eventually $a(t)$ will switch from $\alpha_{j-1}$ to $\alpha_{j}$, for all $j=1,\dots,h$, thus leading to $\ai=\alpha_{h}=\1$.
\\
ii) Let $j \in \{1,2,\dots,\hbar\}$. From \eqref{eq:thring} and \eqref{eq:qibeta}, one gets
$$
\inf_t \theta_{j+1}(t) = \tau \, q_j(\beta).
$$
Through some tedious manipulations, it is possible to show that the infimum in \eqref{eq:qibeta} is equal to $\frac{n-1}{n}$ for $j=\hbar+1,\dots,h$, and it is approached asymptotically as $t$ grows to infinity. Conversely, for $j=1,\dots,\hbar$, the infimum satisfies $q_j(\beta) < \frac{n-1}{n}$ and is attained at some finite $t_j^*$. Moreover, being $\theta_{j+1}(t)=\tau$ for all $t \leq j-1$, one obviously has that $t_j^* \geq j$.
By following the same reasoning as in item i), the switch from $\alpha_j$ to $\alpha_{j+1}$ will never occur if $ \tau \, q_j(\beta) > \frac{1}{\beta+2}$. Now, let \eqref{eq:ringcondalfai} hold.
From the rightmost inequality, by applying item ii) in Lemma \ref{lem:thresholdsring2}, one gets
$$
\frac{1}{\beta+2} \geq \theta_{j+1}(t_j^*) \geq \theta_{j}(t_j^*-1) \geq \dots  \geq \theta_{2}(t_j^*-j+1).
$$
Being $t_j^* \geq j$, one has $t_j^*-j+1\geq 1$. Hence, all the thresholds $\theta_i$, $i=2,\dots,j$ will take a value below $\frac{1}{\beta+2}$ in successive times, thus guaranteeing that all the switchings from $\alpha_{i-1}$ to $\alpha_i$ will eventually occur. This proves that condition \eqref{eq:ringcondalfai} leads to $\ai=\alpha_j$.
\\
iii) If $ \tau > \frac{1}{(\beta+2)q_1(\beta)}$, one has that the switching from $a(0)=\alpha_0$ to $\alpha_1$ never occurs. In such a case, one gets
$$
p(t)=Ga(0)= 
\left[  \frac{\beta}{\beta+2} ~~ \frac{1}{\beta+2} ~~  0 ~~ \dots ~~  0  ~~\frac{1}{\beta+2}  \right]'
$$
and $a(t)=\alpha_0$ indefinitely, unless $\lim_{t \rightarrow +\infty} \theta_1(t) > \frac{\beta}{\beta+2}$, for some $t$. The latter condition corresponds to $\frac{n-1}{n} \tau > \frac{\beta}{\beta+2}$, i.e., $\tau >  \frac{1}{(\beta+2)q_{0}(\beta)}$. 
\end{proof}

Figures \ref{fig:ring_a5}-\ref{fig:ring_a21} show the asymptotic behaviors achieved in scenario WAL for different values of $\beta$ and $\tau$, in the cases of 5 and 21 agents, respectively.
The different regions correspond to: $\ai=\1$ (red); $\ai=0$ (green); $a(t)=a(0)$, $\forall t$ (light blue); $\ai=\alpha_j$ (different shades of yellow). In particular, for the case $n=5$ only $\ai=\alpha_1$ is present, while for $n=21$ one can observe five regions corresponding to $\ai=\alpha_j$, $j=1,\dots,5$, the lightest (and largest) one corresponding to $\alpha_1$. 
The dashed curves represent the boundaries defined by Theorem \ref{th:ring_a} as functions of $\tau$ and $\beta$.

In Figure \ref{fig:ring_a21}, it can be noticed that the regions in which $\ai=\alpha_j$ tend to shrink as $j$ grows, while they approach the region in which $\ai=\1$. Moreover, these regions also shrink as $\beta$ grows. On the other hand, their number increases with the number of agents, being proportional to $\frac{n}{4}$. Apart from these regions, the other asymptotic patterns are the same as for the star graph, but the region in which $\ai=a(0)$ is much larger in the ring network, while those in which $0$ or $\1$ are achieved are significantly reduced (compare, e.g., Figures \ref{fig:star_a20} and  \ref{fig:ring_a21}). 

\subsection{Uniform activity level}

Let us consider the UAL setting. The asymptotic behavior for $\beta \geq 1$ is described by the next result.

\begin{theorem}
\label{th:ring_b}
Let $\mathcal{E}$ be a ring graph, $F$ and $G$ given by \eqref{eq:Fring} and \eqref{eq:Gring}, respectively, $n$ odd and $\beta \geq 1$.
Let the functions
$q_j(\beta)$, $j=1,\dots,h$, be defined as in \eqref{eq:qibeta}.
Then, system  \eqref{eq:modelm1},\eqref{eq:modelm3}, with initial condition \eqref{eq:theta0}, exhibits the following asymptotic behaviors:
\begin{itemize}
\item[i)] If $\tau \leq \frac{n}{3(n-1)}$, then $\ai=\1$.
\item[ii)]
If 
$$
\frac{1}{3q_{j+1}(\beta)} \leq \tau \leq \frac{1}{3q_{j}(\beta)},
$$
for some $j \in \{1,2,\dots,\hbar\}$, then 
$\ai=\alpha_j$, given by \eqref{eq:alphai}.
\item[iii)] 
If $ \tau >  \frac{n-1}{3n}$,  then $\ai=0$.
\end{itemize}
\end{theorem}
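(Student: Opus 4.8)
The plan is to reprise, almost line for line, the proof of Theorem~\ref{th:ring_a}. The threshold trajectory $\theta(t)$ depends only on $F$, which is unchanged, so Lemma~\ref{lem:thresholdsring}, Lemma~\ref{lem:thresholdsring2}, the symmetry $\theta_{i+1}(t)=\theta_{n-i+1}(t)$, the limit $\theta(t)\to\tfrac{n-1}{n}\tau\,\1$, and the functions $q_j(\beta)$ of~\eqref{eq:qibeta} all remain available. The sole modification is that $p(t+1)=Ga(t)$ is now formed with the circulant $G$ of~\eqref{eq:Gring}: this replaces the coefficients $\tfrac{\beta}{\beta+2},\tfrac{\beta+1}{\beta+2},\tfrac{1}{\beta+2}$ occurring in $Ga(0)$ and in~\eqref{eq:peqGalpha} by $\tfrac13,\tfrac23,\tfrac13$, so the ``critical'' activity level a dormant agent must reach to switch on is $\tfrac13$ everywhere (exactly the $\beta=1$ specialisation of the WAL coefficients).

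First I would re-establish the structural facts that $p_i(t)\ge p_{i+1}(t)$ for $i=1,\dots,h$ and all $t\ge0$, and that consequently the active set is always an interval around the radical, so $a(t)\in\{0,\1,\alpha_1,\dots,\alpha_h\}$ (with $\alpha_0=a(0)$, $\alpha_h=\1$). The induction is the one from Theorem~\ref{th:ring_a}: it holds at $t=0$ since $Ga(0)=[\tfrac13,\tfrac13,0,\dots,0,\tfrac13]'$; and if $p_i(t)\ge p_{i+1}(t)$ for all $i$, then Lemma~\ref{lem:thresholdsring2}(i) gives $a(t)=\alpha_j$ for some $j$, whence $G\alpha_j$ has the same staircase profile as~\eqref{eq:peqGalpha} with $\tfrac{\beta+1}{\beta+2},\tfrac{1}{\beta+2}$ replaced by $\tfrac23,\tfrac13$, and monotonicity of $p(t+1)$ follows. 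Reading this profile off, the only moves from $\alpha_j$ are: staying at $\alpha_j$; switching to $\alpha_{j+1}$ at the first time $t$ with $\theta_{j+2}(t)\le\tfrac13$ (the value of $p_{j+2}$ under $\alpha_j$); or, when $j=0$, collapsing to $0$ at the first time $\theta_1(t)>\tfrac13$ (the value of $p_1$ under $\alpha_0$). Since $\theta_1(t)\le\theta_2(t)$ always, a collapse can never pre-empt a pending switch $\alpha_0\to\alpha_1$, and since the radical's activity level equals $1$ as soon as a neighbour is active, the state $0$ is reachable only directly from $\alpha_0$.

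With this dichotomy, the three items follow as for the WAL ring. For (i): if $\tfrac{n-1}{n}\tau<\tfrac13$, then every $\theta_{j+2}(t)$ has limit $\tfrac{n-1}{n}\tau<\tfrac13$, so each switch $\alpha_j\to\alpha_{j+1}$ eventually fires and $\ai=\alpha_h=\1$. For (ii): by~\eqref{eq:thring} and~\eqref{eq:qibeta}, $\inf_t\theta_{j+1}(t)=\tau q_j(\beta)$; the switch $\alpha_j\to\alpha_{j+1}$ is blocked precisely when $\tau q_{j+1}(\beta)>\tfrac13$, i.e.\ $\tau<\tfrac1{3q_{j+1}(\beta)}$, while the remaining hypothesis $\tau\le\tfrac1{3q_j(\beta)}$, i.e.\ $\tau q_j(\beta)\le\tfrac13$, combined with the chain $\tfrac13\ge\theta_{j+1}(t_j^*)\ge\theta_j(t_j^*-1)\ge\dots\ge\theta_2(t_j^*-j+1)$ supplied by Lemma~\ref{lem:thresholdsring2}(ii) (and $t_j^*\ge j$), forces every earlier switch $\alpha_{i-1}\to\alpha_i$, $i=1,\dots,j$; hence $\ai=\alpha_j$. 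For (iii): in the stated regime the switch $\alpha_0\to\alpha_1$ is blocked, so $a(t)\equiv\alpha_0$ unless $\theta_1(t)$ exceeds $\tfrac13$ at some time; but the blocking condition $\tau>\tfrac1{3q_1(\beta)}$ forces $\theta_1(t)\to\tfrac{n-1}{n}\tau>\tfrac13$ (using $q_1(\beta)\le\tfrac{n-1}{n}$), so the radical eventually deactivates and, since $G\alpha_0$ then leaves every agent's level below its threshold, $a(t)$ drops to $0$ and stays, i.e.\ $\ai=0$.

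I expect the computational burden to be modest and largely inherited. The only genuinely new step is writing out $Ga(0)$ and $G\alpha_j$ for the $G$ of~\eqref{eq:Gring}, which is immediate. The heavier circulant-matrix analysis---that the infimum in~\eqref{eq:qibeta} is attained at a finite $t_j^*\ge j$ with $q_j(\beta)<\tfrac{n-1}{n}$ for $j=1,\dots,\hbar$ and equals $\tfrac{n-1}{n}$ for $j>\hbar$---involves only $F$, so I would invoke it verbatim from the proof of Theorem~\ref{th:ring_a}. The step I expect to demand the most care is the boundary bookkeeping: pinning down the exact $\tau$-thresholds that separate the three regimes (this is the UAL analogue of the WAL condition $\tau>\max\{\tfrac1{(\beta+2)q_0(\beta)},\tfrac1{(\beta+2)q_1(\beta)}\}$), and checking---as in the complete- and star-graph UAL cases---that the ``$a(t)=a(0)$ forever'' region degenerates here, since already blocking the switch $\alpha_0\to\alpha_1$ forces $\theta_1$ above $\tfrac13$.
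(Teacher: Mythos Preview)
Your proposal is correct and follows essentially the same approach as the paper. The paper's proof is even terser than yours: it simply invokes the argument of Theorem~\ref{th:ring_a} to establish that $a(t)\in\{0,\1,\alpha_j\}$, writes out $G\alpha_j$ with the $\tfrac13,\tfrac23$ coefficients you describe, dispatches items~(i) and~(ii) by referring back to Theorem~\ref{th:ring_a}, and for item~(iii) makes exactly your observation that the blocking condition $\tau>\tfrac{1}{3q_1(\beta)}$ together with $q_1(\beta)<\tfrac{n-1}{n}$ forces $\lim_t\theta_1(t)=\tfrac{n-1}{n}\tau>\tfrac13$, so the radical eventually deactivates.
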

\begin{figure}[tb]
\centering
\psfrag{beta}[t]{$\beta$}
\psfrag{tau}{$\tau$}
\psfrag{q1}{$\frac{1}{(\beta+2)q_1(\beta)}$}
\psfrag{q0}{$\frac{5}{12}$}
\psfrag{titulo}{}
\includegraphics[width=0.65\columnwidth]{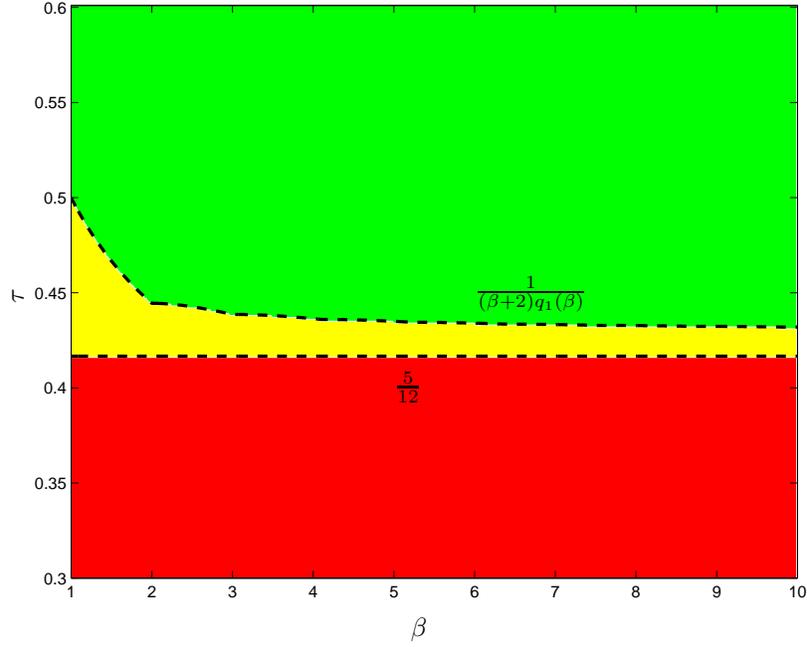}
\caption{Ring graph, UAL setting, $n=5$.}
\label{fig:ring_b5}
\end{figure}
\begin{figure}[tb]
\centering
\psfrag{beta}[t]{$\beta$}
\psfrag{tau}{$\tau$}
\psfrag{titulo}{}
\psfrag{q2}{$\frac{1}{(\beta+2)q_2(\beta)}$}
\psfrag{q1}{$\frac{1}{(\beta+2)q_1(\beta)}$}
\psfrag{qh}{$\frac{21}{60}$}
\includegraphics[width=0.65\columnwidth]{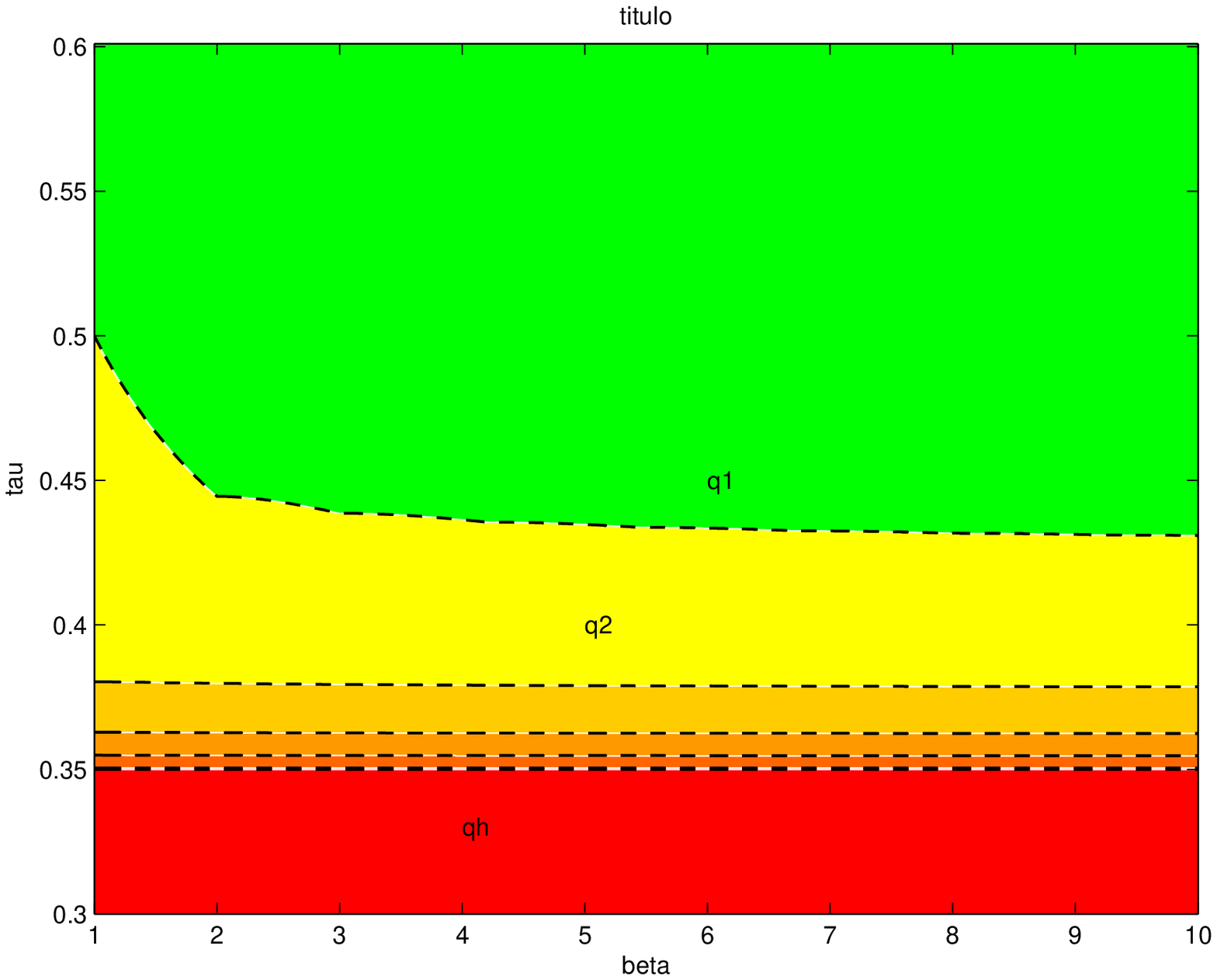}
\caption{Ring graph, UAL setting, $n=21$.}
\label{fig:ring_b21}
\end{figure}

\begin{proof}
By adopting the same argument as in the proof of Theorem \ref{th:ring_a}, it can be shown that $a(t)$ can take only the values $\1$, $0$, or $\alpha_j$ in \eqref{eq:alphai}.
In the latter case, one gets
$$
\begin{array}{rcccccccl}
p(t+1)=G\alpha_j=\big[ & \hspace*{-3mm}
\underbrace{ ~1 ~ \dots ~ 1 ~} & \frac{2}{3} & \frac{1}{3} & \underbrace{0 ~ \dots ~  0}  &  \frac{1}{3} & \frac{2}{3} & \underbrace{ ~1 ~ \dots  ~ 1 ~} &\hspace*{-3mm} \big]'.
\\
& \hspace*{-3mm} j &&& n-2j-3 &&& j-1 &
\end{array}
$$
Then, item i) and item ii) for $j=1,\dots,\hbar$ can be proven by following the same reasoning as in the proof of Theorem \ref{th:ring_a}.
\\
Concerning item iii), let us first observe that if $\tau > \frac{1}{3q_1(\beta)}$, the switching from $a(0)=\alpha_0$ to $\alpha_1$ never occurs. Being 
$$
p(1)=G\alpha_0=\left[\frac{1}{3} ~~\frac{1}{3}~~0~~ \dots ~~0 ~~ \frac{1}{3}\right]'
$$
and $q_1(\beta)<\frac{n}{n-1}$, one has $\tau \frac{n-1}{n} > \frac{1}{3}$ and hence $\lim_{t \rightarrow +\infty} \theta_1(t) > \frac{1}{3}$. This means that eventually one will have $\theta_1(t) > p_1(t)$ and therefore $\ai=0$. 
\end{proof}

Figures \ref{fig:ring_b5}-\ref{fig:ring_b21} show the asymptotic behaviors achieved in scenario UAL for different values of  $\beta$ and $\tau$, for $n=5$ and $n=21$, respectively. 
Colors have the same meaning as in Figures \ref{fig:ring_a5}-\ref{fig:ring_a21}. The range of $\tau$ is reduced to highlight the presence of the regions in which $\ai=\alpha_j$.
The dashed line represents the boundaries defined by Theorem \ref{th:ring_b} as function of $\tau$ and $\beta$.

According to Theorem \ref{th:ring_b}, in the UAL scenario it never occurs that $\ai=a(0)$, i.e., the initial condition cannot be maintained indefinitely. Once again, this is a major difference with respect to what happens in the WAL setting (compare Figures \ref{fig:ring_b5}-\ref{fig:ring_b21} with Figures \ref{fig:ring_a5}-\ref{fig:ring_a21}). Moreover, a larger value of the self-confidence parameter $\beta$ enlarges the gap between the extreme asymptotic behaviors $\ai=\1$ and $\ai=0$ in the WAL scenario, while this is not the case in the UAL case. 

By comparing the results obtained in the two considered scenarios, it can be concluded that the choice of the matrix $G$ plays a key role in defining the pattern of asymptotic behaviors in all the interconnection topologies considered in the paper.

\begin{figure}[ht]
  \centering
  \includegraphics[width=0.5\columnwidth]{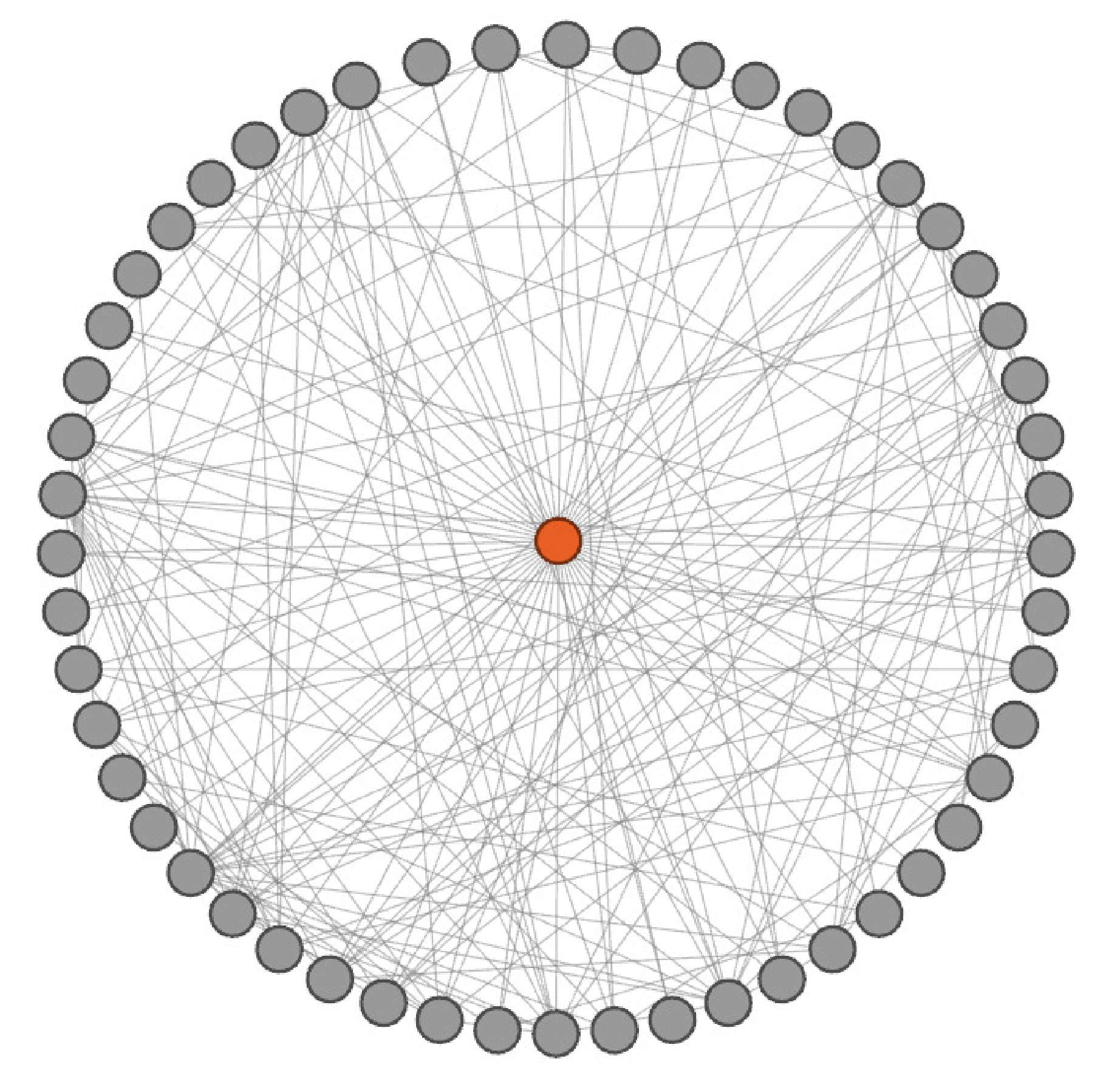}
  \caption{\aggiungi{Ego network used for simulation, including $n=53$ nodes and 198 edges (the red node is the ego node).}}
  \label{fig:ego}
\end{figure}

\begin{figure}[h!]
  \centering
  \psfrag{x}[t]{$\beta$}
  \psfrag{y}{$\tau$}
  \psfrag{t}{}
  \psfrag{d1}{\color{black}{\small $\delta_1(\beta)$}}
  \psfrag{d2}{\color{black}{\small $\delta_2(\beta)$}}
  \psfrag{a1}{\color{black}{\small $a_\infty = 0$}}
  \psfrag{a2}{\color{black}{\small $a_\infty = a(0)$}}
  \psfrag{a3}{\color{black}{\small $a_\infty = \1$}}
  \includegraphics[width=0.75\columnwidth]{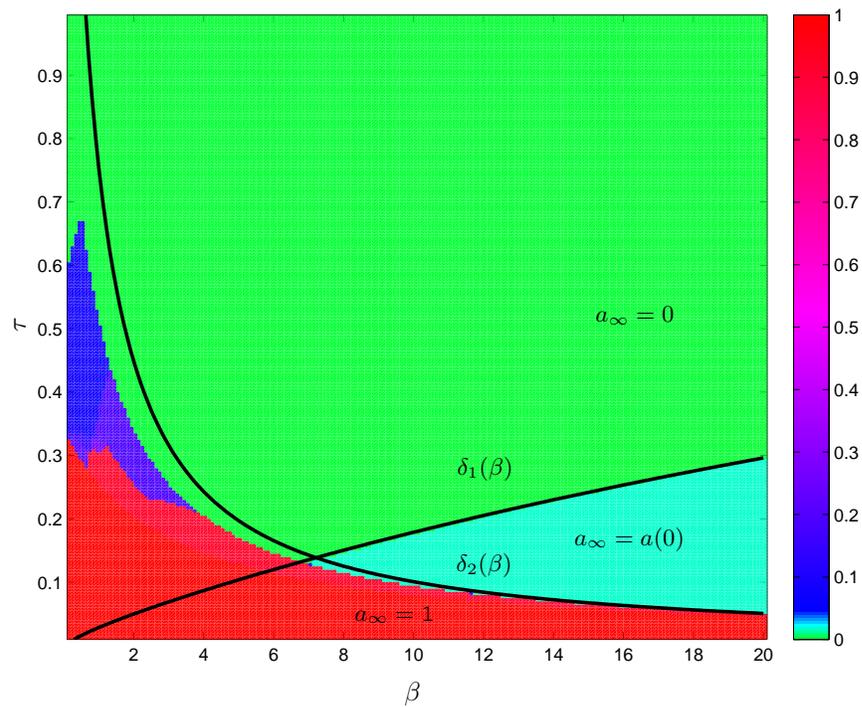}
  \caption{\aggiungi{Final fraction of radical agents when the only initial radical agent is placed at the ego node.}}
  \label{fig:egoonly}
\end{figure}

\begin{figure}[ht]
  \centering
  \begin{tabular}{cc}
    \psfrag{x}[t]{$\beta$}
    \psfrag{y}{$\tau$}
    \psfrag{t}{}
    \includegraphics[width=0.49\columnwidth]{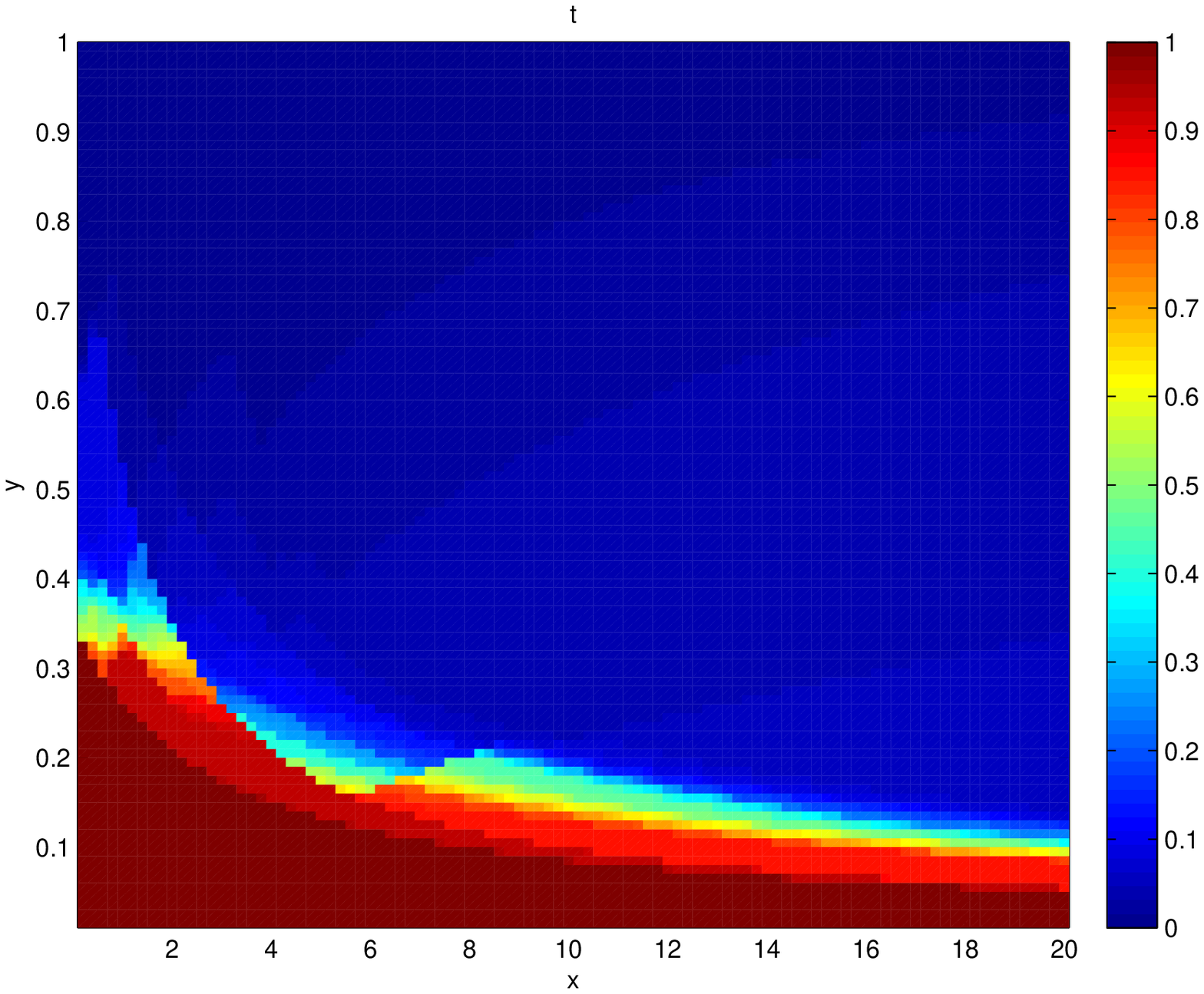} &
    \psfrag{x}[t]{$\beta$}
    \psfrag{y}{$\tau$}
    \psfrag{t}{}
    \includegraphics[width=0.49\columnwidth]{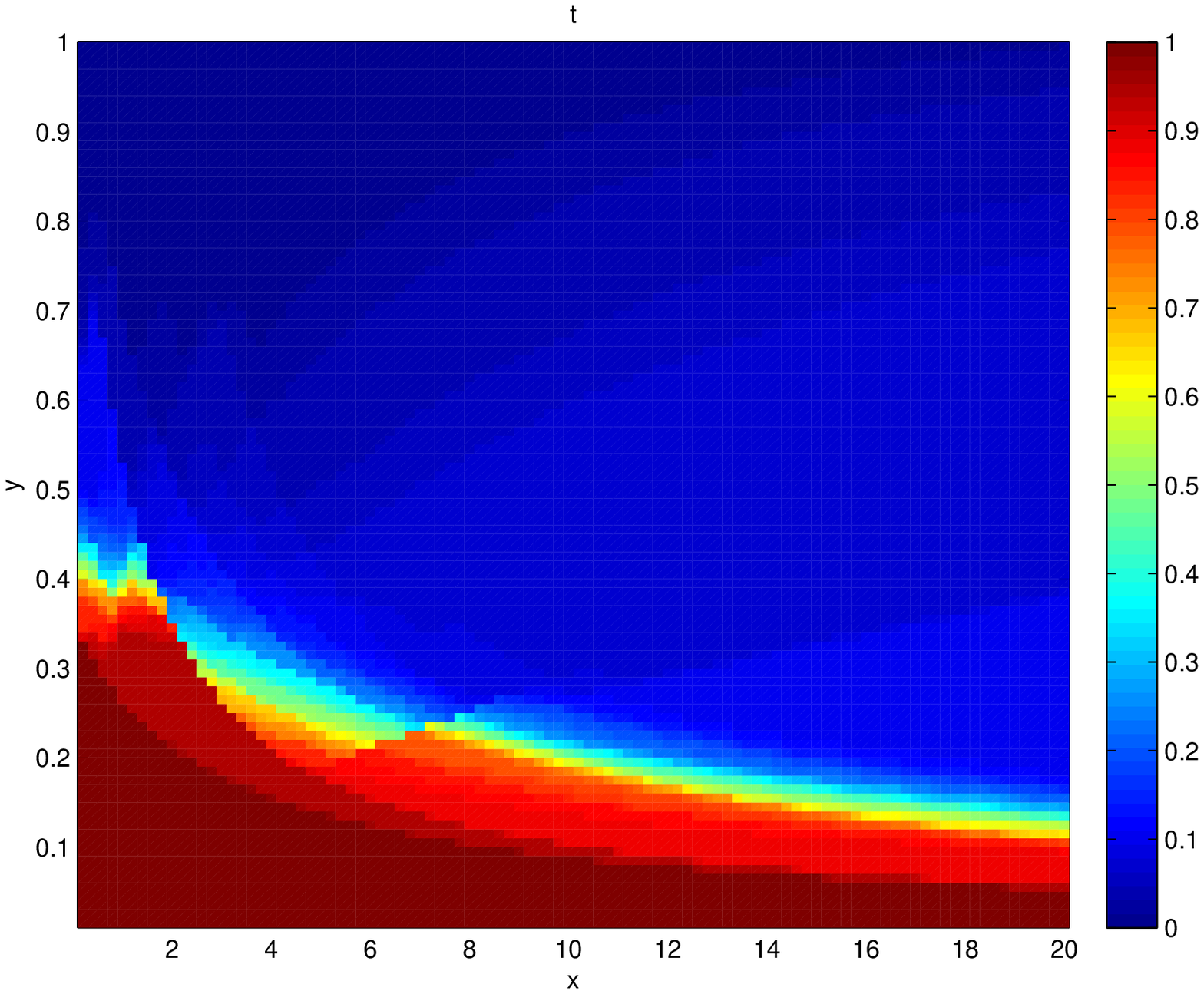} \\
    (a) $\xi =0.05$ & (b) $\xi =0.10$ \\
    \psfrag{x}[t]{$\beta$}
    \psfrag{y}{$\tau$}
    \psfrag{t}{}
    \includegraphics[width=0.49\columnwidth]{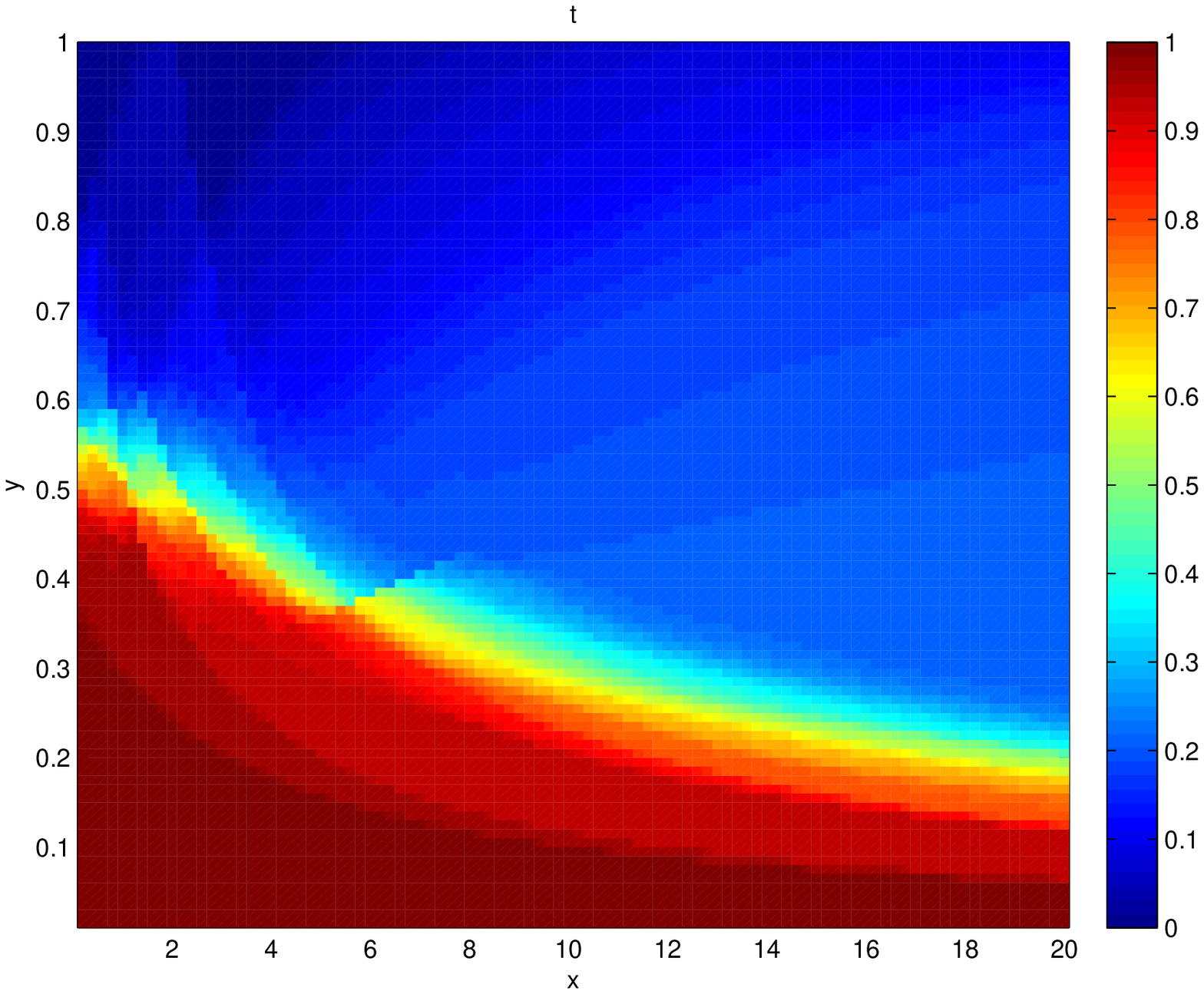} &
    \psfrag{x}[t]{$\beta$}
    \psfrag{y}{$\tau$}
    \psfrag{t}{}
    \includegraphics[width=0.49\columnwidth]{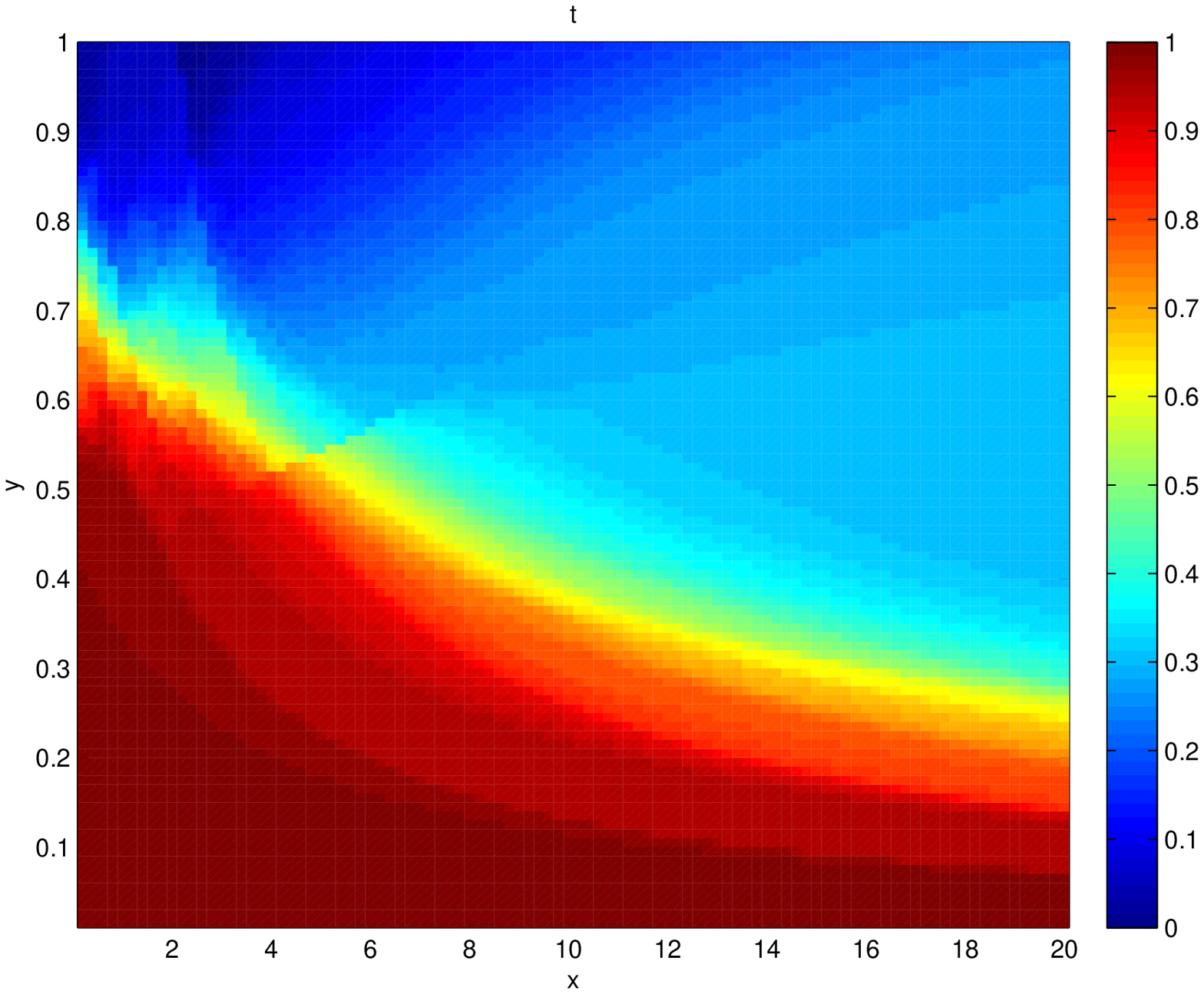} \\
    (c) $\xi =0.20$ & (d) $\xi =0.30$
  \end{tabular}
  \caption{\aggiungi{Final fraction of radical agents, averaged over 100 different initial conditions, for different initial fractions $\xi$ of radical agents.}}
  \label{fig:egores}
\end{figure}

\section{Experimental results with ego networks}
\label{sec:exp}

\aggiungi{In this section, we evaluate to what extent the analytic results presented so far, and obtained in the case of simple graph topologies, apply to more realistic social networks. To this end, we have carried out extensive simulations on a real ego network and observed how the asymptotic behaviors depend on the model parameters.}

\aggiungi{
The test network used in this study is extracted from the data set described in \cite{LM12}, consisting of ten ego networks taken from Facebook. The network selected has $n = 53$ nodes, including the ego node, and 198 edges, resulting in an average degree of 7.47 edges per node (see Fig.~\ref{fig:ego}). The highest degree node is clearly the ego node, which, by definition, is connected to all the other nodes. The node with the second highest degree has 19 incident edges, whereas 10 nodes are connected to the ego node only. Dynamic system~\eqref{eq:modelm1},\eqref{eq:modelm3} has been simulated with such an ego network constituting the underlying communication infrastructure. The WAL setting has been considered, i.e. $F=G$ is assumed in throughout this section, with the entries of $F$ given by~\eqref{eq:weights}. Different combinations of the initial threshold $\tau$ and the self-confidence weight $\beta$ have been simulated. Parameter $\tau$ ranges from 0.01 to 0.99, whereas $\beta$ varies from 0.1 to 20.
}

\aggiungi{
In the first scenario considered, the ego node is the only agent initially active. The fraction number of final active agents, as a function of $\tau$ and $\beta$, is shown in Fig.~\ref{fig:egoonly}. Among the graphs studied in Sec.~\ref{sec:complete}-\ref{sec:ring}, the network topology more similar to the ego network under consideration is clearly the star network. Although a non negligible number of additional edges between the non-central nodes are now present, the asymptotic behaviors of the system are very similar to those analytically derived in Sec.~\ref{th:star_a} (e.g., compare Fig.~\ref{fig:star_a20} to Fig.~\ref{fig:egoonly}). For large values of the self-confidence weight, i.e. for $\beta > \sqrt{n-1} \simeq 7.21$, the asymptotic behavior of the test network is in very good agreement with that predicted by Theorem~\ref{th:star_a} (point (i)), with $a_\infty$ switching from $\1$ to $a(0)$ to $0$ as $\tau$ crosses the functions $\delta_1(\beta)$ and $\delta_2(\beta)$ (see Fig.~\ref{fig:egoonly}). 
For smaller values of $\beta$ a richer variety of asymptotic behaviors are now observed. In contrast to what happens for a truly star network, in this case 13 different values of $a_\infty$ are found. Notice, however, that  three stationary asymptotic behaviors predicted by Theorem~\ref{th:star_a}, namely $a_\infty \in \{0, a(0), \1 \}$, cover more than 95\% of the simulation runs.
}

\aggiungi{
A second set of simulations has been carried out to analyze how the presence of more than one initial radical agent modifies the final distribution of active agents. To this end, the simulations have been initialized to a number of radicals equal to $\text{round}(\xi n)$, where $\xi$ denotes the fraction of initial radical agents. The identity of the initial radicals (i.e., the node where initial radicals are placed within the network) may have an impact on their ability to persuade a larger number of neighbors. Intuitively, more central or more connected nodes (in a sense, more ``popular'' agents) are able to mobilize a higher number of individuals. To mitigate such an effect, simulation results are averaged over 100 different randomly generated identities of the initially active agents. For consistency with the theoretical analysis, the ego node is always initially active. The final fraction of active agents is reported in Fig.~\ref{fig:egores}, for four different values of $\xi$ ranging from 0.05 to 0.30. It can be noticed that the smaller the number of initial radicals, the sharper the transition between regions corresponding to different asymptotic behaviors. Although somehow blurred by the averaging process, separating curves similar to those shown in Fig.~\ref{fig:fractal} can be observed. For values of $\beta$ close to one, the transition between the regions with all active and all inactive final agents is very irregular and spiked. This suggests that the fractal boundary found in Theorem~\ref{th:star_a} (point (ii)), and shown in Fig.~\ref{fig:fractal}, is revealing of a phenomenon which can be experienced also in actual ego networks.  
}

\section{Conclusions}
\label{sec:conclusions}

%
%
%
%

This paper has presented a class of dynamic threshold models which can be used to analyze collective actions in social networks. The main feature of this model class is that the threshold is time-varying, as it evolves according to a dynamic opinion model. This leads to the generation of complex transient dynamics, in which each agent can change her mind multiple times about undertaking the action or not, and in some cases can even lead to steady oscillating behaviors. The asymptotic activity pattern of the network clearly depends not only on the graph topology, but also on the level of self-confidence of the agents. Moreover, a crucial role is played by the selected mechanism for the computation of the neighbors' activity level, which determines how the agents decide to become active or not.

The analytic results obtained so far support the thesis proposed in \cite{H14}, and based on empirical evidence, according to which  \emph{``in the presence of a risk-averse majority and a radical minority, adding more links among the majority does not necessarily help mobilization.''} By looking at the regions corresponding to all agents becoming eventually active (e.g., red regions in Figs. \ref{fig:n20GeqF}, \ref{fig:star_a20} and \ref{fig:ring_a21}), it can be clearly seen that achieveing full mobilization in a highly connected network (e.g., a complete graph) can be harder than doing it in a less connected topology (e.g., star and ring graphs).

There are many interesting developments that can be foreseen for the proposed model class. First of all, in this study only simple graph structures have been considered. This has allowed us to derive analytic results providing a complete characterization of the asymptotic behaviors for such networks. 
\aggiungi{Despite the basic structure of the considered networks, the obtained results shed light on the potentiality of the model, and provide useful insights on the behavior of more complex structures, such as ego networks, as confirmed by numerical simulation.
}
Other studies may concern the case in which more radical agents are present in the network, or the presence of groups of ordinary agents having different initial thresholds (e.g., modeling two parties with different initial opinions about the action to be undertaken). The influence of the position of the radical agents within the network should also be investigated. Another extension of the model consists in defining groups of agents with different self-confidence levels, in order to analyze which types of dynamics arises between confident and hesitant agents. Alternative opinion dynamics models can also be considered for the threshold evolution, by adopting either time-varying or even state-dependent weights, like in the Hegselmann-Krause model \cite{BHT09}. Finally, it is worth remarking that the framework considered in this work is deterministic, but stochastic versions can be formulated. For example, the self-confidence parameter might be a random variable, thus accounting for variability in the agents' self-confidence, or the network topology itself can be stochastic, which is common in the social learning literature \cite{ADLO11}.

\bibliographystyle{IEEEtran}
\bibliography{threshold}

\end{document}